\newcommand{\mathcommand}[1]{\textnormal{\ensuremath{{#1}}}\xspace}
\newcommand*{\sfname}[1]{\mathcommand{\textsf{#1}}}
\newcommand*{\scname}[1]{\mathcommand{\textsc{#1}}}
\newcommand*{\calname}[1]{\mathcommand{\mathcal{#1}}}
\newcommand*{\bbname}[1]{\mathcommand{\mathbb{#1}}}
\newcommand*{\ttname}[1]{\mathcommand{\mathtt{#1}}}
\newcommand*{\set}[1]{%
  \relax\if@display
    \mathcommand{\left\{#1\right\}}
  \else
    \mathcommand{\{ #1 \}}
  \fi}
\newcommand*{\setst}[2]{%
  \relax\if@display
    \mathcommand{\left\{ #1 \;\middle|\; #2 \right\}}
  \else
    \mathcommand{\{ #1 \;|\; #2 \}}
  \fi}
\newcommand*{\size}[1]{\mathcommand{\left|#1\right|}}
\newcommand*{\nat}{\bbname{N}}
\theoremstyle{plain}
\newcommand*{\re}{\scname{RE}}
\newcommand*{\qinit}{\mathcommand{q_0}}
\newcommand*{\qloop}{\mathcommand{q_r}} 
\newcommand*{\blank}{\ttname{B}}
\newcommand*{\goleft}{\mathcommand{\mathop{\leftarrow}}}
\newcommand*{\goright}{\mathcommand{\mathop{\rightarrow}}}
\newcommand*{\setR}{\calname{R}}
\newcommand*{\kb}{\calname{K}}
\newcommand*{\Preds}{\sfname{Preds}}
\newcommand*{\Vars}{\sfname{Vars}}
\newcommand*{\Consts}{\sfname{Cons}}
\newcommand*{\Nulls}{\sfname{Nulls}}
\renewcommand*{\terms}{\sfname{terms}}
\newcommand*{\ar}{\sfname{ar}}
\newcommand*{\body}{\sfname{body}}
\newcommand*{\head}{\sfname{head}}
\newcommand*{\trigoutput}{\sfname{output}}
\newcommand*{\trigsupport}{\sfname{support}}
\newcommand*{\der}{\calname{D}}
\newcommand*{\angles}[1]{\mathcommand{\langle #1 \rangle}}
\newcommand*{\arule}{\ensuremath{R}\xspace}
\newcommand*{\aquery}{\ensuremath{Q}\xspace}
\newcommand*{\subs}{\ensuremath{\sigma}\xspace}
\newcommand*{\aruleset}{\ensuremath{\Sigma}\xspace}
\newcommand*{\db}{\ensuremath{D}\xspace}
\newcommand*{\KB}{\calname{K}}
\newcommand*{\chaseseq}{\calname{F}}
\newcommand*{\conf}{\sfname{conf}}
\renewcommand*{\terms}{\sfname{terms}}
\newcommand*{\vx}{\mathcommand{\vec{x}}}
\newcommand*{\vy}{\mathcommand{\vec{y}}}
\newcommand*{\vz}{\mathcommand{\vec{z}}}
\newcommand*{\vt}{\mathcommand{\vec{t}}}
\newcommand*{\brake}{\predname{brSet}}
\newcommand*{\CT}{\sfname{CT}}
\newcommand*{\CTR}[1]{\ensuremath{\CT^\textit{rest}_{#1}}\xspace}
\newcommand*{\CTda}{\ensuremath{\CTR{\db\forall}}\xspace}
\newcommand*{\CTaa}{\ensuremath{\CTR{\forall\forall}}\xspace}
\newcommand*{\TerminatingRuleSets}[2]{\ensuremath{\sfname{CTR}_{#1}^\textit{#2}}\xspace}
\newcommand*{\TerminatingRuleSetsOblivious}[1]{\TerminatingRuleSets{#1}{obl}}
\newcommand*{\TRO}[1]{\TerminatingRuleSetsOblivious{#1}}
\newcommand*{\TerminatingRuleSetsRestricted}[1]{\TerminatingRuleSets{#1}{rest}}
\newcommand*{\TRR}[1]{\TerminatingRuleSetsRestricted{#1}}
\newcommand*{\TerminatingRuleSetsRestrictedBreadthFirst}[1]{\TerminatingRuleSets{#1}{b1r}}
\newcommand*{\TRRBF}[1]{\TerminatingRuleSetsRestrictedBreadthFirst{#1}}
\newcommand*{\TerminatingRuleSetsCore}[1]{\TerminatingRuleSets{#1}{core}}
\newcommand*{\TRC}[1]{\TerminatingRuleSetsCore{#1}}
\newcommand*{\TerminatingKBs}[2]{\ensuremath{\sfname{CTK}_{#1}^\textit{#2}}\xspace}
\newcommand*{\TerminatingKBsOblivious}[1]{\TerminatingKBs{#1}{obl}}
\newcommand*{\TKO}[1]{\TerminatingKBsOblivious{#1}}
\newcommand*{\TerminatingKBsRestricted}[1]{\TerminatingKBs{#1}{rest}}
\newcommand*{\TKR}[1]{\TerminatingKBsRestricted{#1}}
\newcommand*{\TerminatingKBsRestrictedBreadthFirst}[1]{\TerminatingKBs{#1}{b1r}}
\newcommand*{\TKRBF}[1]{\TerminatingKBsRestrictedBreadthFirst{#1}}
\newcommand*{\TerminatingKBsCore}[1]{\TerminatingKBs{#1}{core}}
\newcommand*{\TKC}[1]{\TerminatingKBsCore{#1}}
\newcommand*{\predname}[1]{\ensuremath{\mathtt{#1}}\xspace}
\newcommand*{\pn}[1]{\predname{#1}}
\newcommand*{\preda}{\predname{a}}
\newcommand*{\predb}{\predname{b}}
\newcommand*{\predc}{\predname{c}}
\newcommand*{\predq}{\predname{q}}
\newcommand*{\predqp}{\predname{q'}}
\newcommand*{\predbrake}{\predname{Brake}}
\newcommand*{\predqstart}{\ensuremath{\predname{q}_0}\xspace}
\newcommand*{\predqloop}{\ensuremath{\predname{q}_r}\xspace} 
\newcommand*{\predright}{\predname{R}}
\newcommand*{\predfuture}{\predname{F}}
\newcommand*{\predend}{\predname{End}}
\newcommand*{\predcopyright}{\predname{C_R}}
\newcommand*{\predcopyleft}{\predname{C_L}}
\newcommand*{\predzero}{\predname{0}}
\newcommand*{\predone}{\predname{1}}
\newcommand*{\predreal}{\predname{Real}}
\newcommand*{\nextBr}{\predname{nextBr}}
\newcommand*{\semterms}{\sfname{semterms}}
\newcommand*{\setbowtie}{\sfname{bowtie}}
\newcommand*{\predfun}{\sfname{pred}}
\newcommand*{\seqstateatoms}{\mathcommand{\mathcal{A}}}
\newcommand*{\configs}{\sfname{configs}}
\begin{document}

\title{Restricted Chase Termination: You Want More than Fairness}

\author{David Carral}
\affiliation{
  \institution{LIRMM, Inria, University of Montpellier, CNRS}
  \city{Montpellier}
  \country{France}
}
\email{david.carral@inria.fr}
\orcid{https://orcid.org/0000-0001-7287-4709}

\author{Lukas Gerlach}
\affiliation{
  \institution{Knowledge-Based Systems Group, TU Dresden}
  \city{Dresden}
  \country{Germany}
}
\email{lukas.gerlach@tu-dresden.de}
\orcid{https://orcid.org/0000-0003-4566-0224}

\author{Lucas Larroque}
\affiliation{
  \institution{Inria, DI ENS, ENS, CNRS, PSL University}
  \city{Paris}
  \country{France}
}
\email{lucas.larroque@inria.fr}
\orcid{https://orcid.org/0009-0007-2351-2681}

\author{Michaël Thomazo}
\affiliation{
  \institution{Inria, DI ENS, ENS, CNRS, PSL University}
  \city{Paris}
  \country{France}
}
\email{michael.thomazo@inria.fr}
\orcid{https://orcid.org/0000-0002-1437-6389}


\begin{abstract}
  The chase is a fundamental algorithm with ubiquitous uses in database theory. Given a database and a set of existential rules (aka tuple-generating dependencies), it iteratively extends the database to ensure that the rules are satisfied in a most general way. This process may not terminate, and a major problem is to decide whether it does. This problem has been studied for a large number of chase variants, which differ by the conditions under which a rule is applied to extend the database. Surprisingly, the complexity of the universal termination of the restricted (aka standard) chase is not fully understood. We close this gap by placing universal restricted chase termination in the analytical hierarchy. This higher hardness is due to the fairness condition, and we propose an alternative condition to reduce the hardness of universal termination.
\end{abstract}

\begin{CCSXML}
<ccs2012>
   <concept>
       <concept_id>10003752.10003790.10003795</concept_id>
       <concept_desc>Theory of computation~Constraint and logic programming</concept_desc>
       <concept_significance>500</concept_significance>
       </concept>
   <concept>
       <concept_id>10003752.10010070.10010111.10011734</concept_id>
       <concept_desc>Theory of computation~Logic and databases</concept_desc>
       <concept_significance>500</concept_significance>
       </concept>
 </ccs2012>
\end{CCSXML}

\ccsdesc[500]{Theory of computation~Constraint and logic programming}
\ccsdesc[500]{Theory of computation~Logic and databases}


\keywords{Existential Rules, Tuple Generating Dependencies, Restricted Chase}

\received{December 2024}
\received[revised]{February 2025}
\received[accepted]{March 2025} 


\maketitle

\section{Introduction}
\label{section:introduction}


The chase is a fundamental algorithm in database theory that is applied to address a wide range of problems.
For instance, it is used to check containment of queries under constraints, in data exchange settings, or to solve ontology-based query answering; see the introductions of \cite{oblivious-rule-set-termination-undecidable-tr,anatomy-chase} for more information.
Technically speaking, the chase is a bottom-up materialisation procedure that attempts to compute a universal model (a model that can be embedded into all other models via homomorphism) for a knowledge base (KB), consisting of an (existential) rule set\footnote{Other researchers refer to these first-order formulas as ``tuple generating dependencies'' or simply as ``TGDs''.} and a database.
\begin{example}
\label{example:example-bicycle}
Consider the KB $\kb_1 = \langle \aruleset, \db \rangle$ where $\db$ is the database $\{\pn{Bicycle}(b)\}$ and \aruleset contains:
\begin{align*}
\forall x . \pn{Bicycle}(x) &\to \exists y . \pn{HasPart}(x, y) \wedge \pn{Wheel}(y) & \forall x, y . \pn{HasPart}(x, y) &\to \pn{IsPartOf}(y, x) \\
\forall x . \pn{Wheel}(x) &\to \exists y . \pn{IsPartOf}(x, y) \wedge \pn{Bicycle}(y) & \forall x, y . \pn{IsPartOf}(x, y) &\to \pn{HasPart}(y, x)
\end{align*}
Then, $\{\pn{Bicycle}(b), \pn{HasPart}(b, t), \pn{IsPartOf}(t, b), \pn{Wheel}(t)\}$ is a universal model of \kb.
\end{example}


Although there are many variants of the chase, they all implement a similar strategy.
Namely, they start with the database and then, in a step-by-step manner, extend this structure with new atoms to satisfy the rules in the input rule set in a most general way.
Since none of these variants are guaranteed to terminate (some KBs do not even admit finite universal models), it is only natural to wonder about their respective halting problems \cite{DBLP:conf/ijcai/BednarczykFO20,oblivious-rule-set-termination-undecidable,DBLP:conf/kr/CarralLMT22,critical-instance,chase-revisited,anatomy-chase}.
Despite intensive efforts, some results have remained open (until now!).
Specifically, prior research has established tight bounds for all classes of chase terminating KBs and rule sets, except for the following:
\begin{itemize}
\item The class \TKR{\forall} of all KBs that only admit finite restricted chase sequences.
\item  The class \TRR{\forall} containing a rule set \aruleset if $\langle \aruleset, \db \rangle \in \TKR{\forall}$ for every database \db.
\end{itemize}
Our main contribution is to show that both classes are $\Pi_1^1$-complete, a surprising result given that these are significantly harder than the corresponding classes for other chase variants \cite{anatomy-chase}.

{ 
\newcommand*{\xsep}{1.175}
\newcommand*{\xa}{0*\xsep}
\newcommand*{\xb}{1*\xsep}
\newcommand*{\xc}{2.1*\xsep}
\newcommand*{\xd}{3.1*\xsep}
\newcommand*{\xe}{4.1*\xsep}
\newcommand*{\xf}{5.1*\xsep}
\newcommand*{\xg}{6.2*\xsep}
\newcommand*{\xh}{7.2*\xsep}
\newcommand*{\xj}{8.2*\xsep}
\newcommand*{\xk}{9.2*\xsep}
\newcommand*{\xl}{10.2*\xsep}
\newcommand*{\xm}{11.2*\xsep}
\newcommand*{\xn}{12.2*\xsep}

\newcommand*{\ysep}{0.9}
\newcommand*{\ya}{0*\ysep}
\newcommand*{\yb}{1*\ysep}

\begin{figure}
\begin{tikzpicture}
\node[draw, fill=black, circle, inner sep=1pt] (xa) at (\xa, \ya) {};
\node[below] at (xa) {\pn{B}:1};
\node[draw, fill=black, circle, inner sep=1pt] (xb) at (\xb, \yb) {};
\node[above] at (xb) {\pn{W}:2};

\path[->] (xa) edge [bend left] node[fill=white,yshift=2mm] {\pn{HP}:2} (xb);
\path[->] (xb) edge [bend left] node[fill=white,yshift=-2mm] {\pn{IP}:3} (xa);

\node[draw, fill=black, circle, inner sep=1pt] (xc) at (\xc, \ya) {};
\node[below] at (xc) {\pn{B}:1};
\node[draw, fill=black, circle, inner sep=1pt] (xd) at (\xd, \yb) {};
\node[above] at (xd) {\pn{W}:2};
\node[draw, fill=black, circle, inner sep=1pt] (xe) at (\xe, \ya) {};
\node[below] at (xe) {\pn{B}:3};
\node[draw, fill=black, circle, inner sep=1pt] (xf) at (\xf, \yb) {};
\node[above] at (xf) {\pn{W}:4};

\path[->] (xc) edge [bend left] node[fill=white,yshift=2mm] {\pn{HP}:2} (xd);
\path[->] (xd) edge [bend left] node[fill=white,yshift=-2mm] {\pn{IP}:6} (xc);
\path[->] (xd) edge [bend left] node[fill=white,yshift=2mm] {\pn{IP}:3} (xe);
\path[->] (xe) edge [bend left] node[fill=white,yshift=-2mm] {\pn{HP}:7} (xd);
\path[->] (xe) edge [bend left] node[fill=white,yshift=2mm] {\pn{HP}:4} (xf);
\path[->] (xf) edge [bend left] node[fill=white,yshift=-2mm] {\pn{IP}:5} (xe);

\node[draw, fill=black, circle, inner sep=1pt] (xg) at (\xg, \ya) {};
\node[below] at (xg) {\pn{B}:1};
\node[draw, fill=black, circle, inner sep=1pt] (xh) at (\xh, \yb) {};
\node[above] at (xh) {\pn{W}:2};
\node[draw, fill=black, circle, inner sep=1pt] (xj) at (\xj, \ya) {};
\node[below] at (xj) {\pn{B}:3};
\node[draw, fill=black, circle, inner sep=1pt] (xk) at (\xk, \yb) {};
\node[above] at (xk) {\pn{W}:5};
\node[draw, fill=black, circle, inner sep=1pt] (xl) at (\xl, \ya) {};
\node[below] at (xl) {\pn{B}:7};
\node[draw, fill=white, circle, inner sep=1pt] (xm) at (\xm, \yb) {};
\node[above] at (xm) {\pn{W}:9};

\path[->] (xg) edge [bend left] node[fill=white,yshift=2mm] {\pn{HP}:2} (xh);
\path[->] (xh) edge [bend left] node[fill=white,yshift=-2mm] {\pn{IP}:4} (xg);
\path[->] (xh) edge [bend left] node[fill=white,yshift=2mm] {\pn{IP}:3} (xj);
\path[->] (xj) edge [bend left] node[fill=white,yshift=-2mm] {\pn{HP}:6} (xh);
\path[->] (xj) edge [bend left] node[fill=white,yshift=2mm] {\pn{HP}:5} (xk);
\path[->] (xk) edge [bend left] node[fill=white,yshift=-2mm] {\pn{IP}:8} (xj);
\path[->] (xk) edge [bend left] node[fill=white,yshift=2mm] {\pn{IP}:7} (xl);
\path[->] (xl) edge [bend left,dotted] node[fill=white,yshift=-2mm] {\pn{HP}:10} (xk);
\path[->] (xl) edge [bend left,dotted] node[fill=white,yshift=2mm] {\pn{HP}:9} (xm);

\end{tikzpicture}
\caption{Three Different Restricted Chase Sequences for the KB $\kb_1$ from \cref{example:example-bicycle}}
\label{figure:restricted-sequence-example1}
\end{figure}
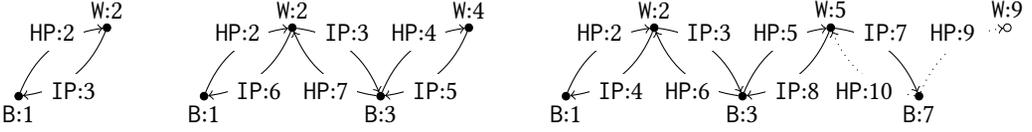
} 

The restricted chase differs from other variants in that it introduces new terms to satisfy existential quantifiers in rules only if these are not already satisfied by existing terms.
Because of this, the order of rule applications impacts the termination of a chase sequence.
For instance, the KB $\kb_1$ from \cref{example:example-bicycle} admits finite and infinite restricted chase sequences; some of these are represented in \cref{figure:restricted-sequence-example1}, 
where atoms are numbered to denote the sequence step at which they were introduced.

\TKR{\forall} has been claimed to be recursively enumerable (\re) in \cite{anatomy-chase}, probably with the following procedure in mind: 
given an input KB, compute all of its restricted chase sequences in parallel, and halt and accept if all of them are finite.
Alas, this strategy does not work as there are terminating input KBs that admit infinitely many finite sequences that are of ever-increasing length.
\begin{example}
\label{example:emergency-brake}
Consider the KB $\kb_2 = \langle \aruleset, \db\rangle$ where \db is the database $\{\pn{Real}(a), \pn{E}(a, c), \pn{E}(c, b), \pn{Real}(c)$, $\pn{E}(b, b), \pn{Brake}(b)\}$ and $\aruleset$ is the rule set that contains all of the following:
\begin{align*}
\forall x, y, z . \pn{Real}(x) \wedge \pn{E}(x, y) \wedge \pn{Real}(y) \wedge \pn{Brake}(z) &\to \exists v . \pn{E}(y, v) \wedge \pn{E}(v, z) \wedge \pn{Real}(v) \\
\forall x . \pn{Brake}(x) &\to \pn{Real}(x)
\end{align*}
For any $k \geq 1$, there is a restricted chase sequence of $\kb_2$ that yields the (finite) universal model $D \cup \{\pn{E}(c, t_1)\} \cup \{\pn{E}(t_i, t_{i+1}) \mid i < k\} \cup \{\pn{E}(t_i, b), \pn{Real}(t_i) \mid i \leq k\} \cup \{\pn{Real}(b)\}$ of $\kb_2$.
Such a sequence is obtained by applying the first rule $k$ consecutive times and then applying the second one once to derive $\pn{Real}(b)$.
After this application, the first rule is satisfied and the restricted chase halts.
\end{example}

The KB $\kb_2$ in the previous example is in \TKR{\forall} because of \emph{fairness}.
This is a built-in condition in the definition of all chase variants that guarantees that the chase yields a model of the KB 
by requiring that, if a rule is applicable at some point during the computation of a sequence, then this rule must be eventually satisfied.
Hence, the second rule in $\kb_2$ must sooner or later be applied in all restricted chase sequences and thus, all such sequences are finite.

The KB in \cref{example:emergency-brake} uses a technique called the \emph{emergency brake}, initially proposed by \citeauthor{DBLP:conf/icdt/KrotzschMR19} in \cite{DBLP:conf/icdt/KrotzschMR19}.
The idea is to connect every term in the chase to a special term (the constant $b$ in this example) that is not ``\pn{Real}'' and acts as a ``\pn{Brake}''.
Eventually, this term becomes ``\pn{Real}'' because of fairness, all existential restrictions are satisfied, and the restricted chase halts.
The emergency brake allows to grow the chase for an arbitrary number of steps whilst guaranteeing its termination.
By activating infinite sequences of emergency brakes, we emulate the eternal recurrence often displayed by $\Pi_1^1$-complete problems and thus define the reductions that lead to our main results.

After presenting the necessary preliminaries in \cref{section:preliminaries}, we discuss related work in \cref{section:related-work}.
Then, we show that \TKR{\forall} and \TRR{\forall} are $\Pi_1^1$-complete in \cref{section:kb-termination,section:rule-set-termination}, respectively.
In \cref{section:fairness}, we propose an alternative to fairness for the restricted chase that simplifies its universal termination problem.
We conclude with a  brief discussion about future work in \cref{section:conclusions}.

\section{Preliminaries}
\label{section:preliminaries}


\paragraph{First-Order Logic}

Consider pairwise disjoint, countably infinite sets of \emph{predicates} \Preds, \emph{variables} \Vars, \emph{constants} \Consts, and \emph{nulls} \Nulls.
Every predicate has an \emph{arity} through $\ar : \Preds \to \nat \cup \{0\}$.
Elements in $\Vars \cup \Consts \cup \Nulls$ are called \emph{terms}. 
An \emph{atom} is an expression of the form $\pn{P}(\vt)$ where $\vt$ a list of terms and $\pn{P}$ is a $\vert \vt \vert$-ary predicate.
A \emph{fact} is a variable-free atom.
An \emph{(existential) rule} \arule is a closed first-order formula of the form 
$\forall \vx, \vy. B[\vx, \vy] \to \exists \vz. H[\vy, \vz]$
where 
$\vx$, $\vy$, and $\vz$ are pairwise disjoint lists of variables; $B$ and $H$ are null-free conjunctions of atoms featuring exactly the variables in $\vx, \vy$ and $\vy,\vz$, respectively;  and $H$ is non-empty.
We write $\body(\arule)$ and $\head(\arule)$ to denote $B$ and $H$, respectively; and refer to the list \vy of variables as the \emph{frontier} of \arule.
We omit universal quantifiers for brevity. 
A \emph{database} is a finite fact set without nulls.
A \emph{knowledge base (KB)} is a pair $\langle \aruleset, \db \rangle$ consisting of a finite rule set \aruleset and a database \db.

\paragraph{The Chase}
A \emph{substitution} \subs is a partial mapping from variables to constants or nulls.
For an (arbitrary) expression $\varphi$, let $\subs(\varphi)$ be the expression that results from $\varphi$ by replacing 
all occurrences of every variable $v$ in $\varphi$ by $\subs(v)$ if the latter is defined.
A \emph{trigger} is a pair $\langle \arule, \subs \rangle$ consisting of a rule \arule and a substitution \subs that is defined exactly on the universally quantified variables in \arule.
The \emph{support} of a trigger $\angles{\arule,\subs}$ is $\trigsupport(\angles{\arule,\subs})=\subs(\body(\arule))$.
A trigger $\langle \arule, \subs \rangle$ 
is \emph{loaded} for a fact set $F$ if this fact set includes
its support; 
and \emph{obsolete} for $F$ if there exists a substitution $\subs'$ that extends \subs to the existential variables in \arule 
such that $\subs'(\head(\arule)) \subseteq F$.
The \emph{output}  of a trigger $\angles{\arule,\subs}$ that is not obsolete for $F$ is $\trigoutput((\angles{\arule,\subs}) =\subs'(\head(\arule))$, where $\subs'$ is some substitution that extends $\subs$ by mapping every existential variable in $\arule$ to a fresh null.
A \emph{$\aruleset$-trigger} is a trigger with a rule in \aruleset.

\begin{definition}
\label{definition:chase}
A \emph{(restricted) chase derivation} for a KB $\langle \aruleset, \db \rangle$ is 
a possibly infinite sequence $F_0, F_1, \dots$ of fact sets such that 
(1) $F_0 = \db$ and,
(2) for each $i \geq 0$, there is some $\aruleset$-trigger $\langle \arule, \subs \rangle$ that is loaded and not obsolete for $F_i$ such that $F_{i+1} = F_i \cup \trigoutput(\angles{\arule,\subs})$.
Such a chase derivation is a \emph{(restricted) chase sequence} if,
(3) for every $\aruleset$-trigger $\lambda$ and every $i \geq 0$ such that $\lambda$ is loaded for $F_i$, there is some $j \geq i$ such that $\lambda$ is obsolete for $F_j$.
\end{definition}

Condition (3) is known as \emph{fairness}.
Note that, if no appropriate trigger according to condition (2) exists for some $i \geq 0$, then the sequence necessarily ends at $F_i$.
The \emph{result} of a chase sequence \chaseseq is the union of all fact sets in \chaseseq.
It is well-known that the result $F$ of any chase sequence for a KB $\KB = \langle \aruleset, \db \rangle$
is a \emph{universal model} for \KB.
That is, every model of \KB can be homomorphically embedded into $F$, which is also a model of this theory.
Note that, if we consider infinite sequences, the result of the chase may not be a model of \KB if we disregard fairness.

A chase sequence \emph{terminates} if it is finite. 
A KB \emph{existentially terminates} if it admits a terminating chase sequence; it \emph{universally terminates} if all of its chase sequences terminate.
A rule set \aruleset \emph{existentially terminates} if every KB with \aruleset existentially terminates; it \emph{universally terminates} if every KB with \aruleset universally terminates.
The classes of knowledge bases that existentially and universally terminate 
are denoted by $\TKR{\exists}$ and $\TKR{\forall}$, respectively.
The classes of rule sets that existentially and universally terminate
are denoted by $\TRR{\exists}$ and $\TRR{\forall}$, respectively.
We also consider similar classes for the oblivious and core chase variants, which we denoted in the obvious manner.
For instance, \TRO{\exists} is the set of all rule sets that existentially terminate for the oblivious chase.

\paragraph{Turing Machines}
As per our definition, all machines reuse the same \emph{initial} state.
Moreover, machines do not write blanks and cannot access accepting or rejecting states; these are not relevant in our context because we only consider halting problems.

\begin{definition}
\label{definition:machine}
  A \emph{(non-deterministic Turing) machine} is a tuple $\langle Q, \Gamma, \delta \rangle$ 
  where $Q$ is a set of states that contains the \emph{initial state} \qinit, 
  $\Gamma$ is a tape alphabet with $\Gamma \supseteq \{0, 1\}$ and $\blank \notin \Gamma$, 
  and $\delta$ is a transition function for $Q$.
  That is, $\delta$ is a function that maps from $Q \times \Gamma \cup \{\blank\}$ to $\mathcal{P}(Q \times \Gamma \times \{\goleft, \goright\})$.
\end{definition}

\begin{definition}
A \emph{configuration} for a machine $\langle Q, \Gamma, \delta \rangle$ is a tuple $\langle n, t, p, q \rangle$ where $n$ is a natural number; $t : \{1, \ldots, n\} \to \Gamma \cup \{\blank\}$ is a function such that $t(n) = \blank$, and $t(i+1) = \blank$ if $t(i) = \blank$ for some $1 \leq i < n$;
$p$ is a number in $\{1, \ldots, n\}$; and $q$ is a state in $Q$.
The \emph{starting configuration} on some \emph{word} $w_1, \ldots, w_n \in \{0, 1\}^*$ is the tuple $\langle n+1, t, 1, \qinit \rangle$ where $t$ is the function that maps $1$ to $w_1$, $2$ to $w_2$, \ldots, $n$ to $w_n$, and $n+1$ to  $\blank$.
\end{definition}

For a configuration $\langle n, t, p, q \rangle$, we use $t$ to encode the contents of the tape at each position; moreover, we use $p$ and $q$ to encode the position of the head and the state of the machine, respectively.
Note that elements of the tape alphabet $\Gamma$ may not occur after a blank symbol in such a configuration.


\begin{definition}
\label{definition:machine-configuration}
Consider a machine $M = \langle Q, \Gamma, \delta \rangle$ and a configuration $\rho = \langle n, t, p, q \rangle$ with $q \in Q$.
Then, let $\textsf{Next}_M(\rho)$ be the smallest set that, for every $\langle r, a, \mathop{\leftrightarrow} \rangle \in \delta(t(p), q)$ with $\mathop{\leftrightarrow} = \goright$ or $p \geq 2$, contains the configuration $\langle n + 1, t', p', r \rangle$ where:
\begin{itemize}
\item Let $t'(p) = a$, let $t'(n+1) = \blank$, and let $t'(i) = t(i)$ for every $1 \leq i \leq n$ with $i \neq p$.
\item If $\mathop{\leftrightarrow} = \goleft$, then $p' = p-1$; otherwise, $p' = p+1$.
\end{itemize}
\end{definition}


As described above, any given machine defines a function that maps configurations to sets of configurations.
An exhaustive traversal through a path in this possibly infinite tree of configurations that begins with a starting configuration yields a run:

\begin{definition}
\label{definition:machine-run-halting}
A \emph{run} of a machine $M$ on a configuration $\rho_1$ is a possibly infinite sequence $S = \rho_1, \rho_2, \ldots$ of configurations such that $\rho_{i+1}$ is in $\textsf{Next}_M(\rho_i)$ for every $1 \leq i < \vert S \vert$, and $\textsf{Next}_M(\rho_{\vert S \vert}) = \emptyset$ if $S$ is finite.
A \emph{partial run} of $M$ on $\rho_1$ is a sequence of configurations that can be extended into a run of $M$ on $\rho_1$.
A \emph{(partial) run} of $M$ on a word $\vec{w}$ is a (partial) run on the starting configuration of $\vec{w}$.
\end{definition}

\paragraph{Computability Theory}

The \emph{arithmetical hierarchy} consists of classes of formal languages $\Sigma_i^0$ with $i \geq 1$ where $\Sigma_1^0$ is the class of all semi-decidable languages 
and $\Sigma_{i+1}^0$ is obtained from $\Sigma_i^0$ with a Turing jump \cite{RogersArithmeticalHierarchy}.
The co-classes are denoted by $\Pi_i^0$.
Equivalently, these classes can be viewed as the sets of natural numbers definable by first-order logic formulas with bounded quantifier alternation.
That is, $\Sigma_i^0$ is the class of sets of natural numbers definable with a formula of the form $\exists \vec{x}_1 \forall \vec{x}_2 \dots Q_i \vec{x}_i. \phi[x, \vec{x}_1, \dots, \vec{x}_i]$ 
where $\phi$ is a quantifier-free formula and $Q_i$ is $\exists$ if $i$ is odd or $\forall$ otherwise.
For $\Pi_i^0$, the alternation starts with $\forall$.
We also the first level of the \emph{analytical hierarchy}; that is, $\Sigma_1^1$ and $\Pi_1^1$
\cite{RogersArithmeticalHierarchy}.
The analytical hierarchy can analogously be defined using second-order formulae with bounded second-order quantifier alternation.
In the following, we introduce complete problems for these classes that we later use in our reductions.
Consider a machine $M$ and a state \qloop.
\begin{itemize}
\item The machine $M$ is \emph{non-recurring through \qloop} on some word $\vec{w}$ if every run of $M$ on $\vec{w}$ features \qloop finitely many times.
\item It is \emph{universally non-recurring through \qloop} if it is non-recurring through \qloop on all words.
\item It is \emph{robust non-recurring through \qloop} if every run of $M$ on any configuration features \qloop finitely many times.
\end{itemize}
We obtain $\Pi_1^1$-completeness of the first problem by adjusting a proof from the literature \cite{pi_one_one_problems} and for the latter two using simple reductions that we define in \cref{section:tm-reductions}.

\begin{toappendix}
\section{$\Pi_1^1$-complete Turing Machine Problems for Reductions}
\label{section:tm-reductions}

Our definition of non-recurring machines differs slightly from descriptions found in previous literature.
Indeed, Harel showed that the following problem is $\Pi_1^1$-complete: decide if a (non-deterministic Turing) machine admits a run on the empty word that features the initial state \qinit infinitely many times (see Corollary~6.2 in \cite{pi_one_one_problems}).
Our definition is slightly different since we choose a different state $\qloop$ to keep track of this infinite recurrence; note that this state may be different from the initial state.
Fortunately, the choice of the initial state in the proof of Corollary 6.2 of Harel \cite{pi_one_one_problems} is arbitrary, making it straightforward to adapt his proof to any given state.
We first prove this in Section~\ref{subsec:tm-reductions:empty_word}, and then use this result to get $\Pi_1^1$-completeness for the other Turing machine problems we consider in Section~\ref{subsec:tm-reductions:univ_and_robust}.

\subsection{Non-Recurrence on the Empty Word}\label{subsec:tm-reductions:empty_word}

To show that checking if a machine is non-recurring on the empty word is $\Pi_1^1$-complete, we adapt the proof of Corollary 6.2 in \cite{pi_one_one_problems}.
To do so, we first need to introduce some preliminary notions.
A \emph{list} is a finite sequence.
The \emph{concatenation} of two lists $u = u_1,\dots, u_n$ and $v = v_1,\dots, v_m$ is the list $u \cdot v = u_1,\dots, u_n, v_1,\dots, v_m$.
A list $u_1, \ldots, u_n$ with $n \geq 2$ is the \emph{child} of $u_1, \ldots, u_{n-1}$.
A list $u$ is an \emph{ancestor} of another list $v$, written $u \prec v$, if $u$ is a prefix of $v$; that is, if $u \cdot w = v$ for some list $w$.

\begin{definition}
\label{definition:omega-trees}
An \emph{$\omega$-tree} $T$ is a set of lists of natural numbers closed under $\prec$.
A \emph{node} is an element in $T$; a \emph{leaf} is a node without children in $T$.
Such a tree is \emph{computable} if so is the following function:
    \[\chi_T(u)=\begin{cases}
        0 & \text{if } u\notin T\\
        1 & \text{if } u\in T\text{ and } u \text{ is a leaf}\\
        2 & \text{if } u\in T\text{ and } u \text{ is not a leaf}
    \end{cases}\]
A possibly infinite sequence of natural numbers is a \emph{branch} of $T$ if the latter contains every finite prefix of the former.
Such a tree is \emph{well founded} if all of its branches are finite.
\end{definition}

In the following, we identify a computable $\omega$-tree $T$ with the machine that computes the function $\chi_T$.
Note that this is a machine that implements a function mapping lists of natural numbers to elements of $\{0,1,2\}$ as indicated in \cref{definition:omega-trees}.
Checking if such a machine does correspond to a well-founded tree is a $\Pi_1^1$-complete problem.

\begin{lemma}[\cite{RogersArithmeticalHierarchy}, Theorem 16]\label{lem:wf_omega_trees_are_pi_one_one}
Checking if a computable $\omega$-tree is well founded is $\Pi_1^1$-complete.
\end{lemma}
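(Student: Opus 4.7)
The plan is to establish the two directions of $\Pi_1^1$-completeness separately. For membership, I would directly unfold the definition: a computable $\omega$-tree $T$ is well founded iff for every $f : \nat \to \nat$ there is some $n$ with $\chi_T(f|_n) = 0$, where $f|_n$ denotes the length-$n$ prefix of $f$. Given an index for the machine computing $\chi_T$, the relation ``$\chi_T(u) = 0$'' is decidable, so well-foundedness has the shape $\forall f . \exists n . R(T, f|_n)$ with $R$ recursive. Reading $f$ as a function quantifier, this is a $\Pi_1^1$ statement.

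For hardness, I would exploit the classical normal form for $\Pi_1^1$ predicates: any $\Pi_1^1$ set $A$ can be written as $\{n : \forall f . \exists m . R(n, f|_m)\}$ for some recursive relation $R$. This is obtained from the definition of $\Pi_1^1$ by first converting second-order set quantifiers to function quantifiers, then Skolemising the arithmetical kernel and coding the resulting Skolem functions together with $f$ into a single function variable. Given this normal form, I would associate to each $n$ the computable $\omega$-tree
\[
T_n = \{s : R(n, s') \text{ fails for every proper prefix } s' \text{ of } s\}.
\]
An infinite branch of $T_n$ corresponds exactly to a function $f$ with $\neg R(n, f|_m)$ for all $m$, so $T_n$ is well founded iff $n \in A$. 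Uniformity in $n$ yields a computable many-one reduction from $A$ to the well-foundedness problem.

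The main obstacle is establishing the $\Pi_1^1$ normal form; once it is available, the tree $T_n$ is simply an effective version of the search for a failing branch, and the equivalence follows by definition chasing. Since this whole development is classical and the lemma is stated with an explicit citation, in practice I would appeal to the detailed treatment in \cite{RogersArithmeticalHierarchy} rather than rederiving the normal form from scratch.
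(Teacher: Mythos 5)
The paper does not prove this lemma at all: it is imported verbatim as Theorem~16 of the cited reference, and your sketch is precisely the classical argument behind that citation (well-foundedness as a $\forall f\,\exists n$ statement with recursive matrix for membership, and the tree of unsecured sequences from the Kleene $\Pi_1^1$ normal form for hardness). Your proposal is correct and consistent with the paper's treatment, which likewise defers the details to the reference.
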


\begin{definition}
For a natural number $k \geq 0$, a \emph{$k$-tree} $T$ is an $\omega$-tree that does not contain sequences with numbers larger than $k$.
A \emph{b-tree} (b for bounded) is a $k$-tree for some $k \geq 0$.
A \emph{marked b-tree} is a pair $(T, \mu)$ consisting of a b-tree $T$ and a \emph{marking function} $\mu$; that is, a function from $T$ to $\set{0,1}$.
A marked b-tree is \emph{computable} if the following function is computable:
\[\chi_T^\mu(u)=\begin{cases}
 0 & \text{if } u\notin T\\
1 & \text{if } u\in T\text{ and } u \text{ is marked }(\textit{that is, } \mu(u)=1)\\
2 & \text{if } u\in T\text{ and } u \text{ is not marked}
\end{cases}\]
A marked b-tree is \emph{recurring} if it has a branch with infinitely many marked prefixes.
\end{definition}

As we do for computable $\omega$-trees, we identify a computable marked b-tree $(T, \mu)$ with the decider that implements the function $\chi_T^\mu$.

\begin{lemma}[\cite{pi_one_one_problems}, Corollary 5.3]\label{lem:rec_b_trees_are_pi_one_one}
Checking if a computable b-tree is non-recurring is $\Pi_1^1$-complete.
\end{lemma}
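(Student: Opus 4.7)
The plan is to establish $\Pi_1^1$-membership directly from the definition and $\Pi_1^1$-hardness by reducing from the well-foundedness problem for computable $\omega$-trees (\cref{lem:wf_omega_trees_are_pi_one_one}). For membership, observe that a marked b-tree $(T, \mu)$ is non-recurring iff every branch $\beta : \nat \to \nat$ (bounded by the branching parameter $k$) admits some $N$ past which no length-$n$ prefix of $\beta$ is in $T$ and marked. Since $\chi_T^\mu$ is computable, this unfolds to a $\forall \beta\, \exists N\, \forall n$ statement with a single universal second-order (function) quantifier and an arithmetical matrix, placing the problem in $\Pi_1^1$.

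For hardness, given a computable $\omega$-tree $T$, I would construct a computable marked $1$-tree $(T', \mu)$ as follows. Encode each node $u = u_1, \ldots, u_n \in T$ by the binary string $f(u) = 0^{u_1} 1\, 0^{u_2} 1 \cdots 0^{u_n} 1$, and let $T'$ consist of every binary string whose unique parse $0^{a_1} 1 \cdots 0^{a_k} 1 0^{a_{k+1}}$ satisfies $(a_1, \ldots, a_k) \in T$. Mark $v \in T'$ exactly when $a_{k+1} = 0$, so that the marked nodes are precisely $\{f(u) \mid u \in T\}$. Computability of $\chi_{T'}^\mu$ then reduces to a single query to $\chi_T$ on the parsed prefix, and closure of $T'$ under $\prec$ follows from closure of $T$ under $\prec$.

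Correctness hinges on classifying the infinite branches of $T'$: a branch containing infinitely many $1$s has its $1$-positions reconstructing an infinite branch of $T$ and is marked infinitely often (once right after each $1$); a branch with only finitely many $1$s terminates in a tail of $0$s and is marked only finitely often. Hence $T$ has an infinite branch iff $(T', \mu)$ is recurring; equivalently, $T$ is well-founded iff $(T', \mu)$ is non-recurring. The main subtlety — the point most easily gotten wrong — is preserving computability of $T'$. A more literal definition that takes $T'$ to be exactly the set of prefixes of $\{f(u) \mid u \in T\}$ breaks computability, since deciding whether a half-encoded string extends to some $f(u)$ would require searching for an appropriate child of a $T$-node, which is not supported by $\chi_T$. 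The construction above sidesteps this by admitting spurious $0$-suffixes after any $T$-node; these create additional infinite branches in $T'$, but each such branch carries only finitely many marks, so recurrence is unaffected.
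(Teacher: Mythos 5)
Your proof is correct, but note that the paper does not prove this lemma at all: it is imported as a black box from Harel's work (Corollary 5.3 of the cited paper), so the paper's only ``argument'' is the citation itself. What you have done is supply a self-contained derivation from the other external ingredient the paper uses, namely the $\Pi_1^1$-completeness of well-foundedness for computable $\omega$-trees (\cref{lem:wf_omega_trees_are_pi_one_one}). Your membership argument (a single universal function quantifier over candidate branches with an arithmetical matrix evaluated via $\chi_T^\mu$) is the standard one and is fine, including the simplification of quantifying over all sequences rather than only branches, which works because prefix-closure makes non-branches drop out of $T$ permanently. Your hardness reduction---collapsing $\omega$-branching to a $1$-tree by unary-encoding each child index as $0^{u_i}1$ and marking exactly the strings ending in $1$---is sound: $T'$ is prefix-closed because $T$ is, $\chi_{T'}^\mu$ needs only one query to $\chi_T$ on the parsed prefix, a branch of $T'$ with infinitely many $1$s reconstructs an infinite branch of $T$ and is marked infinitely often, and a branch with a tail of $0$s is marked at most as often as it contains $1$s, hence finitely often. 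You are also right to flag the computability pitfall: taking $T'$ to be exactly the prefixes of $\{f(u)\mid u\in T\}$ would require a $\Sigma^0_1$ search for a suitable child of a $T$-node, which $\chi_T$ does not support, and your fix of admitting spurious $0$-suffixes resolves this without affecting recurrence. The net effect is that your version makes the appendix self-contained modulo \cref{lem:wf_omega_trees_are_pi_one_one} alone, at the cost of encoding details that the paper avoids by deferring to Harel.
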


We are ready now to show the main result in this subsection.

\begin{proposition}\label{prop:rec_halt_pi_one_one}
The problem of checking if a machine is non-recurring through some state \qloop on the empty word $\varepsilon$ is $\Pi_1^1$-complete.
\end{proposition}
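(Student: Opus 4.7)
The plan is to prove $\Pi_1^1$-hardness by reducing from the non-recurrence problem for computable marked b-trees (\cref{lem:rec_b_trees_are_pi_one_one}), effectively re-playing Harel's construction while replacing the initial state with the arbitrary designated state $\qloop$. Given a computable marked b-tree $(T, \mu)$ together with a decider for $\chi_T^\mu$, I would build a non-deterministic machine $M$ that, on the empty input, maintains an encoding of a list $u \in T$ on its work tape (initialised to the empty list). At each iteration, $M$ non-deterministically writes a natural number $n$ in binary, invokes a copy of the decider for $\chi_T^\mu$ on $u \cdot n$, and then branches as follows: if the output is $0$ (so $u \cdot n \notin T$), $M$ halts; if the output is $2$ (in $T$ but unmarked), $M$ replaces $u$ by $u \cdot n$ and loops back; if the output is $1$ (in $T$ and marked), $M$ performs one extra transition through $\qloop$ before updating $u$ and looping back. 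The subroutine simulating $\chi_T^\mu$ uses a disjoint copy of the decider's state set, so $\qloop$ is only ever entered at the marked-node checkpoint.

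By construction, the infinite runs of $M$ on $\varepsilon$ are in bijection with the infinite branches of $T$, and the number of times $\qloop$ appears along such a run equals the number of marked prefixes of the corresponding branch. Hence $M$ admits an infinite run featuring $\qloop$ infinitely often iff $(T, \mu)$ is recurring, which gives $\Pi_1^1$-hardness of deciding whether $M$ is non-recurring through $\qloop$ on $\varepsilon$.

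For membership in $\Pi_1^1$, I would express the condition using a single second-order universal quantifier over functions $f$ from $\nat$ to (codes of) configurations: $M$ is non-recurring through $\qloop$ on $\varepsilon$ iff, for every such $f$, either $f$ does not encode a valid infinite run of $M$ starting from the configuration on $\varepsilon$, or there exists an index after which $f$ never again passes through $\qloop$. Both disjuncts are arithmetical in $f$ (the first is $\Pi_1^0$, the second is $\Sigma_2^0$), so the whole statement is $\Pi_1^1$. The main delicate point in the hardness direction is the embedding of the decider for $\chi_T^\mu$ into $M$ without introducing spurious visits of $\qloop$ and without disturbing the non-deterministic guess of $n$; this is routine bookkeeping with disjoint tape regions and a disjoint state set, and is where most of the concrete technical effort lies.
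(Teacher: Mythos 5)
Your proof is correct, but it routes both directions differently from the paper. For hardness, the paper reduces from well-foundedness of computable $\omega$-trees (\cref{lem:wf_omega_trees_are_pi_one_one}): its machine performs a sibling-or-child traversal of the whole tree, visiting \qloop exactly when it descends, so that a run visiting \qloop infinitely often corresponds to an infinite branch. You instead reduce from non-recurrence of computable marked b-trees (\cref{lem:rec_b_trees_are_pi_one_one}), guessing a single branch one element at a time and visiting \qloop exactly at marked nodes; this matches the marked-prefix structure of the target problem more directly and dispenses with the traversal bookkeeping, at the cost of relying on the hardness half of Harel's Corollary~5.3 rather than on the more primitive well-foundedness problem. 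For membership, the paper reduces in the opposite direction to the marked b-tree problem (the tree of partial runs, with nodes marked when the last state is \qloop), whereas you give a direct $\forall f\,(\text{arithmetical})$ formalisation; both are standard, and yours has the merit of not needing Corollary~5.3 at all for the upper bound. Two minor points, neither of which is a real gap: the claimed bijection between infinite runs and infinite branches is not exact, since a run can stall forever while non-deterministically writing the next integer $n$ --- but such runs never revisit \qloop, so the equivalence on recurrence is unaffected (the paper makes the analogous remark explicitly in its proof of \cref{prop:univ_rec_halt_pi_one_one}); and the disjunct ``$f$ does not encode a valid run'' is $\Sigma_1^0$ rather than $\Pi_1^0$, which is immaterial because the whole matrix is arithmetical and a universal function quantifier over an arithmetical matrix is $\Pi_1^1$.
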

\begin{proof}
To show membership, we present a reduction that maps a machine $M = (Q,\Gamma,\delta)$ to a computable marked b-tree $(T, \mu)$ such that $M$ is non-recurring through a given state $\qloop \in Q$ on the empty word $\varepsilon$ if and only if $(T, \mu)$ is non-recurring.
To define $(T, \mu)$, we consider an (arbitrarily chosen) enumeration $q_1, \dots, q_n$ of the states in $Q$.
\begin{itemize}
\item Let $T$ be the set containing a list of natural numbers $i_1,\dots, i_n$ if there is a partial run $\rho_1, \ldots, \rho_n$ of $M$ on $\varepsilon$ such that $\rho_j$ features the state $q_{i_j}$ for every $1 \leq j \leq n$.
\item Let $\mu$ be the function that maps a list $u \in T$ to $1$ if and only if $q_i = \qloop$ where $i$ is the last element in $u$.
That is, if the last element of $u$ is the index that corresponds to $\qloop$ in the enumeration $q_1, \dots, q_n$.
\end{itemize}
For every infinite branch of $T$, there is an infinite run of $M$ and vice-versa.
Furthermore, by the definition of $\mu$, a branch of $(T,\mu)$ containing infinitely many marked nodes corresponds to a run of $M$ visiting $\qloop$ infinitely many times.
Therefore, $M$ is non-recurring through \qloop if and only if $(T,\mu)$ is non-recurring.
    
For hardness, we present a reduction that maps a computable $\omega$-tree $T$ to a non-deterministic machine $M = (Q,\Gamma,\delta)$ such that $T$ is well-founded if and only if $M$ is non-recurring through a state $\qloop \in Q$ on the empty word $\varepsilon$.
Intuitively, the machine $M$ proceeds by doing a traversal of the full $\omega$-tree; formally, it implements the following instructions on input $\varepsilon$:
\begin{enumerate}
\item Initialise the variable $u = 0$, which stores a list of natural numbers.
\item If $u \notin T$, replace the last element $i$ in $u$ with $i+1$.
\item If $u \in T$, make a non-deterministic choice between the following options:
\begin{enumerate}
\item Replace the last element $i$ in $u$ with $i+1$.
\item Append $0$ to the list stored in $u$ and visit the state \qloop.
\end{enumerate}
\item Go to (2).
\end{enumerate}
We can effectively check if a list $u$ is a node in $T$ above because $T$ is a computable $\omega$-tree and hence, so is function $\chi_T$.
Intuitively, each run of $M$ on the empty word corresponds to a traversal of a branch in $T$; note how we use non-determinism in (3) to alternatively visit the sibling (Instruction~3.a) or the child (Instruction~3.b) of a node in the tree.
Furthermore, note that $M$ only visits \qloop when it moves deeper on a given branch; that is, when it executes instruction (3.b).
Therefore, there is a run of $M$ visiting $\qloop$ infinitely often if and only if there is an infinite branch in $T$.
\end{proof}

\subsection{Reductions between Turing Machine Problems}
\label{subsec:tm-reductions:univ_and_robust}

\begin{proposition}\label{prop:univ_rec_halt_pi_one_one}
The problem of checking if a machine is universally non-recurring through a given state \qloop is $\Pi_1^1$-complete.
\end{proposition}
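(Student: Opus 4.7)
The plan is to establish $\Pi_1^1$-membership and $\Pi_1^1$-hardness separately.

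For membership, I would observe that universal non-recurrence through $\qloop$ is captured by the formula
\[\forall \vec w\, \forall \rho.\ \bigl(\rho \text{ is a run of } M \text{ on } \vec w\bigr) \;\Longrightarrow\; \exists N\, \forall n \geq N.\ \rho_n \text{ does not feature } \qloop,\]
where $\vec w$ is a first-order object (a natural number encoding of a word) and $\rho$ is a second-order object (a sequence of configurations, i.e.\ a function $\nat \to \nat$). Both ``$\rho$ is a run of $M$ on $\vec w$'' and ``$\rho_n$ features $\qloop$'' are arithmetical predicates (in fact decidable given $n$, $\vec w$, and the relevant finite prefix of $\rho$), so the whole formula has the shape $\forall^1 \forall^0 \exists^0 \forall^0$, placing the problem in $\Pi_1^1$.

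For hardness, I would reduce from non-recurrence on the empty word, which is $\Pi_1^1$-complete by \cref{prop:rec_halt_pi_one_one}. Given an instance $(M, \qloop)$, I build a machine $M'$ whose behaviour on every input word faithfully reproduces that of $M$ on $\varepsilon$. Concretely, $M'$ first performs a short deterministic ``skip'' phase: it moves right until it sees the first blank cell, overwrites that cell with a dedicated non-blank marker symbol (thus respecting the ``no writing blanks'' restriction of \cref{definition:machine}), and then steps one cell further to the right. It then enters a ``simulation'' phase in which it mimics each transition of $M$ step by step on the blank portion of the tape lying to the right of the marker; a transition that would move the simulated head onto the marker is simply not defined, which emulates exactly the prohibition on stepping left from position $1$ in \cref{definition:machine-configuration}. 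The distinguished recurrent state $\qloop'$ of $M'$ is the copy of $\qloop$ used inside the simulation phase.

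By construction, for every input word $\vec w$ the runs of $M'$ on $\vec w$ are in a natural bijection with the runs of $M$ on $\varepsilon$, and this bijection preserves the number of visits to $\qloop'$ and $\qloop$, respectively (the deterministic skip phase visits $\qloop'$ zero times). Hence $M'$ is non-recurring through $\qloop'$ on $\vec w$ if and only if $M$ is non-recurring through $\qloop$ on $\varepsilon$, independently of $\vec w$. Quantifying universally over $\vec w$, this yields: $M'$ is universally non-recurring through $\qloop'$ if and only if $M$ is non-recurring through $\qloop$ on $\varepsilon$, giving the desired reduction. I expect no serious obstacle: the only technical point is carefully defining the marker-based simulation so that the left-boundary and no-blank-writing conditions are respected, which is routine.
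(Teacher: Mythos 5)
Your proof is correct, and the hardness half coincides with the paper's: both constructions make the new machine ignore its input (the paper overwrites the input with a pseudo-blank symbol, you park a marker and simulate to its right) and then simulate $M$ on $\varepsilon$ while relabelling visits to \qloop, so that recurrence is input-independent. Where you genuinely diverge is the membership half. The paper establishes membership by a many-one reduction \emph{to} non-recurrence on the empty word: the new machine non-deterministically guesses an input word on the blank tape and then simulates $M$ on that guess, so that universal non-recurrence of $M$ collapses to non-recurrence of the new machine on $\varepsilon$, whose $\Pi_1^1$-membership is already known from \cref{prop:rec_halt_pi_one_one}. You instead exhibit the defining $\Pi_1^1$ formula directly, quantifying second-order over runs and observing that ``being a run'' and ``visiting \qloop at step $n$'' are arithmetical. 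Both are sound; your route is more self-contained and avoids relying on closure of $\Pi_1^1$ under many-one reductions composed with the earlier membership proof, while the paper's route keeps all the analytical-hierarchy bookkeeping localized in one place (the $\omega$-tree machinery behind \cref{prop:rec_halt_pi_one_one}) and reuses it uniformly for the universal and robust variants. One small point worth making explicit in your formula: runs may be finite under \cref{definition:machine-run-halting}, but such runs trivially visit \qloop finitely often, so restricting the second-order quantifier to infinite runs (encoded as total functions) loses nothing.
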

\begin{proof}
To show membership, we present a reduction that maps a machine $M$ to another machine $M'$ such that $M$ is universally non-recurring through a state $\qloop$ if and only if $M'$ is non-recurring through a state $q_r'$ on $\varepsilon$.
On input $\varepsilon$, the machine $M'$ first guesses some input word and then simulates $M$ on this input.
Formally, it executes the following instructions:
\begin{enumerate}
\item Make a non-deterministic choice to decide whether to go to (2) or to (3).
\item Replace the first occurrence of the blank symbol \blank in the input tape with some non-deterministically chosen symbol in the input alphabet of $M$. Then, go to (1).
\item Simulate $M$ on the (finite) word written down in the input tape.
During this simulation, visit $q_r'$ whenever $M$ would have visited $\qloop$.
\end{enumerate}
Note that there are infinite runs of $M'$ on $\varepsilon$ where the machine never executes Instruction~3.
This does not invalidate our reduction since $M'$ never visits $q_r'$ in these branches.

To show hardness, we present a reduction that maps a machine $M$ to another machine $M'$ such that $M$ is non-recurring through a state $\qloop$ on $\varepsilon$ if and only if $M'$ is universally non-recurring through a state $q_r'$.
The machine $M'$ first discards its input by replacing it with a special symbol that is treated like the blank symbol \blank.\footnote{We consider a special symbol here because, as per our definition, machines may not print the blank symbol \blank.}
Then, $M'$ simulates $M$ on $\varepsilon$; during this simulation, $M'$ visits $q_r'$ whenever $M$ would have visited $\qloop$.
\end{proof}

\begin{proposition}\label{prop:rec_mort_pi_one_one}
Checking if a machine is robust non-recurring through \qloop is $\Pi_1^1$-complete.
\end{proposition}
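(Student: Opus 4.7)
The plan is to establish both membership in and hardness for $\Pi_1^1$ by Turing reductions involving the universal non-recurrence problem, which is $\Pi_1^1$-complete by \cref{prop:univ_rec_halt_pi_one_one}.

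For membership, I would build, from an input machine $M$, a machine $M'$ that reads its input $w$ as the encoding of a configuration $\rho$ of $M$, writes the encoding on its tape, and then simulates $M$ from $\rho$, visiting its recurrence state $q_r'$ exactly when $M$ visits $\qloop$ (and halting immediately if $w$ is malformed). Since every configuration of $M$ admits some such encoding, $M'$ is universally non-recurring through $q_r'$ iff $M$ is robust non-recurring through $\qloop$; this places robust non-recurrence in $\Pi_1^1$.

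For hardness, I would reduce universal non-recurrence of $M$ to robust non-recurrence of $M'$. The difficulty, compared to the reduction in the previous proposition, is that the adversary now also chooses $M'$'s starting state and head position, so we cannot rely on $M'$ beginning in a designated initial state. To sidestep this, I would let $M'$'s tape jointly encode a candidate input word $w$ and a finite sequence of non-deterministic choices $c_1, \ldots, c_n$ describing a partial $M$-run on $w$. Regardless of its starting state, $M'$ systematically re-simulates $M$ on $w$ from scratch, applying $c_1, \ldots, c_n$ in order: if some $c_i$ does not describe a valid $M$-transition for the current state and symbol, $M'$ sinks silently; otherwise, $M'$ reads off $M$'s state after $n$ steps, visits $q_r'$ iff this state is $\qloop$, non-deterministically appends a new valid choice $c_{n+1}$ to the tape, and loops. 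Each infinite run of $M'$ then constructs an infinite $M$-run on $w$, and the multiplicity of $q_r'$-visits along the $M'$-run coincides with the multiplicity of $\qloop$-visits along the constructed $M$-run. Consequently, $M'$ is robust non-recurring through $q_r'$ iff, for every choice of $w$ and every run of $M$ on $w$, $\qloop$ is visited only finitely often, iff $M$ is universally non-recurring through $\qloop$.

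The main obstacle is ensuring that the ``re-simulate from scratch'' behavior is truly indifferent to $M'$'s starting state: each of its (few, fixed) internal states must, from any adversarially chosen tape and scratch content, either fall back cleanly to the canonical re-parse-and-re-simulate routine or sink without visiting $q_r'$. This is a routine universal-Turing-machine-style programming exercise; the adversary's choice of starting state contributes only a bounded prefix of behavior and hence at most finitely many spurious $q_r'$-visits, which does not affect the correspondence above.
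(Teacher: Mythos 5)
Your proposal is correct, and its membership half coincides with the paper's: both map $M$ to a machine that parses its input as the encoding of a configuration of $M$, halts on malformed input, and otherwise simulates $M$ from that configuration, thereby many-one reducing robust non-recurrence to universal non-recurrence (\cref{prop:univ_rec_halt_pi_one_one}). For hardness you take a mildly but genuinely different route. The paper reduces from non-recurrence on the empty word and keeps a single configuration $\rho$ of $M$ on the tape; robustness is obtained by having $M'$, in every loop iteration, non-deterministically re-simulate $M$ from its start on $\varepsilon$ and check whether it reaches $\rho$ — a guess-and-check reachability certificate — before visiting $q_r'$ and advancing $\rho$ by one step. You instead reduce from universal non-recurrence and store an explicit certificate: the word $w$ together with the full sequence of non-deterministic choices, which $M'$ deterministically replays before visiting $q_r'$ and appending the next choice. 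Both designs achieve the same end — an adversarially planted tape either passes verification, in which case it genuinely extends to a reachable $\qloop$-recurring run of $M$, or contributes only finitely many spurious $q_r'$-visits — but your deterministic replay sidesteps the caveat the paper must add about its reachability check itself being non-deterministic and possibly aborting or diverging on wrong guesses, at the price of a tape that records one choice per simulated step and a quadratic re-simulation overhead (which is harmless here). One small wording issue: you announce ``Turing reductions,'' but $\Pi_1^1$-membership requires the many-one (computable) reductions you in fact construct; Turing reducibility to a $\Pi_1^1$-complete set does not in general keep you inside $\Pi_1^1$.
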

\begin{proof}
To show membership, we present a reduction from a machine $M$ to a machine $M'$ such that $M$ is robust non-recurring through a state \qloop if and only if $M'$ is universally non-recurring through a state $q_r'$.
The machine $M'$ scans its input and halts if it does not encode a configuration of $M$.
Otherwise, $M'$ simulates $M$ starting on this input configuration; during this simulation, $M'$ visits $q_r'$ whenever $M$ would have visited $\qloop$.

To show hardness, we present a reduction from a machine $M$ to another machine $M'$ such that $M$ is non-recurring through a state $q_r$ on the empty word $\varepsilon$ if and only if $M'$ is robust non-recurring through a state $q_r'$.
The machine $M'$ executes the following instructions:
\begin{enumerate}
\item Halt if the input does not contain some configuration $\rho$ of $M$.
\item If the configuration in the tape $\rho$ features the special state \qloop, then visit $q_r'$.
\item After the encoding of $\rho$ in the tape,
(non-deterministically) simulate a run of $M$ on $\varepsilon$ until it terminates or you reach the configuration $\rho$.
If the run terminates, without finding $\rho$, halt. Otherwise, continue in (4).
\item If $\textsf{Next}_{M}(\rho)$ is empty, halt.
Otherwise, replace the configuration $\rho$ in the tape with a non-deterministically chosen configuration in $\textsf{Next}_{M}(\rho)$, and go to (1).
\end{enumerate}
Intuitively speaking, Instruction~3 implements a reachability check for the configuration $\rho$ in the tape.
That is, this procedure ensures that this configuration is reachable from the starting configuration of $M$ on the empty word $\varepsilon$ by some run. Note that the reachability check makes non-deterministic choices itself. So it can happen that $M'$ terminates early or even runs forever because it picks the wrong run in Instruction~3. This does not invalidate our reduction though since on those runs, $M'$ only visits $q_r'$ finitely many times. 

If $M'$ is robust non-recurring through $q_r'$, then it is also non-recurring on the starting configuration with the encoding of the starting configuration of $M$ on $\varepsilon$.
Since $M'$ uses non-determinism to simulate all possible runs $M$ on $\varepsilon$ and visits $q_r'$ whenever $M$ would have visited $q_r$, we conclude that $M$ is non-recurring through $q_r$ on $\varepsilon$.


Suppose that there is a configuration $\rho'$ of $M'$ that may lead to a run that visits $q_r'$ infinitely many times.
In turn, this implies that there is a configuration $\rho$ of $M$ that leads to a run of $M$ that visits $q_r$ infinitely many times.
Moreover, all the configurations of $M$ in this infinite run are reachable from the start configuration of $M$ on $\varepsilon$ because of the check implemented in Instruction~3.
Therefore, $M$ is recurring through $q_r$ on the empty word.
\end{proof}
\end{toappendix}

\section{Related Work}
\label{section:related-work}

\paragraph{Novel Notation}
The notation introduced in \cref{section:preliminaries} to refer to classes of terminating KBs and rule sets differs from previous literature \cite{anatomy-chase}; for instance, we write  $\TRR{\forall}$ instead of \CTaa.
Moreover, given some database \db, we do not consider a class such as \CTda \cite{anatomy-chase}, which contains a rule set $\aruleset$ if $\langle \aruleset, \db \rangle$ universally terminates for the restricted chase.
For our purposes, it is clearer to consider a single class of terminating KBs (such as $\TKR{\forall}$) instead of one class of terminating rule sets for every possible database because of the following result.

\begin{proposition}
   For a database $\db'$, a quantifier $Q \in \{\forall, \exists\}$, and a chase variant $\textit{var} \in \{\textit{obl}, \textit{rest}, \textit{core}\}$; 
  there is a many-one reduction from $\TerminatingKBs{Q}{var}$ to $\CT^\textit{var}_{D'Q}$ and vice-versa.
\end{proposition}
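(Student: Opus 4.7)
The plan is to prove the two reductions separately.

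The reduction from $\CT^\textit{var}_{D'Q}$ to $\TerminatingKBs{Q}{var}$ is immediate: the map $\Sigma \mapsto \langle \Sigma, D' \rangle$ is computable because $D'$ is fixed in the reduction, and by the very definition of $\CT^\textit{var}_{D'Q}$ we have $\Sigma \in \CT^\textit{var}_{D'Q}$ exactly when $\langle \Sigma, D' \rangle \in \TerminatingKBs{Q}{var}$. The real work lies in the converse direction.

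For the reduction from $\TerminatingKBs{Q}{var}$ to $\CT^\textit{var}_{D'Q}$, I would, given a KB $\langle \Sigma, \db \rangle$, construct a rule set $\Sigma'$ such that chasing $\langle \Sigma', D' \rangle$ simulates chasing $\langle \Sigma, \db \rangle$ after a single ``bootstrap'' step. The degenerate case $D' = \emptyset$ is trivial: no chase sequence can start, every rule set lies in $\CT^\textit{var}_{D'Q}$, and any constant map suffices. Otherwise, pick an atom $\alpha = P(\vec{c}) \in D'$. To prevent $\Sigma$'s rules from interacting with $D'$, first rename every predicate in $\Preds(\Sigma \cup \db) \cap \Preds(D')$ to a fresh predicate throughout $\Sigma$ and $\db$, obtaining $\Sigma^\ast$ and $D^\ast$; such a bijective predicate renaming preserves termination under all three chase variants. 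Then add a bootstrap rule $r_D : P(\vec{x}) \to \bigwedge_{a \in D^\ast} a$ with $\vec{x}$ a fresh tuple of variables, and set $\Sigma' = \Sigma^\ast \cup \{r_D\}$.

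For correctness, I would argue that by construction no body predicate of $\Sigma^\ast$ lies in $\Preds(D')$ and, symmetrically, no rule in $\Sigma^\ast$ can ever produce a $P$-atom. Thus the substitutions loading $r_D$ all come from the finitely many $P$-atoms already present in $D'$, and $r_D$ fires at most once for the restricted and core chase (because its ground head $D^\ast$ makes all remaining $r_D$-triggers obsolete right after the first application) and only finitely many times for the oblivious chase. After this finite bootstrap the fact set is $D' \cup D^\ast$, and since $D'$'s atoms are inert for $\Sigma^\ast$, the subsequent chase on $\langle \Sigma', D' \rangle$ is in step-by-step bijection with a chase on $\langle \Sigma^\ast, D^\ast \rangle$, i.e., on $\langle \Sigma, \db \rangle$ up to renaming. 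This bijection preserves finiteness and fairness for all three chase variants, yielding the equivalence both for $Q = \exists$ and $Q = \forall$.

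The main obstacle is the oblivious case: without the renaming step, a rule of $\Sigma$ generating new $P$-atoms could spawn arbitrarily many fresh triggers of $r_D$, causing the oblivious chase of $\langle \Sigma', D' \rangle$ to diverge even when $\langle \Sigma, \db \rangle$ terminates. The predicate-renaming trick closes precisely this loophole by guaranteeing that the bootstrap predicate $P$ is never re-produced, so the set of $r_D$-triggers remains bounded by the finite set $D'$.
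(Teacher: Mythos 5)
Your first direction and the overall architecture of the second (rename predicates, add a bootstrap rule that materialises a renamed copy of $\db$) match the paper's proof, but you omit the one ingredient that makes the correspondence between chase sequences exact: the paper additionally conjoins the ground renamed database $\bigwedge_{P(\vec{c})\in\db}P'(\vec{c})$ to the body of \emph{every} rule of the renamed rule set, so that no original rule can fire before the bootstrap has produced the copy of $\db$. Without this guard, your ``step-by-step bijection'' breaks for rules with empty bodies, which the paper's syntax permits (only heads are required to be non-empty; the paper's own bootstrap rule has an empty body). Concretely, take $\aruleset=\{\ \to\exists y.\,\pn{A}(y);\ \ \pn{A}(x)\to\exists y.\,\pn{R}(x,y)\wedge\pn{A}(y)\ \}$ and $\db=\{\pn{A}(c),\pn{R}(c,c)\}$: every trigger is already obsolete for $\db$, so $\langle\aruleset,\db\rangle\in\TKR{\forall}$. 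With $D'=\{\pn{P}(d)\}$ your $\Sigma'$ keeps both rules unchanged (no predicate overlaps with $D'$) and adds $\pn{P}(x)\to\pn{A}(c)\wedge\pn{R}(c,c)$. The restricted sequence that first fires the empty-body rule (creating $\pn{A}(n_1)$), then the bootstrap, then repeatedly fires $\pn{A}(x)\to\exists y.\,\pn{R}(x,y)\wedge\pn{A}(y)$ on $n_1,n_2,\dots$ is infinite and fair, because no atom $\pn{R}(n_i,\cdot)$ ever witnesses obsolescence and every loaded trigger eventually becomes obsolete. Hence $\Sigma'\notin\CT^{\textit{rest}}_{D'\forall}$ while $\langle\aruleset,\db\rangle\in\TKR{\forall}$, so the reduction is unsound for $Q=\forall$.

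The same oversight affects your degenerate case: when $D'=\emptyset$ it is not true that every rule set lies in $\CT^{\textit{var}}_{D'Q}$, since an empty-body rule is loaded for the empty fact set (e.g.\ $\{\ \to\exists y.\,\pn{E}(c,y);\ \ \pn{E}(x,y)\to\exists z.\,\pn{E}(y,z)\ \}$ diverges from $\emptyset$), so the constant map does not suffice there either. Both problems disappear once you add the paper's guard: then the only trigger loaded for $D'$ is the bootstrap, every sequence of $\langle\Sigma',D'\rangle$ is one bootstrap step followed by an isomorphic copy of a sequence of $\langle\aruleset,\db\rangle$, and the equivalence holds for all combinations of $Q$ and $\textit{var}$. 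The rest of your argument --- the renaming, the finitely many bootstrap triggers, the inertness of $D'$ --- is correct and in line with the paper.
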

\begin{proof}
  There is a many-one reduction $\CT^\textit{var}_{D'Q}$ to \TerminatingKBs{Q}{var} since, for a rule set \aruleset, we have that $\aruleset \in \CT^\textit{var}_{D'Q}$ if and only if $\langle \aruleset, \db' \rangle \in \TerminatingKBs{Q}{var}$.
To show that there is a many-one reduction in the other direction we describe a computable function that maps a KB $\kb = \langle \aruleset, \db \rangle$ into the rule set $\aruleset'$ such that $\kb \in \TerminatingKBs{Q}{var}$ if and only if $\aruleset' \in \CT^\textit{var}_{D'Q}$.
Namely, let $\aruleset'$ be the rule set that results from applying the following modifications to $\aruleset$: (i) replace all occurrences of every predicate $P$ with a fresh predicate $P'$, (ii) add the conjunction $\bigwedge_{P(\vec{c}) \in \db} P'(\vec{c})$ to the body of every rule, and (iii) add the rule $\to \bigwedge_{P(\vec{c}) \in \db} P'(\vec{c})$.
The reduction is correct because one can easily establish a one-to-one correspondence between the sequences of $\kb$ and those of $\langle \aruleset', \db' \rangle$ once we ignore the single trigger with $\to \bigwedge_{P(\vec{c}) \in \db} P'(\vec{c})$ at the beginning of every sequence of the latter KB.
Note that the sets of facts produced at subsequent steps of these corresponding sequences are identical modulo replacement of all occurrences of every predicate $P$ by $P'$.
\end{proof}

\renewcommand{\arraystretch}{1.1}
\begin{table}
\centering
\begin{tabular}{ | >{\centering\arraybackslash}p{6em} | >{\centering\arraybackslash}p{7em} | >{\centering\arraybackslash}p{7em} | >{\centering\arraybackslash}p{7em} | >{\centering\arraybackslash}p{7em} |}
\cline{2-5}
\multicolumn{1}{c|}{} & \multicolumn{2}{c|}{\textbf{KB}} & \multicolumn{2}{c|}{\textbf{Rule Set}} \\ \cline{2-5}
\multicolumn{1}{c|}{} & \textbf{Sometimes}	 & \textbf{Always} & \textbf{Sometimes} & \textbf{Always} \\ \hline
\textbf{Oblivious} & \multicolumn{2}{c|}{\re-complete \cite{chase-revisited}}  & \multicolumn{2}{c|}{\re-complete \cite{oblivious-rule-set-termination-undecidable,critical-instance}} \\ \hline
\textbf{Restricted} & \re-complete \cite{chase-revisited} & $\Pi^1_1$-complete & $\Pi_2^0$-complete \cite{anatomy-chase} & $\Pi_1^1$-complete \\ \hline
\textbf{Core} & \multicolumn{2}{c|}{\re-complete \cite{chase-revisited}} & \multicolumn{2}{c|}{$\Pi_2^0$-complete \cite{anatomy-chase}} \\ \hline
\end{tabular}
\caption{Undecidability status of the main decision problems related to chase termination; the results presented without citations refer to the main contributions of this article}
\label{table:checking-chase-termination}
\end{table}

\paragraph{Chase Termination in the General Case}
All decision problems related to chase termination are undecidable.
However, these are complete for different classes within the arithmetical and analytical hierarchies, as summarised in \cref{table:checking-chase-termination}.
In the following paragraphs, we discuss some simple proofs as well as the relevant references to understand all of the results in this table.

One can readily show via induction that, if a fact occurs in some oblivious chase sequence of some KB, then it also occurs in all oblivious chase sequences of this KB.
Hence, all such chase sequences of a KB yield the same result, and thus we conclude that $\TKO{\exists} = \TKO{\forall}$ and $\TRO{\exists} = \TRO{\forall}$.

\citeauthor{chase-revisited} proved that, if a KB admits a finite universal model, then all of its core chase sequences yield precisely this model and thus all of these sequences are finite; see Theorem~7 in \cite{chase-revisited}.
Regardless of the variant, all terminating chase sequences yield a (not necessarily minimal) finite universal model; hence, if a KB does not admit a finite universal model, then it does not admit any finite chase sequence.
Therefore, we have that either all core chase sequences of a KB are finite or all of them are infinite.
Because of this, we conclude that $\TKC{\exists} = \TKC{\forall}$ and $\TRC{\exists} = \TRC{\forall}$.

To understand why \TKO{\exists} (resp. \TKR{\exists} or \TKC{\exists}) is recursively enumerable (\re), consider the following procedure: given some input KB, compute all of its oblivious (resp. restricted or core) chase sequences in parallel and accept as soon as you find a finite one.
\citeauthor{chase-revisited} proved that \TKR{\exists} is \re-hard.
More precisely, they defined a reduction that takes a machine $M$ as input and produces a KB \kb as output such that $M$ halts on the empty word if and only \kb is in $\TKR{\exists}$; see Theorem~1 in \cite{chase-revisited}.
This reduction works because all restricted chase sequences of \kb yield the same result, which encodes the computation of $M$ on the empty word with a grid-like structure (as we ourselves do in later sections).
One can use the same reduction to show that \TKO{\exists} is also \re-hard.

\citeauthor{chase-revisited} also proved that \TKC{\exists} is \re-hard.
More precisely, they showed that checking if a KB admits a universal model is undecidable; see Theorem~6 in \cite{chase-revisited}.
Moreover, they proved that the core chase is a procedure that halts and yields a finite universal model for an input KB if this theory admits one; see Theorem~7 of the same paper.
Therefore, the core chase can be applied as a semi-decision procedure for checking if a KB admits a finite universal model.

In \cref{section:kb-termination}, we argue that \TKR{\forall} is $\Pi_1^1$-complete.
This contradicts Theorem~5.1 in \cite{anatomy-chase}, which states that  \TKR{\forall} is \re-complete. Specifically, it is claimed that this theorem follows from results in \cite{chase-revisited}, but the authors of that paper only demonstrate that \TKR{\forall} is undecidable without proving that it is in \re.
Before our completeness result, the tightest lower bound was proven by \citeauthor{DBLP:conf/kr/CarralLMT22}, who proved that this class is $\Pi^0_2$-hard; see Proposition~42 in \cite{DBLP:conf/kr/CarralLMT22}.

\citeauthor{critical-instance} proved that \TRO{\exists} is in \re.
More precisely, he showed that a rule set $\aruleset$ is in \TRO{\exists} if and only if the KB $\langle \aruleset, \db_\aruleset^\star\rangle$ is in \TKO{\exists} where $\db_\aruleset^\star = \{\pn{P}(\star, \ldots, \star) \mid \pn{P} \in \Preds(\aruleset)\}$ is the \emph{critical instance} and $\star$ is a special fresh constant; see Theorem~2 in \cite{critical-instance}.
This result follows because one can show that, for any database \db, the (only) result of the oblivious chase of $\langle \aruleset, \db_\aruleset^\star \rangle$ includes the (only) result of the oblivious chase of $\langle \aruleset, \db \rangle$ if we replace all syntactic occurrences of constants in the latter with $\star$.
Since \TKO{\exists} is in \re, we conclude that $\TRO{\exists}$ is also in this class.

\citeauthor{oblivious-rule-set-termination-undecidable} proved that \TRO{\exists} is \re-hard.
More precisely, they presented a reduction that takes a 3-counter machine $M$ as input and produces a rule set $\aruleset$ such that $M$ halts on $\varepsilon$ if and only if $\langle \aruleset, \db_\aruleset^\star\rangle$ is in $\TKO{\exists}$; see Lemma~6 in \cite{oblivious-rule-set-termination-undecidable}.\footnote{We do not think that it is possible to intuitively explain this reduction in a couple of lines. Go read this paper, it's worth it!}
Hence, $M$ halts on the $\varepsilon$ and only if $\aruleset$ is in $\TRO{\exists}$ by Theorem~2 in \cite{critical-instance}.
Furthermore, \citeauthor{DBLP:conf/ijcai/BednarczykFO20} showed that this hardness result holds even when we consider single-head binary rule sets; see Theorem~1.1 in \cite{DBLP:conf/ijcai/BednarczykFO20}.

To understand why \TRR{\exists} is in $\Pi_2^0$, consider the following semi-decision procedure that can access an oracle that decides the \re-complete class $\TKR{\exists}$:
given some input rule set $\aruleset$; iterate through every database $\db$, use the oracle to decide if $\langle \aruleset, \db\rangle$ is in $\TKR{\exists}$, and accept if this is not the case.
Consider an analogous procedure to understand why \TRC{\exists} is in $\Pi_2^0$.

\citeauthor{anatomy-chase} proved that \TRR{\exists} is $\Pi_2^0$-hard.
To show this, they defined two reductions that take a word rewriting system $R$ and a word $\vec{w}$ as input, and produce a rule set $\aruleset_R$ and a database $\db_{\vec{w}}$, respectively.
Then, they proved that $R$ terminates on $\vec{w}$ if and only if the KB $\langle \aruleset_R, \db_{\vec{w}} \rangle$ is in $\TKR{\exists}$; this claim holds because $\langle \aruleset_R, \db_{\vec{w}} \rangle$ only admits a single restricted chase result, which encodes all branches of computation of $R$ on $\vec{w}$ in an implicit tree-like structure.
Therefore, $R$ is uniformly terminating if $\aruleset_R$ is in $\TRR{\exists}$.
To ensure that $\aruleset_R$ is in $\TRR{\exists}$ if $R$ is uniformly terminating, \citeauthor{anatomy-chase} make use of ``flooding'', a technique used in earlier work dealing with datalog boundedness \cite{DBLP:journals/jacm/GaifmanMSV93}.
For a comprehensive presentation of this technique and its applications, see Section~2 of \cite{oblivious-rule-set-termination-undecidable-tr}.
Using the very same reduction, \citeauthor{anatomy-chase} also proved that \TRC{\exists} is $\Pi_2^0$-hard.

In Section~\ref{section:rule-set-termination}, we show that \TRR{\forall} is $\Pi_1^1$-complete.
This contradicts Theorem~5.16 in \cite{anatomy-chase}, where it is stated that this class is $\Pi_2^0$-complete.
The error in the upper-bound of this theorem arose from the assumption that \TKR{\forall} is in \re, which, as previously discussed, is not the case.
Regarding the lower bound, they consider an extended version of this class of rule sets where they allow the inclusion of a single ``denial constraint''; that is, an implication with an empty head that halts the chase if the body is satisfied during the computation of a chase sequence.
They prove that the always restricted halting problem for rule sets is $\Pi_2^0$-hard if one such constraint is allowed.
Our results imply that we do not need to consider such an extension to obtain a higher lower bound.

\paragraph{Chase Termination of Syntactic Fragments}
Undeterred by the undecidability results discussed above, researchers have proven we can decide chase termination if we consider syntactic fragments of existential rules for which query entailment is decidable \cite{DBLP:conf/pods/CalauttiGP15,DBLP:journals/mst/CalauttiP21,DBLP:journals/siamcomp/GogaczMP23,DBLP:conf/icdt/LeclereMTU19}.
Another way of checking termination in practice is to develop acyclicity and cyclicity notions; that is, sufficient conditions for termination and non-termination of the chase.
Indeed, experiments show that we can determine chase termination for a large proportion of real-world rule sets with these checks \cite{DBLP:conf/aaai/GerlachC23,DBLP:conf/kr/0002C23,DBLP:conf/ijcai/CarralDK17,DBLP:journals/jair/GrauHKKMMW13}.

\section{Knowledge base termination}
\label{section:kb-termination}

\begin{theorem}
\label{theorem:kb-completeness}
  The class $\TKR{\forall}$ is $\Pi_1^1$-complete.
\end{theorem}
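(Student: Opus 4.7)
The plan is to establish the upper and lower bounds separately.

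\textbf{Upper bound.} Membership in $\Pi^1_1$ follows by a direct translation: $\kb \in \TKR{\forall}$ iff there is no infinite restricted chase sequence of $\kb$. Encoding candidate infinite trigger sequences as functions $f : \nat \to \nat$, the statement ``$f$ encodes an infinite restricted chase sequence of $\kb$'' is arithmetical---specifically $\Pi^0_2$, since triggerhood and applicability of successive triggers are decidable while fairness is a $\forall\exists$ requirement---so its negation is $\Sigma^0_2$, and prefixing a universal second-order quantifier over $f$ yields a $\Pi^1_1$ formula defining $\TKR{\forall}$.

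\textbf{Lower bound.} I would reduce from the problem of deciding whether a machine $M$ is non-recurring through a state $\qloop$ on the empty word $\varepsilon$, which is $\Pi^1_1$-complete by the results developed in \cref{section:tm-reductions}. Given $M$, I construct a KB $\kb_M$ whose restricted chase sequences simulate partial runs of $M$ on $\varepsilon$, extended by an emergency-brake mechanism in the spirit of \cref{example:emergency-brake}. Configurations are encoded by chains of atoms along which transition rules non-deterministically extend the simulation by one configuration at a time. The key design element is that each time the simulation enters $\qloop$, a \emph{fresh} brake term is created, so that the brake mechanism is dynamic rather than hard-wired in the initial database.

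\textbf{Correctness and main obstacle.} The rules are tuned so that, at a configuration outside $\qloop$, a brake-rule trigger is loaded; fairness then forces the brake to be pulled after finitely many steps, and, analogously to \cref{example:emergency-brake}, this makes the current pool of brake terms ``real'' and discharges the outstanding existential triggers along the chain, halting the chase. Hence if every run of $M$ on $\varepsilon$ visits $\qloop$ only finitely often, every chase sequence of $\kb_M$ must terminate. Conversely, if $M$ admits a run visiting $\qloop$ infinitely often, a matching chase sequence can postpone braking indefinitely by producing new brake terms at every $\qloop$-visit, so that earlier fairness obligations are discharged by freshly derived facts without terminating the simulation, yielding an infinite chase sequence. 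The delicate point---and the main obstacle---is calibrating this dynamic brake so that every loaded brake trigger does become obsolete in finitely many steps along the infinite chase \emph{without} those same derivations forcing termination, so that an infinite chase is possible precisely in the recurrent case.
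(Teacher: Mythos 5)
Your membership argument takes a genuinely different route from the paper's: you compute the Tarski--Kuratowski form directly (``for all $f$, $f$ does not encode an infinite fair chase sequence'', with a $\Sigma^0_2$ matrix under a universal function quantifier), and this is correct --- fairness is indeed the only non-$\Pi^0_1$ ingredient and it is $\Pi^0_2$. The paper instead reduces $\TKR{\forall}$ to the $\Pi^1_1$-complete problem of non-recurrence through \qloop on the empty word, via a machine $\mathcal{M}_\aruleset$ (\cref{def:termination-aux-machine}) that simulates chase sequences and visits \qloop each time a fairness obligation is discharged; that construction is then reused verbatim for the rule-set case (\cref{lem:rs-termination-mem}), which your direct formula would not give for free. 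Your hardness plan coincides with the paper's: reduce from non-recurrence through \qloop on $\varepsilon$, encode configurations as chains extended one transition at a time, and spawn a \emph{fresh} emergency brake exactly at each \qloop-visit.

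The genuine gap sits exactly where you flag ``the main obstacle'', and naming it is not the same as closing it. For the direction ``every run of $M$ on $\varepsilon$ visits \qloop finitely often $\Rightarrow$ $\kb_M \in \TKR{\forall}$'' you argue forward: fairness pulls the brakes and ``halts the chase''. But the chase is non-deterministic, and a priori an unfair-looking adversarial sequence could keep manufacturing new obligations while old brakes are being pulled; what must actually be proved is the contrapositive, namely that an \emph{arbitrary} infinite fair chase sequence can be decoded into a run of $M$ visiting \qloop infinitely often. In the paper this takes three nontrivial steps: (i) once a brake is made real, only datalog rules can still fire on it (\cref{prop:brake}), so each brake supports only finitely many atoms and an infinite sequence must create infinitely many brakes (\cref{lem:finitely_many_atoms_per_brake}); (ii) the state atoms form a finitely-branching forest under the ``precedes'' relation, so K\"onig's lemma yields an infinite branch containing infinitely many \qloop-atoms (\cref{lem:infinite_sequence_of_state_atoms}); and (iii) the configurations attached to that branch constitute a legal run of $M$ (\cref{lem:RM_simulates}). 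Without some argument of this kind, a spurious infinite chase sequence for a non-recurring machine is not excluded, and the reduction is not established in the direction that hardness requires.
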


The theorem immediately follows from the upcoming \cref{lem:kb-termination-mem} and \cref{lem:kb-termination-hard}.

For the membership part, we define a non-deterministic Turing machine that loops on \qloop if and only if there is a non-terminating chase sequence for a given rule set. 

\begin{definition}\label{def:termination-aux-machine}
  Consider a rule set \aruleset. 
  For a fact set $F$, let $\textit{active}(F)$ be the set of all triggers with a rule in \aruleset that are loaded and not obsolete for $F$.
  Let $\mathcal{M}_\aruleset$ be a non-deterministic Turing machine with start state \qinit
  and a designated state \qloop
  that executes the following procedure.

  \begin{enumerate}
    \item Check if the input tape contains a valid encoding of a database. If not, halt.
    \item Initialize two counters $i = j = 0$ and a set of facts $F_0$ containing exactly the encoded database.
    \item If $\textit{active}(F_i)$ is empty, halt. \label{procedure:loop}
    \item Non-deterministically pick a trigger $\langle \arule, \subs \rangle$ from $\textit{active}(F_i)$
      and let $F_{i+1} = F_i \cup \subs'(\head(\arule))$ where $\subs'$ extends $\subs$ by mapping existential variables in $\arule$ to fresh nulls (not occurring in $F_i$).
    \item If all triggers in $\textit{active}(F_j)$ are obsolete for $F_i$, then increment $j$ and visit \qloop once. 
    \item Increment $i$ and go to \ref{procedure:loop}.
  \end{enumerate}
\end{definition}

\begin{lemma}\label{lem:termination-aux-machine}
  For every database \db and rule set \aruleset, 
  there is a run of $\mathcal{M}_\aruleset$ on the encoding of \db that visits \qloop infinitely often 
  if and only if there is a non-terminating chase sequence for $\langle \aruleset, \db \rangle$.
\end{lemma}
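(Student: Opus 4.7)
The plan is to verify each direction of the biconditional by exploiting the machine's design: the counter $i$ tracks the length of the simulated chase derivation, while each increment of $j$ (and associated visit to $\qloop$) certifies fairness for one more set of previously active triggers.

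For the right-to-left direction, suppose $\langle \aruleset, \db \rangle$ admits a non-terminating chase sequence $G_0, G_1, \dots$. I would construct a run of $\mathcal{M}_\aruleset$ that at each step picks the very trigger applied at that step in the given sequence; this is legal by definition of a chase derivation, so the run reproduces $F_i = G_i$ for all $i$. To show that $\qloop$ is visited infinitely often, fix any $j \geq 0$. Since $\aruleset$ is finite and $F_j$ contains finitely many terms, $\textit{active}(F_j)$ is finite. By fairness, each trigger in $\textit{active}(F_j)$ is obsolete at some later step; since obsoleteness is preserved under fact-set extension, there is a \emph{single} step $i \geq j$ at which all triggers of $\textit{active}(F_j)$ are simultaneously obsolete in $F_i$. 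When the simulation reaches that $i$, step~5 of the procedure fires, $\qloop$ is visited, and $j$ is incremented. Iterating produces infinitely many $\qloop$-visits.

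For the left-to-right direction, suppose some run of $\mathcal{M}_\aruleset$ on $\db$ visits $\qloop$ infinitely often. Then $j$ grows unboundedly, so the run never halts and produces an infinite sequence $F_0, F_1, \dots$. By construction each $F_{i+1}$ extends $F_i$ using a trigger that is loaded and not obsolete for $F_i$, so this is a chase derivation. To see that fairness holds, fix any $\aruleset$-trigger $\lambda$ and any index $i_0$ at which $\lambda$ is loaded. If $\lambda$ is already obsolete for $F_{i_0}$, monotonicity of obsoleteness yields obsoleteness in every later $F_i$. Otherwise $\lambda \in \textit{active}(F_{i_0})$; since $\qloop$ is visited infinitely often, $j$ eventually reaches the value $i_0$, and by the precondition of step~5 there is then some $i \geq i_0$ at which every trigger in $\textit{active}(F_{i_0})$, including $\lambda$, is obsolete for $F_i$. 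Thus $F_0, F_1, \dots$ is a non-terminating chase sequence.

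The delicate point, more than an obstacle, is the correct reading of the $j$-counter: a single $\qloop$-visit does not discharge one trigger but the entire finite set $\textit{active}(F_j)$ at once, and it is the monotonicity of both ``loaded'' and ``obsolete'' under fact-set extension that lets this bookkeeping actually capture fairness, without any need to reconsider triggers that (re)appear as loaded later in the derivation.
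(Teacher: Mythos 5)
Your proof is correct and follows essentially the same approach as the paper's: simulate the given chase sequence by the machine (resp.\ read off the sequence from the run) and identify fairness with the unbounded growth of the $j$-counter. You merely make explicit two points the paper leaves implicit, namely the finiteness of $\textit{active}(F_j)$ and the monotonicity of obsoleteness under fact-set extension, which justify that a single later step witnesses obsoleteness for the whole set at once.
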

\begin{proof}
  Assume that there is a run of $\mathcal{M}_\aruleset$ on the encoding of \db that visits \qloop infinitely many times.
  Then, the sequence $F_0, F_1, \dots$ constructed by $\mathcal{M}_\aruleset$ is an infinite restricted chase derivation for $\langle \aruleset, \db \rangle$ by construction.
  Since \qloop is visited infinitely many times, $j$ grows towards infinity. Therefore, every trigger that is loaded for some $F_j$ with $j \geq 0$ is obsolete for some $i \geq j$; which is exactly \emph{fairness}.
  Hence, the infinite derivation is a proper chase sequence.

  Assume that there is an infinite chase sequence $F_0, F_1, \dots$ for $\langle \aruleset, \db \rangle$.
  By definition, for each $i \geq 0$, there is a trigger $\lambda \in \textit{active}(F_{i})$ that yields $F_{i+1}$.
  Hence, there is a run of $\mathcal{M}_\aruleset$ that non-deterministically picks these triggers.
  Because of fairness, for every trigger $\lambda$ in $\textit{active}(F_j)$ with $j \geq 0$, there is $i \geq j$ such that $\lambda$ is obsolete for $F_i$. 
  Hence, the run of $\mathcal{M}_\aruleset$ visits \qloop infinitely often.
\end{proof}

\begin{lemma}\label{lem:kb-termination-mem}
	Deciding membership in $\TKR{\forall}$ is in $\Pi_1^1$.
\end{lemma}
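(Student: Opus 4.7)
The plan is to reduce membership in $\TKR{\forall}$ to a $\Pi_1^1$ problem about Turing machines, leveraging the machine $\mathcal{M}_\aruleset$ constructed in \cref{def:termination-aux-machine}. Given an input KB $\langle \aruleset, \db \rangle$, I would effectively construct $\mathcal{M}_\aruleset$ from $\aruleset$ and encode $\db$ on its input tape; both operations are computable. By \cref{lem:termination-aux-machine}, $\langle \aruleset, \db \rangle \notin \TKR{\forall}$ if and only if $\mathcal{M}_\aruleset$ admits a run on the encoding of $\db$ that visits $\qloop$ infinitely often. Hence $\langle \aruleset, \db \rangle \in \TKR{\forall}$ if and only if $\mathcal{M}_\aruleset$ is non-recurring through $\qloop$ on input $\db$.

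It remains to argue that checking non-recurrence of a machine through a distinguished state on a given input word is in $\Pi_1^1$. This follows the same recipe as \cref{prop:rec_halt_pi_one_one}: the condition can be written as a second-order universally quantified statement of the shape ``for every infinite sequence of configurations, if it encodes a valid run of the machine starting from the configuration determined by the input, then only finitely many of its configurations feature $\qloop$'', which is a $\Pi_1^1$ formula. Alternatively, one can reduce directly to \cref{prop:rec_halt_pi_one_one} by building a machine $\mathcal{M}'$ which, on the empty word, first writes the fixed encoding of $\db$ on its tape and then simulates $\mathcal{M}_\aruleset$, visiting a fresh state $q_r'$ exactly when $\mathcal{M}_\aruleset$ would visit $\qloop$.

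Composing the two reductions gives a many-one reduction from $\TKR{\forall}$ to the complement of the non-recurrence problem on the empty word, which is in $\Pi_1^1$ by \cref{prop:rec_halt_pi_one_one}. Since $\Pi_1^1$ is closed under many-one reductions, this yields the desired upper bound. The heavy lifting has already been done in \cref{def:termination-aux-machine,lem:termination-aux-machine}, where the counter $j$ together with the designated state $\qloop$ faithfully captures fairness; the remaining work is essentially bookkeeping, and I do not anticipate any real obstacle beyond stating the reduction cleanly.
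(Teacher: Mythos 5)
Your proposal is correct and matches the paper's argument: the paper likewise builds a machine that first writes the encoding of $\db$ onto the tape and then runs $\mathcal{M}_\aruleset$, and invokes \cref{lem:termination-aux-machine} together with the $\Pi_1^1$-ness of non-recurrence through $\qloop$ on the empty word (\cref{prop:rec_halt_pi_one_one}). Your second alternative (the machine $\mathcal{M}'$) is exactly the paper's construction $\mathcal{M}^D_\aruleset$.
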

\begin{proof}
  We show a reduction to non-recurrence through \qloop on the empty word.
  For a given rule set \aruleset, let $\mathcal{M}^D_\aruleset$ be a non-deterministic Turing machine 
  that results from $\mathcal{M}_\aruleset$ by adding an initial step that replaces the initial tape content by an encoding of \db. 
  Then, by \cref{lem:termination-aux-machine}, $\aruleset$ is in $\TKR{\forall}$ if and only if no run of $\mathcal{M}^D_\aruleset$ on the empty input visits \qloop infinitely many times.
\end{proof}

\begin{lemma}\label{lem:kb-termination-hard}
	The class $\TKR{\forall}$ is $\Pi_1^1$-hard.
\end{lemma}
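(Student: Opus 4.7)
The plan is to reduce from the $\Pi_1^1$-complete problem of \cref{prop:rec_halt_pi_one_one}: given a machine $M$ with designated state $\qloop$, decide whether $M$ is non-recurring through $\qloop$ on the empty word. We describe a computable map sending $M$ to a knowledge base $\kb_M = \langle \aruleset_M, \db_M \rangle$ such that $\kb_M \in \TKR{\forall}$ iff $M$ is non-recurring through $\qloop$ on $\varepsilon$. Configurations of $M$ would be encoded by a grid-shaped fact structure produced with fresh nulls, in the classical style of chase-simulation reductions (see \cite{chase-revisited}); each non-deterministic transition of $M$ is modelled by an existential rule that extends the grid by one column representing the successor configuration, so that (partial) runs of $M$ on $\varepsilon$ correspond to (partial) chase derivations of $\kb_M$.

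The crucial additional ingredient, lifting the reduction from $\re$-hardness to $\Pi_1^1$-hardness, is a chained version of the emergency brake of \cref{example:emergency-brake}. Atop the simulation layer, every simulated step would generate a new existential trigger whose head can be obsoleted by reusing the current \emph{brake} term once this term is declared $\pn{Real}$; fairness then forces the current brake to eventually become $\pn{Real}$, at which point all outstanding obligations collapse and the chase terminates. Whenever the simulated machine visits $\qloop$, however, the rules would install a fresh brake and simultaneously obsolete the previous brake's fairness obligation via an alternate derivation of $\pn{Real}$ on the old brake. Thus a run of $M$ that visits $\qloop$ infinitely often supports a fair infinite chase in which each brake is superseded before it has to fire, whereas a run that visits $\qloop$ only finitely often leaves a final dangling brake whose fairness obligation must eventually terminate the chase.

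Putting these together, one obtains the desired equivalence: $\kb_M \in \TKR{\forall}$ iff every run of $M$ on $\varepsilon$ visits $\qloop$ only finitely often. Since non-recurrence of a machine on the empty word is $\Pi_1^1$-hard by \cref{prop:rec_halt_pi_one_one}, so is membership in $\TKR{\forall}$, completing the lemma.

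The main obstacle will be calibrating the interaction between the simulation rules and the brake rules so that three properties hold at once: every infinite recurring run lifts to a genuinely \emph{fair} infinite chase derivation in which no loaded trigger is overlooked forever; every non-recurring run is forced by fairness to terminate after the last $\qloop$ visit; and no chase derivation can deviate from the intended semantics by, for instance, firing a brake too early or simulating a spurious computation not corresponding to any run of $M$. Getting fairness to behave exactly as intended through a potentially infinite chain of superseded brakes, without either prematurely forcing termination or leaving triggers perpetually unsatisfied, is the subtle part of the construction, and is precisely what the emergency brake pattern of \cref{example:emergency-brake}, chained across $\qloop$ visits, is designed to accomplish.
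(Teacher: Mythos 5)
Your plan matches the paper's proof in all essentials: it reduces from non-recurrence through \qloop on the empty word (\cref{prop:rec_halt_pi_one_one}), simulates the machine by growing a grid of nulls, and chains emergency brakes so that a fresh brake is installed exactly at each \qloop visit, which lets fairness fire each superseded brake harmlessly while forcing termination once \qloop stops recurring. The only nuance is that the paper does not discharge the old brake's obligation at the \qloop step itself via an ``alternate derivation'' of \pn{Real} --- the old brake's saturation rule is simply applied later, after the active frontier has migrated to the new brake --- but this is the same mechanism you describe.
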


To prove hardness, we reduce non-recurrence through \qloop on the empty word to knowledge base termination. In other words, to a Turing machine $M$, we will associate a database $D_\varepsilon$ and a rule set $\aruleset_M$ such that there exists a run of $M$ on the empty word reaching \qloop infinitely often if and only if the restricted chase of $\aruleset_M$ on $D_\varepsilon$ does not halt. 

A perhaps surprising feature of this reduction is that the restricted chase must halt for rule sets generated from Turing machines that do not halt on the empty word, as long as they reach \qloop only finitely often. As we cannot get any computable bound on the number of steps required to reach \qloop, we must simulate any finite run of the Turing machine in a terminating way. This calls for the use of \emph{emergency brakes} as presented in the introduction. We ``stack'' such brakes, each one being responsible to prevent the non-termination for runs that do not go through \qloop. 

\paragraph{Schema} We will make use of the following predicates. Note that the last position 
usually holds an emergency brake.
We introduce: For each letter $a$ in the Turing machine alphabet or equal to the blank \blank, a binary predicate $\preda$.
For each state $q$ of the Turing machine, a binary predicate $\predq$.
Two ternary predicates $\predfuture$ and $\predright$, that encode the successor relation for time and for cells.
Two binary predicates $\predcopyleft$ and $\predcopyright$, used to copy tapes content.
A unary predicate $\predreal$ and a binary predicate $\pn{NextBr}$, used for the machinery of emergency brakes.
Two unary predicates $\pn{Brake}$ and $\pn{End}$ to identify terms used as emergency brakes and the last element of a configuration, respectively.

Each time a new term is created during the chase, we link it in a specific way to the relevant brake. To simplify the subsequent presentation, we denote by $\brake(x,w)$ the set of atoms $\{\predfuture(x,w,w),\predright(x,w,w),\predreal(x),\predbrake(w)\}$.
The remainder of this section is devoted to the reduction from the ``non-recurrence through \qloop'' problem to knowledge base restricted chase termination. We first present the reduction, and then focus on the main ideas required to show correctness. 

\paragraph{The Reduction}

Each configuration $\rho$ of a Turing machine is encoded by a database as follows.

\begin{definition}
\label{def-reduction-database}
	The database $D_\rho$ encoding a configuration $\rho=\langle n,t,p,q \rangle$ is
	\begin{align*}
		D_{\rho}=&\setst{\predright(c_i,c_{i+1},w_1), \predname{a_i}(c_i,w_1)}{1\leq i\leq n; \predname{a_i} = t(i)} \cup \set{\predq(c_p,w_1), \predname{\blank}(c_{n+1},w_1),\predend(c_{n+1},w_1)}\\
		&\cup \bigcup_{1 \leq i \leq n+1} \brake(c_i,w_1) 
	\end{align*}
  For a word $w$, we denote by $D_w$ the database $D_{\rho_w}$, where $\rho_w$ is the initial configuration of $M$ on $w$.
\end{definition}


Given a Turing machine $M$ with states $Q$ and tape alphabet $\Gamma$, we build $\aruleset_M$ composed of the following rules. We first have a set of rules required for setting up emergency brakes.


\begin{align*}
	\predbrake(w) \to \bigwedge_{\predname{a} \in \Gamma \cup \{\blank\}} \predname{a}(w,w), \bigwedge_{\predname{q} \in Q} \predname{q}(w,w), &\predfuture(w,w,w), \predright(w,w,w), \\
    &\predcopyleft(w,w), \predcopyright(w,w), \predreal(w), \nextBr(w,w) \label{rule:brake} \tag{\ensuremath{R_\predbrake}} \\
  \brake(x,w), \nextBr(w,w') & \to \brake(x,w') \label{rule:nextBr} \tag{\ensuremath{R_\pn{nextBr}}}
\end{align*}

The next four rules are responsible of simulating the moves of the head of the Turing machine. The first two rules deal with the case where the machine is not in \qloop, and the head moves to the right (resp. to the left). The important feature of these rules is the presence in both the body and the head of the same brake $w$.

\begin{align*}
	\intertext{For all $q \neq \qloop, q' \in Q$ and $a,b,c \in \Gamma \cup \{\blank\}$ such that $(q',b,\goright)\in\delta(q,a)$:}
	&\predq(x,w),\preda(x,w),\predright(x,y,w),\predc(y,w),\brake(x,w),\brake(y,w)
	\\
	&\to \exists x',y'~ \predqp(y',w),\predc(y',w),\predb(x',w),\predcopyleft(x',w),\predcopyright(y',w), \label{rule:regular-right}  \tag{\ensuremath{R_{\neg \qloop}^\goright}}\\
	&\qquad\predright(x',y',w),\predfuture(x,x',w),\predfuture(y,y',w),\brake(x',w), \brake(y',w)\\
	\intertext{For all $q \neq \qloop, q' \in Q$ and $a,b,c \in \Gamma \cup \{\blank\}$ such that $(q',b,\goleft)\in\delta(q,a)$:}
	&\predq(x,w),\preda(x,w),\predright(y,x,w),\predc(y,w),\brake(x,w),\brake(y,w)
	\\
	&\to \exists x',y'~ \predqp(y',w),\predc(y',w),\predb(x',w),\predcopyleft(y',w),\predcopyright(x',w), \label{rule:regular-left} \tag{\ensuremath{R_{\neg \qloop}^\goleft}}\\
	&\qquad\predright(y',x',w),\predfuture(x,x',w),\predfuture(y,y',w),\brake(x',w), \brake(y',w)
\end{align*}

The following two rules treat the case where the transition is from \qloop. The only difference with the two above rules is the introduction of a new brake $w'$ in the head of the rules. This permits non-terminating restricted chase sequences in the presence of specific runs. 

\begin{align*}
    \intertext{For all $q' \in Q$ and $a,b,c \in \Gamma \cup \{\blank\}$ such that $(q',b,\goright)\in\delta(\qloop,a)$:}
	&\predqloop(x,w),\predright(x,y,w),\preda(x,w),\predc(y,w),\brake(x,w),\brake(y,w) \\
	&\to \exists x',y',w',\ \predqp(y',w'),\predc(y',w'),\predb(x',w'),\predright(x',y',w'), \label{rule:init-right} \tag{\ensuremath{R_{\qloop}^\goright}}\\
	&\qquad\predfuture(x,x',w'),\predfuture(y,y',w'),\predcopyleft(x',w'),\predcopyright(y',w'),\\
	&\qquad\brake(x',w'),\brake(y',w'), \nextBr(w,w')\\
	\intertext{For all $q' \in Q$ and $a,b,c \in \Gamma \cup \{\blank\}$ such that $(q',b,\goleft)\in\delta(\qloop,a)$:}
	&\predqloop(x,w),\predright(y,x,w),\preda(x,w),\predc(y,w),\brake(x,w),\brake(y,w) \\
	&\to \exists x',y',w',\ \predqp(y',w'),\predc(y',w'),\predb(x',w'),\predright(y',x',w'), \label{rule:init-left}\tag{\ensuremath{R_{\qloop}^\goleft}}\\
	&\qquad\predfuture(x,x',w'),\predfuture(y,y',w'),\predcopyleft(y',w'),\predcopyright(x',w'),\\
	&\qquad\brake(x',w'),\brake(y',w'), \nextBr(w,w')
\end{align*}

The following rules copy the content of unchanged cells to the right and the left of the head from one configuration to the next.
We instantiate one of each rule for each $a \in \Gamma \cup \{\blank\}$.

\begin{align*}
	\predcopyright(x',w'&),\predfuture(x,x',w'),\predright(x,y,w),\preda(y,w),\brake(x,w),\brake(x',w'),\brake(y,w)\\ 
	&\to \exists y'~\ \predfuture(y,y',w'),\predright(x',y',w'),\preda(y',w'),\predcopyright(y',w'),
	\brake(y',w') \label{rule:copyright} \tag{\ensuremath{R_\predcopyright}}\\
  	\predcopyleft(x',w'&),\predfuture(x,x',w'),\predright(y,x,w),\preda(y,w),\brake(x,w),\brake(x',w'),\brake(y,w)\\ 
	&\to \exists y'~\ \predfuture(y,y',w'),\predright(y',x',w'),\preda(y',w'),\predcopyleft(y',w'),
	\brake(y',w') \label{rule:copyleft} \tag{\ensuremath{R_\predcopyleft}}
\end{align*}


Finally, we extend the represented part of the configuration by one cell at each step, as coherent with our definition of Turing machine runs:
\begin{align*}
	&\predcopyright(x',w'),\predfuture(x,x',w'),\predend(x,w),\brake(x,w),\brake(x',w')\\
	&\to \exists y',\ \predright(x',y',w'),\predname{\blank}(y',w'),\predend(y',w'),\brake(y',w') \label{rule:end} \tag{\ensuremath{R_\predend}}
\end{align*}

%

{ 
\definecolor{bleudefrance}{rgb}{0.19, 0.55, 0.91}

\newcommand*{\wsep}{1.8}
\newcommand*{\wa}{-0.7*\wsep}
\newcommand*{\wb}{1*\wsep}
\newcommand*{\wc}{2*\wsep}
\newcommand*{\we}{3*\wsep}
\newcommand*{\wf}{4*\wsep}
\newcommand*{\wg}{5*\wsep}
\newcommand*{\wh}{6*\wsep}
\newcommand*{\wi}{7*\wsep}

\newcommand*{\vsep}{0.5}
\newcommand*{\va}{12*\vsep}
\newcommand*{\vb}{11*\vsep}
\newcommand*{\vc}{10*\vsep}
\newcommand*{\ve}{9*\vsep}
\newcommand*{\vf}{8*\vsep}
\newcommand*{\vg}{7*\vsep}
\newcommand*{\vh}{6*\vsep}
\newcommand*{\vi}{5*\vsep}
\newcommand*{\vj}{4*\vsep}
\newcommand*{\vk}{3*\vsep}
\newcommand*{\vl}{2*\vsep}
\newcommand*{\vm}{1*\vsep}
\newcommand*{\vn}{0*\vsep}

\begin{figure}
\begin{tikzpicture}
[term/.style={draw, fill=black, circle, inner sep=1pt},
termLabel/.style={inner sep=1pt, outer sep=3pt, fill=white},
binary/.style={->, thick},
arrowLabel/.style={inner sep=1pt, fill=white}]

\draw[dashed, bleudefrance, line width=1.5pt] (-0.9*\wsep, \va+\vsep/2) rectangle  (\wc+\wsep/3, \vb+\vsep/2);
\draw[dashed, bleudefrance, line width=1.5pt] (-0.9*\wsep, \vc+\vsep/2) rectangle  (\wf+\wsep/3, \vf-\vsep/2);
\draw[dashed, bleudefrance, line width=1.5pt] (-0.9*\wsep, \vh+\vsep/2) rectangle  (\wh+\wsep/3, \vj-\vsep/2);

\node[term] (ba) at (\wb, \va) {};
\node[termLabel, above right] at (ba) {\pn{\qinit},\pn{0}};
\node[term] (ba) at (\wb, \va) {};
\node[term] (ca) at (\wc, \va) {};
\node[termLabel, above right] at (ca) {\pn{\blank},\predend};
\node[term] (ca) at (\wc, \va) {};

\node[term] (bc) at (\wb, \vc) {};
\node[termLabel, above right] at (bc) {\pn{1},\predcopyleft};
\node[term] (bc) at (\wb, \vc) {};
\node[term] (cc) at (\wc, \vc) {};
\node[termLabel, above right] at (cc) {\pn{\qloop},\pn{\blank},\predcopyright};
\node[term] (cc) at (\wc, \vc) {};
\node[term] (ec) at (\we, \vc) {};
\node[termLabel, above right] at (ec) {\pn{\blank},\predend};
\node[term] (ec) at (\we, \vc) {};

\node[term] (bf) at (\wb, \vf) {};
\node[termLabel, above right] at (bf) {\pn{\qinit},\pn{1},\predcopyleft};
\node[term] (bf) at (\wb, \vf) {};
\node[term] (cf) at (\wc, \vf) {};
\node[termLabel, above right] at (cf) {\pn{1},\predcopyright};
\node[term] (cf) at (\wc, \vf) {};
\node[term] (ef) at (\we, \vf) {};
\node[termLabel, above right] at (ef) {\blank,\predcopyright};
\node[term] (ef) at (\we, \vf) {};
\node[term] (ff) at (\wf, \vf) {};
\node[termLabel, above right] at (ff) {\pn{\blank},\predend};
\node[term] (ff) at (\wf, \vf) {};

\node[term] (bh) at (\wb, \vh) {};
\node[termLabel, above right] at (bh) {\pn{1},\predcopyleft};
\node[term] (bh) at (\wb, \vh) {};
\node[term] (ch) at (\wc, \vh) {};
\node[termLabel, above right] at (ch) {\pn{\qloop},\pn{1},\predcopyright};
\node[term] (ch) at (\wc, \vh) {};
\node[term] (eh) at (\we, \vh) {};
\node[termLabel, above right] at (eh) {\blank,\predcopyright};
\node[term] (eh) at (\we, \vh) {};
\node[term] (fh) at (\wf, \vh) {};
\node[termLabel, above right] at (fh) {\blank,\predcopyright};
\node[term] (fh) at (\wf, \vh) {};
\node[term] (gh) at (\wg, \vh) {};
\node[termLabel, above right] at (gh) {\pn{\blank},\predend};
\node[term] (gh) at (\wg, \vh) {};

\node[term] (bj) at (\wb, \vj) {};
\node[termLabel, above right] at (bj) {\pn{\qinit},\pn{1},\predcopyleft};
\node[term] (bj) at (\wb, \vj) {};
\node[term] (cj) at (\wc, \vj) {};
\node[termLabel, above right] at (cj) {\pn{1},\predcopyright};
\node[term] (cj) at (\wc, \vj) {};
\node[term] (ej) at (\we, \vj) {};
\node[termLabel, above right] at (ej) {\pn{\blank},\predcopyright};
\node[term] (ej) at (\we, \vj) {};
\node[term] (fj) at (\wf, \vj) {};
\node[termLabel, above right] at (fj) {\pn{\blank},\predcopyright};
\node[term] (fj) at (\wf, \vj) {};
\node[term] (gj) at (\wg, \vj) {};
\node[termLabel, above right] at (gj) {\pn{\blank},\predcopyright};
\node[term] (gj) at (\wg, \vj) {};
\node[term] (hj) at (\wh, \vj) {};
\node[termLabel, above] at (hj) {\pn{\blank},\predend};
\node[term] (hj) at (\wh, \vj) {};

\node[term] (ab) at (\wa, \va) {};
\node[termLabel, below right] at (ab) {\pn{Brake}, \pn{Real}};
\node[term] (ab) at (\wa, \va) {};

\node[term] (ag) at (\wa, \ve) {};
\node[termLabel, below right] at (ag) {\pn{Brake}, \pn{Real}};
\node[term] (ag) at (\wa, \ve) {};

\node[term] (ak) at (\wa, \vi) {};
\node[termLabel, below right] at (ak) {\pn{Brake}, \pn{Real}};
\node[term] (ak) at (\wa, \vi) {};

\node (bl) at (\wb, \vl) {};
\node (cl) at (\wc, \vl) {};
\node (el) at (\we, \vl) {};
\node (fl) at (\wf, \vl) {};
\node (gl) at (\wg, \vl) {};
\node (hl) at (\wh, \vl) {};



\path[binary] (ba) edge node[arrowLabel, near end] {\pn{R}} (ca);

\path[binary] (bc) edge node[arrowLabel, near end] {\pn{R}} (cc);
\path[binary] (cc) edge node[arrowLabel, near end] {\pn{R}} (ec);

\path[binary] (bf) edge node[arrowLabel, near end] {\pn{R}} (cf);
\path[binary] (cf) edge node[arrowLabel, near end] {\pn{R}} (ef);
\path[binary] (ef) edge node[arrowLabel, near end] {\pn{R}} (ff);

\path[binary] (bh) edge node[arrowLabel, near end] {\pn{R}} (ch);
\path[binary] (ch) edge node[arrowLabel, near end] {\pn{R}} (eh);
\path[binary] (eh) edge node[arrowLabel, near end] {\pn{R}} (fh);
\path[binary] (fh) edge node[arrowLabel, near end] {\pn{R}} (gh);

\path[binary] (bj) edge node[arrowLabel, near end] {\pn{R}} (cj);
\path[binary] (cj) edge node[arrowLabel, near end] {\pn{R}} (ej);
\path[binary] (ej) edge node[arrowLabel, near end] {\pn{R}} (fj);
\path[binary] (fj) edge node[arrowLabel, near end] {\pn{R}} (gj);
\path[binary] (gj) edge node[arrowLabel, near end] {\pn{R}} (hj);


\path[binary] (ba) edge node[arrowLabel,yshift=1.5mm] {\pn{F}} (bc);
\path[binary] (ca) edge node[arrowLabel,yshift=1.5mm] {\pn{F}} (cc);

\path[binary] (bc) edge node[arrowLabel,yshift=1.5mm] {\pn{F}} (bf);
\path[binary] (cc) edge node[arrowLabel,yshift=1.5mm] {\pn{F}} (cf);
\path[binary] (ec) edge node[arrowLabel,yshift=1.5mm] {\pn{F}} (ef);

\path[binary] (bf) edge node[arrowLabel,yshift=1.5mm] {\pn{F}} (bh);
\path[binary] (cf) edge node[arrowLabel,yshift=1.5mm] {\pn{F}} (ch);
\path[binary] (ef) edge node[arrowLabel,yshift=1.5mm] {\pn{F}} (eh);
\path[binary] (ff) edge node[arrowLabel,yshift=1.5mm] {\pn{F}} (fh);

\path[binary] (bh) edge node[arrowLabel,yshift=1.5mm] {\pn{F}} (bj);
\path[binary] (ch) edge node[arrowLabel,yshift=1.5mm] {\pn{F}} (cj);
\path[binary] (eh) edge node[arrowLabel,yshift=1.5mm] {\pn{F}} (ej);
\path[binary] (fh) edge node[arrowLabel,yshift=1.5mm] {\pn{F}} (fj);
\path[binary] (gh) edge node[arrowLabel,yshift=1.5mm] {\pn{F}} (gj);

\path[binary,dotted] (bj) edge node[arrowLabel,yshift=1.5mm] {\pn{F}} (bl);
\path[binary,dotted] (cj) edge node[arrowLabel,yshift=1.5mm] {\pn{F}} (cl);
\path[binary,dotted] (ej) edge node[arrowLabel,yshift=1.5mm] {\pn{F}} (el);
\path[binary,dotted] (fj) edge node[arrowLabel,yshift=1.5mm] {\pn{F}} (fl);
\path[binary,dotted] (gj) edge node[arrowLabel,yshift=1.5mm] {\pn{F}} (gl);
\path[binary,dotted] (hj) edge node[arrowLabel,yshift=1.5mm] {\pn{F}} (hl);


\path[binary] (ab) edge [out=90,in=0,loop, distance=40] node [fill=white, pos=0.6, inner sep=1pt] {$\pn{AllPreds}_{\geq2}$} (ab);
\path[binary] (ag) edge [out=90,in=0,loop, distance=40] node [fill=white, pos=0.6, inner sep=1pt] {$\pn{AllPreds}_{\geq2}$} (ag);
\path[binary] (ak) edge [out=90,in=0,loop, distance=40] node [fill=white, pos=0.6, inner sep=1pt] {$\pn{AllPreds}_{\geq2}$} (ak);

\path[binary, out=225, in=135] (ab) edge node[arrowLabel, yshift=1.5mm] {\nextBr} (ag);
\path[binary, out=225, in=135] (ag) edge node[arrowLabel] {\nextBr} (ak);

\end{tikzpicture}
\caption{An Infinite Restricted Chase Sequence of $\langle \aruleset_M, D_{0}\rangle$ where $M$ is the machine from \cref{example:kb-reduction}, and $\pn{AllPreds}_{\geq2}$ above is a shortcuts for ``\pn{F}, \pn{R}, \pn{\qinit}, \pn{\qloop}, \pn{0}, \pn{1}, \pn{\blank}, \predcopyright, \predcopyleft, \nextBr''.}
\label{figure:restricted-sequence}
\end{figure}

\begin{example}
\label{example:kb-reduction}
Consider a machine $M = \langle \{\qinit, \qloop\}, \{0,1\}, \delta \rangle$ where $\delta$ is a transition function that maps $\langle \qinit, 0\rangle$ to $\{\langle \qloop, 1, \goright \rangle\}$, $\langle \qloop, \blank \rangle$ to $\{\langle \qinit, 1, \goleft \rangle\}$, $\langle \qinit, 1 \rangle$ to $\{\langle \qloop, 1, \goright \rangle\}$, and $\langle \qloop, 1 \rangle$ to $\{\langle \qinit, 1, \goleft \rangle\}$; note how the (only) run of $M$ on the word $0$ contains infinitely many configurations with the state \qloop.
In this representation, every label on an edge or a term represents several facts in the chase.
For the sake of clarity, these labels can be extended with another argument, which should be some ``\pn{Brake}'' term in the same dashed or later dashed box.
\end{example}

} 

\paragraph{Correctness proof of the reduction}

The reduction is now fully described, and we claim that:

\begin{proposition}
\label{prop:kb-reduction-correct}
 $\aruleset_M$ universally halts for the restricted chase on $D_\rho$ if and only if there exists no run of $M$ on $\rho$ that goes infinitely often through \qloop. 
\end{proposition}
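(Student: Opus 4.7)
The plan is to prove both directions of the biconditional via their contrapositives.

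For the ``only if'' direction, I start from a run $R=\rho_0,\rho_1,\dots$ of $M$ on $\rho$ visiting \qloop infinitely often and construct an infinite fair restricted chase sequence. The construction simulates $R$ step by step: at each step $i$ I apply the machine-simulating rule corresponding to the transition $\rho_{i-1}\to\rho_i$, together with the appropriate copy-rule and end-rule instances to propagate and extend the tape. When the transition originates from \qloop, an $R_\qloop$ rule fires, introducing a fresh brake $w_j$ along with $\nextBr(w_{j-1},w_j)$, and the simulation continues at $w_j$. After the switch I apply $R_\pn{nextBr}$ for every existing cell to propagate its brake atoms to $w_j$, and then $R_\predbrake$ on $w_{j-1}$ so that all pending triggers at $w_{j-1}$ are discharged. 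Because $R$ passes through \qloop infinitely often, infinitely many new brakes are created and the resulting sequence is infinite; fairness follows because every loaded trigger is either applied in the simulation or rendered obsolete by the eventual firing of $R_\predbrake$ (and, where necessary, $R_\pn{nextBr}$) on the relevant brake.

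For the ``if'' direction, I assume that some infinite restricted chase sequence of $\langle \aruleset_M, D_\rho\rangle$ exists and extract from it an infinite run of $M$ on $\rho$ visiting \qloop infinitely often. The key technical observation is that applying $R_\predbrake$ to a brake $w$ obsoletes every loaded trigger of a machine-simulating rule with body brake $w$: each such body contains some $\brake(x,w)$, so $\predfuture(x,w,w)$ is present in the current fact set, and the head can then be satisfied by instantiating every existential variable with $w$, since $R_\predbrake$ produces all the required self-loops $\preda(w,w), \predq(w,w), \predfuture(w,w,w), \dots$ at $w$. Fairness forces $R_\predbrake$ to fire on every brake, so only finitely many machine-simulating applications can occur at any brake; the same argument, combined with the bounded number of $R_\pn{nextBr}$ firings, bounds the copy-rule and end-rule applications as well, and thus each brake hosts only finitely many facts. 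An infinite chase must therefore create infinitely many brakes. Consider the tree whose root is $w_1$ and whose children of a node $w$ are the brakes created by $R_\qloop$ applications with body brake $w$: the finiteness argument makes it finitely branching, so by K\"onig's lemma it admits an infinite branch. Concatenating the local simulations along this branch yields an infinite run of $M$ on $\rho$ with infinitely many \qloop visits.

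The most delicate step I anticipate is the ``collapse to $w$'' obsolescence argument, especially for the $R_\qloop$ triggers, whose heads involve a fresh brake $w'$ that must be set to $w$ while simultaneously requiring $\predfuture(x,w,w)$ and $\predfuture(y,w,w)$; these atoms are available precisely because $\brake(x,w)$ and $\brake(y,w)$ appear in the body. A second subtlety, needed to preserve fairness in the ``only if'' construction, is scheduling: every copy-rule, end-rule, and $R_\pn{nextBr}$ trigger attached to an old brake must be discharged before leaving that brake behind, which amounts to completely processing one level of the brake tree before moving on to the next.
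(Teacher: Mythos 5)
Your proposal is correct and follows essentially the same route as the paper: the forward direction is the paper's wild-frontier simulation with delayed firing of the brake rule to secure fairness, and the backward direction is the paper's Proposition~\ref{prop:brake} (the ``collapse every existential to the real brake'' obsolescence argument), the finitely-many-atoms-per-brake bound, and a K\"onig's lemma extraction of an infinite branch --- the paper applies K\"onig's lemma to the tree of state atoms rather than your tree of brakes, but these trees are interdefinable and the argument is the same. The only point you leave implicit that the paper proves separately (Proposition~\ref{prop:consistency}) is that the extracted sequence of local tape structures really encodes a legal run of $M$ from $\rho$, i.e.\ that each state atom's neighbourhood is consistent with a configuration reachable by the simulated transitions; this is routine for the clean database $D_\rho$ but should be stated.
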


We first prove that if there exists a run of $M$ going through \qloop infinitely often, then there exists a non-terminating chase sequence. To that purpose, we identify interesting subsets of databases.

\begin{definition}[Wild Frontier of Configuration $\rho$]
\label{def-wild-frontier}
 A set of atoms $F$ has a \emph{wild frontier} of configuration 
 $\rho=\langle n,t,p,q\rangle$ overseen by $w \in \terms(F)$ if there exists $x_1,\ldots,x_{n+1} \in \terms(F)$ such that:
 \begin{itemize}
  \item $\predreal(w) \not \in F$;
  \item $\{\predright(x_i,x_{i+1},w),\predname{a_i}(x_i,w)\} \subseteq F$ for all $i \in \{1,\ldots,n\}$, $\predname{a_i} = t(i)$;
  \item $\predq(x_p,w), \predend(x_{n+1},w), \predname{\blank}(x_{n+1},w) \in F$;
  \item $\brake(x_i,w) \in F$ for all $i \in \{1,\ldots,n+1\}$;
  \item any other atom of $F$ having $x_i$ as first argument has $w$ as second.
 \end{itemize}

\end{definition}

A wild frontier has three important features \emph{(i)} it contains the necessary atoms to simulate the run of a Turing machine on that configuration; \emph{(ii)} it is correctly connected to a (not yet real) brake $w$; \emph{(iii)} it does not contain atoms preventing the above run to be simulated through a restricted derivation. By comparing Definition~\ref{def-reduction-database} and Definition~\ref{def-wild-frontier}, it is clear that $D_\varepsilon$ has a wild frontier of the configuration of $M$ on the empty word, overseen by $w_1$.
The construction of an infinite restricted derivation is made by inductively using the following key proposition. 

\begin{propositionrep}
\label{prop:frontier-respawn}
 If $F$ has a wild frontier of $\rho$ overseen by $w$, and $\rho'$  is reachable in one step by a transition of $M$, then there exists a restricted derivation $\der_{\rho\to\rho'} = F,\ldots, F'$ such that $F'$ has a wild frontier of $\rho'$ overseen by $w'$, where $w' \not =  w$ is a fresh existential if $\rho$ is in \qloop, and $w' = w$ otherwise. 
\end{propositionrep}

\begin{proof} 
We consider the case where $\rho = \langle n,t,p,q\rangle$, with $q \not = \qloop$, and where $\rho'$ is obtained from $\rho$ because $(b,q',\goright) \in \delta(t(p),q)$. We consider $x_1,\ldots,x_{n+1},w$ as provided by the definition of a wild frontier of configuration $\rho$. 
\begin{itemize}
\item we start by applying Rule~\ref{rule:regular-right}, mapping $x$ to $x_p$, $y$ to $x_{p+1}$ and $w$ to $w$. This produces the atoms $\predq'(x'_{p+1},w)$, $\predname{c}(x'_{p+1},w)$, $\predname{b}(x'_p,w)$, $\predcopyleft(x'_p,w)$, $\predcopyright(x'_{p+1},w)$, $\predright(x'_p,x'_{p+1},w)$, $\predfuture(x_p,x'_p,w)$, $\predfuture(x_{p+1},x'_{p+1},w)$, $\brake(x'_p,w)$, $\brake(x'_{p+1},w)$;
\item we apply Rule~\ref{rule:copyleft} $p-1$ times. The $i^\mathrm{th}$ (for $i$ from $1$ to $p-1$) application maps $x$ to $x_{p-i+1}$, $x'$ to $x'_{p-i+1}$, $y$ to $x_{p-i}$, $w'$ and $w$ to $w$. It creates atoms $\predfuture(x_{p-i},x'_{p-i},w)$, $\predright(x'_{p-i},x'_{p-i+1},w)$, $\predname{t(p-i)}(x'_{p-i},w)$, $\predcopyleft(x'_{p-i},w)$, $\brake(x'_{p-i},w')$;
\item we apply Rule~\ref{rule:copyright} $n-p$ times. The $i^\mathrm{th}$ (for $i$ from $1$ to $n-p$) application maps $x$ to $x_{p+i}$, $x'$ to $x'_{p+i}$, $y$ to $x_{p+i+1}$, $w'$ and $w$ to $w$. It creates atoms $\predfuture(x_{p+i+1},x'_{p+i+1},w)$, $\predright(x'_{p+i},x'_{p+i+1},w)$, $\predname{t(p+i+1)}(x'_{p+i+1},w)$, $\predcopyright(x'_{p+i+1},w)$, $\brake(x'_{p+i+1},w)$
\item we apply Rule~\ref{rule:end}, mapping $x'$ to $x'_{n+1}$, $x$ to $x_{n+1}$, $w$ and $w'$ to $w$. It creates the atoms $\predright(x'_{n+1},x'_{n+2},w)$, $\predname{\blank}(x'_{n+2},w)$, $\predend(x'_{n+2},w)$,$\brake(x'_{n+2},w)$.
\end{itemize}
The result of that derivation has a wild frontier of configuration $\rho'$ overseen by $w$, as witnessed by terms $x'_1,\ldots,x'_{n+2}$.\\

If $\rho'$ is obtained from $\rho$ because $(b,q',\goleft) \in \delta(t(p),q)$, with $q \not = \qloop$, we consider $x_1,\ldots,x_{n+1},w$ as provided by the definition of a wild frontier of configuration $\rho$.
\begin{itemize}
 \item we start by applying Rule~\ref{rule:regular-left}, mapping $x$ to $x_p$, $y$ to $x_{p-1}$, $w$ to $w$. This produces the atoms $q'(x'_{p-1},w)$, $\predc(x'_{p-1},w)$, $\predb(x'_p,w)$, $\predcopyleft(x'_{p-1},w)$, $\predcopyright(x'_p,w)$, $\predright(x'_{p-1},x'_p,w)$, $\predfuture(x_p,x'_p,w)$, $\predfuture(x_{p-1},x'_{p-1},w)$, $\brake(x'_p,w)$, $\brake(x'_{p-1},w)$;
 \item we apply Rule~\ref{rule:copyleft} $p-2$ times. The $i^\mathrm{th}$ (for $i$ from $1$ to $p-2$) application maps $x$ to $x_{p-i}$, $x'$ to $x'_{p-1}$, $y$ to $x_{p-i-1}$, and $w,w'$ to $w$. It creates atoms $\predfuture(x_{p-i-1},x'_{p-i-1},w)$, $\predright(x'_{p-i-1}, x'_{p-i},w)$, $\predname{t(p-i-1)}(x'_{p-i-1},w)$, $\predcopyleft(x_{p-i-1},w)$, $\brake(x_{p-i-1},w)$;
 \item we apply Rule~\ref{rule:copyright} $n-p+1$ times. The $i^\mathrm{th}$ (for $i$ from $1$ to $n-p+1$) application maps $x$ to $x_{p-1+i}$, $x'$ to $x'_{p-1+i}$, $y$ to $x_{p+i}$, and $w,w'$ to $w$. It creates atoms $\predfuture(x_{p+i},x'_{p+i},w)$, $\predright(x'_{p+i-1},x'_{p+i},w)$, $\predname{t(p-i-1)}(x'_{p+i},w)$, $\predcopyright(x'_{p+i},w)$, $\brake(x'_{p+i},w)$;
 \item we apply Rule~\ref{rule:end}, mapping $x'$ to $x'_{n+1}$, $x$ to $x_{n+1}$, $w$ and $w'$ to $w$. It creates the atoms $\predright(x'_{n+1},x'_{n+2},w)$, $\predname{\blank}(x'_{n+2},w)$, $\predend(x'_{n+2},w)$,$\brake(x'_{n+2},w)$.
\end{itemize}
The result of that derivation has a wild frontier of configuration $\rho'$ overseen by $w$, as witnessed by terms $x'_1,\ldots,x'_{n+2}$.\\

If $\rho'$ is obtained from $\rho$ because $(b,q',\goright) \in \delta(t(p),\qloop)$, we consider $x_1,\ldots,w_{n+1}$ as provided by the definition of a wild frontier of configuration $\rho$.

\begin{itemize}
\item we start by applying Rule~\ref{rule:init-right}, mapping $x$ to $x_p$, $y$ to $x_{p+1}$ and $w$ to $w$. This rule application produces the atoms $\predq'(x'_{p+1},w')$, $\predname{c}(x'_{p+1},w')$, $\predname{b}(x'_p,w')$, $\predcopyleft(x'_p,w')$, $\predcopyright(x'_{p+1},w')$, $\predright(x'_p,x'_{p+1},w')$, $\predfuture(x_p,x'_p,w')$,$\predfuture(x_{p+1},x'_{p+1},w')$, $\brake(x'_p,w')$, $\brake(x'_{p+1},w')$;
\item we apply Rule~\ref{rule:copyleft} $p-1$ times. The $i^\mathrm{th}$ (for $i$ from $1$ to $p-1$) application maps $x$ to $x_{p-i+1}$, $x'$ to $x'_{p-i+1}$, $y$ to $x_{p-i}$, $w$ to $w$ and $w'$ to $w'$. It creates atoms $\predfuture(x_{p-i},x'_{p-i},w')$, $\predright(x'_{p-i},x'_{p-i+1},w')$, $\predname{t(p-i)}(x'_{p-i},w')$, $\predcopyleft(x'_{p-i},w')$, $\brake(x'_{p-i},w')$;
\item we apply Rule~\ref{rule:copyright} $n-p$ times. The $i^\mathrm{th}$ (for $i$ from $1$ to $n-p$) application maps $x$ to $x_{p+i}$, $x'$ to $x'_{p+i}$, $y$ to $x_{p+i+1}$, $w$ to $w$ and $w'$ to $w'$. It creates atoms $\predfuture(x_{p+i+1},x'_{p+i+1},w')$, $\predright(x'_{p+i},x'_{p+i+1},w')$, $\predname{t(p+i+1)}(x'_{p+i+1},w')$, $\predcopyright(x'_{p+i+1},w')$, $\brake(x'_{p+i+1},w')$
\item we apply Rule~\ref{rule:end}, mapping $x'$ to $x'_{n+1}$, $x$ to $x_{n+1}$, $w$ to $w$ and $w'$ to $w'$. It creates atoms $\predright(x'_{n+1},x'_{n+2},w')$, $\predname{\blank}(x'_{n+2},w')$, $\predend(x'_{n+2},w')$,$\brake(x'_{n+2},w')$.
\end{itemize}

The result of that derivation has a wild frontier of configuration $\rho'$ overseen by $w'$, as witnessed by terms $x'_1,\ldots,x'_{n+2}$.\\

If $\rho'$ is obtained from $\rho$ because $(b,q',\goleft) \in \delta(t(p),\qloop)$, we consider $x_1,\ldots,x_{n+1},w$ as provided by the definition of a wild frontier of configuration $\rho$.
\begin{itemize}
 \item we start by applying Rule~\ref{rule:init-left}, mapping $x$ to $x_p$, $y$ to $x_{p-1}$, $w$ to $w$. This rule application produces the atoms $q'(x'_{p-1},w')$, $\predc(x'_{p-1},w')$, $\predb(x'_p,w')$, $\predcopyleft(x'_{p-1},w')$, $\predcopyright(x'_p,w')$, $\predright(x'_{p-1},x'_p,w')$, $\predfuture(x_p,x'_p,w'), \predfuture(x_{p-1},x'_{p-1},w')$, $\brake(x'_p,w')$, $\brake(x'_{p-1},w')$;
 \item we apply Rule~\ref{rule:copyleft} $p-2$ times. The $i^\mathrm{th}$ (for $i$ from $1$ to $p-2$) application maps $x$ to $x_{p-i}$, $x'$ to $x'_{p-1}$, $y$ to $x_{p-i-1}$, and $w$ to $w$ and $w'$ to $w'$. It creates atoms $\predfuture(x_{p-i-1},x'_{p-i-1},w')$, $\predright(x'_{p-i-1}, x'_{p-i},w')$, $\predname{t(p-i-1)}(x'_{p-i-1},w')$, $\predcopyleft(x_{p-i-1},w')$, $\brake(x_{p-i-1},w')$;
 \item we apply Rule~\ref{rule:copyright} $n-p+1$ times. The $i^\mathrm{th}$ (for $i$ from $1$ to $n-p+1$) application maps $x$ to $x_{p-1+i}$, $x'$ to $x'_{p-1+i}$, $y$ to $x_{p+i}$, and $w$ to $w$ and $w'$ to $w'$. It creates atoms $\predfuture(x_{p+i},x'_{p+i},w')$, $\predright(x'_{p+i-1},x'_{p+i},w')$, $\predname{t(p-i-1)}(x'_{p+i},w')$, $\predcopyright(x'_{p+i},w')$, $\brake(x'_{p+i},w')$;
 \item we apply Rule~\ref{rule:end}, mapping $x'$ to $x'_{n+1}$, $x$ to $x_{n+1}$, $w$ to $w$ and $w'$ to $w'$. It creates atoms $\predright(x'_{n+1},x'_{n+2},w')$, $\predname{\blank}(x'_{n+2},w')$, $\predend(x'_{n+2},w')$,$\brake(x'_{n+2},w')$.
\end{itemize}
The result of that derivation has a wild frontier of configuration $\rho'$ overseen by $w'$, as witnessed by terms $x'_1,\ldots,x'_{n+2}$.

\end{proof}

Concatenating the infinite sequence of derivations built in Proposition~\ref{prop:frontier-respawn} does not however provide a fair sequence of derivations, because of Rules~\ref{rule:brake}, \ref{rule:nextBr} and of the non-determinism of $M$. Fairness is enforced by applying \ref{rule:brake} and \ref{rule:nextBr} ``late enough'' to ensure that none of the triggers involved in the proof of Proposition~\ref{prop:frontier-respawn} are made obsolete. This is possible because the run of $M$ going infinitely many often through \qloop, infinitely many brakes are created. Details are provided in the appendix.

\begin{toappendix}

A rule is \emph{datalog} if its head does not contain any existentially quantified variable. 

\begin{proposition}
\label{prop:brake}
Let $F_0,\ldots,F_k$ be restricted derivation.  Let $w^* \in \terms(F_k) \setminus \terms(F_0)$ such that $\predreal(w^*) \in F_k$. Then for any $j > k$, the only rules generating $w^*$ as a last argument having non-obsolete triggers on $F_j$ are datalog rules. 
\end{proposition}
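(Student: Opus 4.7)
The plan is to exploit the self-loop atoms produced by $R_\predbrake$ applied to $\predbrake(w^*)$. Since $\predreal(w^*) \in F_k$ and $w^* \notin \terms(F_0)$, the only rule whose head introduces $\predreal(w^*)$ is $R_\predbrake$ with $w \mapsto w^*$, so this application occurred at some step $k' \leq k$. Hence, for every $j \geq k$, $F_j$ contains the full head instance: $\predreal(w^*)$, a self-loop $\pi(w^*, w^*)$ for every binary predicate $\pi$ in the head of $R_\predbrake$ (the tape letters, the states, $\predcopyleft$, $\predcopyright$ and $\nextBr$), together with $\predfuture(w^*,w^*,w^*)$ and $\predright(w^*,w^*,w^*)$.

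The core of the proof is a case analysis on the non-datalog rules. For each loaded trigger $\angles{R,\subs}$ that generates $w^*$ as a last argument, I exhibit an extension of $\subs$ to the existentials that satisfies the head in $F_j$. The rules $R_{\qloop}^{\goright}$ and $R_{\qloop}^{\goleft}$ are vacuous here: their only last-argument position in the head is the existential $w'$, which the restricted chase instantiates with a fresh null rather than with the already existing $w^*$. For $R_{\neg \qloop}^{\goright}$ and $R_{\neg \qloop}^{\goleft}$, the relevant case is $\subs(w) = w^*$; the body atoms $\brake(x,w)$ and $\brake(y,w)$ then yield $\predfuture(\subs(x),w^*,w^*)$ and $\predfuture(\subs(y),w^*,w^*)$, so mapping both existentials $x'$ and $y'$ to $w^*$ satisfies every head atom via these $\predfuture$ atoms together with the self-loops.

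For $R_\predcopyright$, $R_\predcopyleft$ and $R_\predend$, the case to analyze is $\subs(w') = w^*$. Mapping $y' \mapsto w^*$ again discharges most head atoms, using the self-loops and the body's $\brake(\subs(x'),w^*)$. The delicate leftover, for the two copying rules, is the atom $\predfuture(\subs(y),w^*,w^*)$; I would obtain it from $\brake(\subs(y),w^*)$, itself derivable via $R_\pn{nextBr}$---a datalog rule---from the body's $\brake(\subs(y),\subs(w))$ together with $\nextBr(\subs(w),w^*)$, an atom produced when $w^*$ was created by one of the \qloop-transition rules. The main obstacle is precisely this step: arguing that the required $R_\pn{nextBr}$-closure has already taken place in $F_j$, which fits the proposition's conclusion that only datalog rules may remain non-obsolete around $w^*$.
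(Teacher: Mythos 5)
Your overall strategy is exactly the paper's: observe that $\predreal(w^*)$ together with $w^*\notin\terms(F_0)$ forces \ref{rule:brake} to have fired with $w\mapsto w^*$, and then discharge each non-datalog rule by mapping its existentials to $w^*$ and reading off the head atoms from the resulting self-loops plus the $\brake$-atoms already present in the trigger's support. Your treatment of \ref{rule:regular-left} and \ref{rule:regular-right} coincides with the paper's (there the body and head share the single brake variable $w$, so $\predfuture(\subs(x),w^*,w^*)$ and $\predfuture(\subs(y),w^*,w^*)$ really are supplied by the support), and your observation that \ref{rule:init-left} and \ref{rule:init-right} never place $w^*$ in a last-argument position of their output is a legitimate alternative to the paper's explicit obsolescence argument for those two rules.

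The step you flag as an obstacle is a genuine gap, and the paper's own proof does not resolve it either: for \ref{rule:copyleft}, \ref{rule:copyright} and \ref{rule:end} the paper simply writes ``extend $\sigma$ by mapping $y'$ to $\sigma(w')=w^*$'' and stops, but when $\subs(w)\neq\subs(w')=w^*$ the head atom $\predfuture(\subs(y),w^*,w^*)$ belongs to $\brake(\subs(y),w^*)$, which is \emph{not} in the trigger's support (the support only gives $\brake(\subs(y),\subs(w))$) and need not be in $F_j$: it only appears once the datalog rule \ref{rule:nextBr} has propagated $\subs(y)$'s brake-membership from $\subs(w)$ to $w^*$, and nothing forces an arbitrary derivation to have performed that closure by step $j$. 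One can realize this concretely: fire \ref{rule:init-right} to create a fresh brake $w_2$ out of $w_1$, immediately fire \ref{rule:brake} on $w_2$, and the pending \ref{rule:copyright} trigger with $w\mapsto w_1$, $w'\mapsto w_2$ on the next tape cell remains loaded and non-obsolete. So the proposition really holds only relative to derivations in which the \ref{rule:nextBr}-closure around $w^*$ has been taken --- which is precisely what the construction in \cref{lemma:TM_to_chase} enforces by applying all datalog rules on a brake right after making it real --- and your instinct that this is where the proof must do actual work is correct. A further wrinkle you did not mention: for \ref{rule:end} the extension $y'\mapsto w^*$ also requires $\predend(w^*,w^*)$, which the head of \ref{rule:brake} does not produce, so that case needs separate repair as well.
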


\begin{proof}
We consider a non-datalog rule $\arule$ and $\sigma$ a homomorphism of $\body(\arule)$ into $F_j$. We prove that $\sigma$ can be extended in a homomorphism of $\head(\arule)$ into $F_j$, showing that $\langle \arule,\sigma\rangle$ is obsolete. Note that as $\predreal(w^*) \in F_k$, it holds that Rule~\ref{rule:brake} has been applied by mapping $w$ to $w^*$.  

\begin{itemize}
 \item Rules~\ref{rule:regular-left} and \ref{rule:regular-right}: extend $\sigma$ by mapping $x'$ and $y'$ to $\sigma(w) = w^*$.
 \item Rules~\ref{rule:init-left} and \ref{rule:init-right}: extend $\sigma$ by mapping $x',y', w'$ to $\sigma(w) = w^*$
 \item Rules~\ref{rule:copyleft}, \ref{rule:copyright}, and \ref{rule:end}: extend $\sigma$ by mapping $y'$ to $\sigma(w') = w^*$.
\end{itemize}

\end{proof}
\end{toappendix}

\begin{lemmarep}
\label{lemma:TM_to_chase}
Let $(\rho_i)_{i \in \mathbb{N}}$ be a run of $M$ on the empty word that visits \qloop infinitely often. There exists an infinite restricted chase sequence for $\langle\aruleset_M,D_\varepsilon\rangle$. 
\end{lemmarep}

\begin{proof}
 Let $(i_j)_{j\in \mathbb{N}}$ be the infinite strictly increasing sequence of integers such that $i_1 = 1$ and $\rho_k$ is in \qloop if and only if $k = i_j$ for some $j$. We denote by $\der_{\rho_{i_j} \rightarrow \rho_{i_{j+1}}}$ the concatenation of the restricted derivations provided by Proposition~\ref{prop:frontier-respawn}. Let us consider the derivation build by induction:
 \begin{itemize}
   \item $\der_1 = \der_{\rho_{i_1} \rightarrow \rho_{i_2}}$
   \item $\der'_1$ extends $\der_1$ by the application of Rule~\ref{rule:brake} mapping $w$ to the brake overseeing the wild frontier of the last element of $\der_1$, as well as by applying any datalog rule mapping $w$ to that brake.
   \item $\der_{j}$ extends $\der'_{j-1}$ by the derivation $\der_{\rho_{i_j} \rightarrow \rho_{i_{j+1}}}$;
   \item $\der'_j$ extends $\der_j$ by the application of Rule~\ref{rule:brake} mapping $w$ to the brake overseeing the wild frontier of the last element of $\der_j$, and by applying Rule~\ref{rule:nextBr} in any possible way that maps $w$ to the brake overseeing the wild frontier of the last element of $\der_j$, as well by applying any datalog rule mapping $w$ to that brake.
  \end{itemize}

This derivation is fair:
\begin{itemize}
    \item any created atom has a brake as argument;
    \item brakes are created exactly once in each derivation $\der_{\rho_{i_j} \rightarrow \rho_{i_{j+1}}}$ (by definition of $(i_j)_{j \in \mathbb{N}}$); let us call $w_1$ the brake appearing in $D_\varepsilon$, and  $w_{j+1}$ the brake crated in $\der_{\rho_{i_j} \rightarrow \rho_{i_{j+1}}}$;
    \item by Proposition~\ref{prop:brake}, the application of Rule~\ref{rule:brake} mapping $w$ to $w_{j}$ deactivates any trigger of a non-datalog rule mapping creating an atom with $w_j$ as a last argument;
    \item by definition of $\der'_j$, all datalog rules creating an atom with $w_j$ as last argument are applied.
\end{itemize}


\end{proof}



To show the converse, we fix an infinite restricted chase sequence $\der$ as $(F_i)_{i \in \mathbb{N}}$, where $F_0 = D_\varepsilon$. We build from $\der$ an infinite run that visits \qloop infinitely often by identifying a substructure of the chase, consisting of \emph{state atoms}. We then prove that a run can be built from these states atoms (and other elements of the chase), which fulfills the required conditions.

\begin{definition}
	A \emph{state atom of $F$} is an atom of $F$ of the form $\predq(x,w)$ where $q\in Q$ and $x$ is not a brake. 
    A state atom $A$ \emph{precedes} $A'$ if there is a trigger $t$ such that $A\in\trigsupport(t)$ and $A'\in\trigoutput(t)$. In this case, we write $A\prec A'$.
\end{definition}

It is worth noticing that in the chase of $\langle\aruleset_M,D_\varepsilon\rangle$, state atoms are organised as a tree structure rooted in the unique state atom belonging to $D_\varepsilon$, and such that $A$ is a parent of $A'$ if and only if $A \prec A'$. Intuitively, we can assign with each of these state atoms a configuration such that the configuration associated with $A'$ is reachable in one transition of $M$ from the configuration associated with its parent $A$. The key property is that in an infinite restricted chase, there exists an infinite sequence $(A_n)_{n\in\nat}$ with good properties.

\begin{toappendix}
  \begin{lemma}\label{lem:finitely_many_atoms_per_brake}
    Let $F$ be the result of an infinite restricted chase sequence $F_0,F_1,\dots$ from $\angles{\aruleset_M,D}$ for some $D$. For any $w$ such that $\predbrake(w) \in F$, there are finitely many atoms having $w$ as last argument in $F$. There is thus an infinite amount of brakes in $F$.
  \end{lemma}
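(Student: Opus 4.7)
The plan is to combine fairness with Proposition~\ref{prop:brake} to cap the atoms indexed by each brake, and then to deduce infinitely many brakes by noting that almost every atom of $F$ carries a brake in its last argument. The first claim comes from a careful analysis of which rules can still fire once $w$ has become real; the second follows from the fact that an infinite chase must produce infinitely many atoms.

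For the first claim, I would fix a brake $w$ with $\predbrake(w)\in F_i$ for some $i$. Fairness forces the trigger $\langle R_\predbrake,\{w\mapsto w\}\rangle$ to become obsolete at some step $j_w\ge i$, so $F_{j_w}$ already contains $\predreal(w)$, $\nextBr(w,w)$ and all ``loop'' atoms $\predname{a}(w,w)$, $\predname{q}(w,w)$. Being finite, $F_{j_w}$ has only finitely many atoms with $w$ as last argument. From $j_w$ onwards, Proposition~\ref{prop:brake} makes every non-datalog trigger producing an atom with $w$ as last argument obsolete, so only the datalog rules $R_\predbrake$ and $R_\pn{nextBr}$ can still enlarge this set. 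The relevant trigger of $R_\predbrake$ has already fired, and an $R_\pn{nextBr}$-trigger outputs atoms with $w$ as last argument only when its substitution sends $w'$ to $w$, which requires some $\nextBr(v,w)\in F$. A scan of $\aruleset_M$ shows that such atoms can only arise from $R_\predbrake$ (giving $\nextBr(w,w)$), from the unique $R_\qloop^\goright$- or $R_\qloop^\goleft$-firing that created $w$ as a fresh existential (giving one $\nextBr(w_\text{prev},w)$), or directly from $D$; in particular $\{v\mid\nextBr(v,w)\in F\}$ is finite. For $v=w$ the head of $R_\pn{nextBr}$ coincides with part of its body, making the trigger obsolete; for each remaining $v$ it produces $\brake(x,w)$ for $x$ in $\{a\mid\brake(a,v)\subseteq F\}$, a set I would bound by strong induction along the parent relation that $\nextBr$ induces on brakes, with the base cases (brakes occurring in $D$) handled by the finiteness of $D$.

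For the second claim, every chase step picks a non-obsolete trigger and hence strictly enlarges $F_i$, so $F$ is infinite. Inspecting $\aruleset_M$ shows that every atom of $F$ either carries a brake in its last argument or has the form $\predreal(x)$ with $x$ occurring in some $\brake(x,w)\subseteq F$; so finitely many brakes would, by the first claim, make $F$ finite -- a contradiction. The main obstacle I foresee is the apparent circularity in bounding the atoms around $w$ by those around a ``previous'' brake, which is resolved by organizing the argument as an induction along the brake tree and using the fact that each non-initial brake is produced by a single $R_\qloop^{\goright}$- or $R_\qloop^{\goleft}$-firing and hence has a unique immediate $\nextBr$-predecessor.
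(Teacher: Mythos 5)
Your proposal is correct and follows essentially the same route as the paper's proof: fairness plus Rule~\ref{rule:brake} to make $w$ real at a finite step, Proposition~\ref{prop:brake} to reduce to the datalog rules, the observation that \ref{rule:nextBr} is the only remaining source of atoms with $w$ in last position, an induction along the forest that $\nextBr$ induces on brakes (rooted at database brakes), and finally the infinitude of $F$ to get infinitely many brakes. Your handling of the unary atoms $\predreal(x)$ in the last step is in fact slightly more explicit than the paper's.
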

  \begin{proof}
    Consider a term $w$ such that $\predbrake(w)\in F$, which we call a brake. By fairness and Rule~\ref{rule:brake}, there must be some integer $i$ such that $\predreal(w)\in F_i$. At this step, there is a finite number of atoms with $w$ as last argument, and by \cref{prop:brake}, the only rules that can generate such atoms after step $i$ are datalog. Rule~\ref{rule:brake} only generates atoms over $w$, so it is applicable at most one, and will yield at most 6 new atoms. Thus, the only rule left is Rule~\ref{rule:nextBr}.

    Only two rules create new $\predbrake$-atoms that do not already appear in their bodies, which are Rules \ref{rule:init-right} and \ref{rule:init-left}. Both these rules also generate an atom of the form $\nextBr(w,w')$, where $\predbrake(w)$ is the brake in their body, and $\predbrake(w')$ is the newly created brake. As this is the only way to generate $\nextBr$-atoms, the predicate $\nextBr$ defines a forest relationship over the brakes, where the root of each tree is a term $w_0$ such that $\predbrake(w_0)\in D$. There is thus a finite number of trees. We then show that Rule~\ref{rule:nextBr} can only create a finite number of atoms by induction on this forest structure.
    \begin{itemize}
      \item If $\predbrake(w)\in D$, then all the atoms of the form $\nextBr(w',w)$ are in $D$, so $w'$ is in $D$ too. Thus, Rule~\ref{rule:nextBr} can only create sets of atoms of the form $\brake(x,w')$, where $x$ is a database term. As there is a finite amount of database terms, this yields a finite number of atoms.
      \item If $\predbrake(w')\in F\setminus D$, $\nextBr(w,w')\in F$ and there is a finite number of atoms having $w$ as last argument, then first notice that $w$ is the only term such that $\nextBr(w,w')\in F$, since Rules \ref{rule:init-right} and \ref{rule:init-left} both generate $\nextBr$-atoms featuring an existential variable in second position. Then, as there is a finite amount of atoms featuring $w$ as their last argument, there is a finite amount of terms $x$ such that $\brake(x,w)\subseteq F$. Thus, Rule~\ref{rule:nextBr} generates at most $\brake(x,w')$ for all these terms, which represents a finite number of atoms.
    \end{itemize}
    Thus, there is a finite number of atoms that feature a given brake as their last argument. As $F_0,F_1,\dots$ is infinite, $F$ must have an infinite amount of atoms, that were generated during the chase. Since $\predbrake(w)$ is required in the body of all the rules where $w$ appears as the last argument of an atom, there is thus an infinite amount of brakes in $F$.
  \end{proof}
\end{toappendix}

\begin{lemmarep}\label{lem:infinite_sequence_of_state_atoms}
	For all databases $D$, and all infinite chase sequences from $\angles{\aruleset_M,D}$ with result $F$, there is an infinite sequence $(A_n)_{n\in\nat}$ of state atoms of $F$ such that:
	\begin{itemize}
	 \item $A_0 \in D$;
	 \item $A_n\prec A_{n+1}$ for all $n\in\nat$;
	 \item for infinitely many $i \in \nat$, $A_i$ is of the shape $\predqloop(t_i,w_i)$.
	\end{itemize}
\end{lemmarep}
\begin{proofsketch}
  Since the rules that introduce state atoms (Rules \ref{rule:init-left}, \ref{rule:init-right}, \ref{rule:regular-left} and \ref{rule:regular-right}) feature a state atom in their body, $\prec$ defines a forest structure over state atoms, where the root of each tree is an atom of the database. There is thus a finite amount of trees. We can prove by induction that there is a finite amount of atoms that feature a given brake. Thus, there is an infinite amount of brakes in $F$. Then, since the rules that introduce new brakes (Rules \ref{rule:init-right} and \ref{rule:init-left}) introduce a state atom too, there is an infinite number of state atoms. Thus, one of the trees must be infinite, and since branching can be proven to be finite, there must be an infinite branch by K\"onig's lemma.
\end{proofsketch}
\begin{proof}
  Since the rules that introduce state atoms (Rules \ref{rule:init-left}, \ref{rule:init-right}, \ref{rule:regular-left} and \ref{rule:regular-right}) feature a state atom in their body, $\prec$ defines a forest structure over state atoms, where the root of each tree is an atom of the database. There is thus a finite amount of trees (as there is a finite amount of atoms in the database). By \cref{lem:finitely_many_atoms_per_brake}, there is an infinite amount of brakes in $F$. Then, since the rules that introduce new brakes (Rules \ref{rule:init-right} and \ref{rule:init-left}) introduce a state atom too, there is an infinite number of state atoms. Thus, one of the trees must be infinite.
  In addition, since there is a finite amount of atoms that feature a given brake as last argument, and each state atom features a brake as last argument, each state atom only has a finite number of successors for $\prec$. Indeed, infinitely many successors would require infinitely many rule applications, and thus infinitely many atoms featuring the same last argument as the state atom. We thus have an infinite tree with finite branching. It thus features an infinite branch, which must contain infinitely many $\predqloop$-atoms (as there are infinitely many $\predqloop$-atoms), by K\"onig's lemma.
\end{proof}

%

\begin{toappendix}
To each state atom, we associate both a set of atoms and a configuration. 

\begin{definition}[Atoms associated with a state atom]
 Let $F_k$ be a fact set occuring in a chase derivation from $D_\rho$. The atoms associated with a state atom $\predq(x,w)$ in $F_k$ is the largest subset of $F_k$ whose terms are included in $\{x,w\} \cup X_i$ and such that:
 \begin{itemize}
  \item $X_i$ is the set of terms reachable or co-reachable through an $R$-path from $x$ not going through a brake;
  \item $w$ can only appear in the last position of atoms.
 \end{itemize}
\end{definition}

\begin{definition}[Configuration of a state atom] 
 Let $F_k$ appearing in a restricted chase sequence for $\langle\aruleset,D_\rho\rangle$. The configuration associated with a state atom $\predq(x,w)$, written $\conf(\predq(x,w))$, is defined by induction:
 \begin{itemize}
  \item if $\predq(x,w) \in D_\rho$, $\conf(\predq(x,w)) = \rho$
  \item otherwise, let $A$ be the unique state atom such that $A \prec \predq(x,w)$. Let $(q'b,d)$ be the element of $\delta(q,a)$ due to which the rule whose application generated $\predq(w,x)$ belongs to $\aruleset_M$. We define $\conf(\predq(x,w))$ as the configuration obtained from $\conf(A)$ where the content of the head cell of $\conf(A)$ is replaced by $b$, the head moves by $d$ and switches to state $q'$.
 \end{itemize}
\end{definition}

Note that the above definition implies that if $A_p$ is the parent of $A$, then $\conf(A)$ is reachable in one transition of $M$ from $\conf(A_p)$. Intuitively, the configuration of a state atom is encoded by the atoms associated with it. However, the restricted derivation may not have derived all such atoms, hence we consider a weaker notion, that we coin consistency.

\begin{definition}[Consistency]
\label{def:consistency}
 A set of atoms $A$ associated with a state atom is consistent with the configuration $\langle n,t,p,q\rangle$ if:
 \begin{itemize}
  \item there exists $x$ and $w$ such that $\predq(x,w)$ is the only state atom in $A$, and $\conf(x)$ is in state $q$;
  \item $\preda(x,w)$ is the only letter predicate having $x$ as argument in $A$, and $t(p) = a$;
  \item if there is an $R$ path of length $i$ from $x$ to $x'$, and there is an atom $\preda(x',x'')$, then $x'' = w$, $p+i \leq n+1$ and  $t(p+i)=a$;
  \item there exists at most one atom $\predend(x',x'')$ in $A$, and if it exists then $x'' = w$ and there is an $R$-path from $x$ to $x'$ of length $i$, such that $p+i = n+1$;
  \item if there is an $R$ path of length $i$ from $x'$ to $x$, and there is an atom $\preda(x',x'')$, then $x'' = w$, $p-i \geq 1$ and $t(p-i)=a$.
 \end{itemize}
\end{definition}

As expected, the set of atoms associated wtih a state atom is consistent with its configuration, and this allows us to prove Lemma~\ref{lem:RM_simulates}.
\begin{proposition}
\label{prop:consistency}
Let $F_k$ appearing in a restricted chase sequence for $\langle\aruleset,D_\rho\rangle$. For any state atom $A$ of $F_k$, the set of atoms associated with $A$ is consistent with $\conf(A)$.
\end{proposition}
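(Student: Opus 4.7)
The proof proceeds by induction on the step $k$ at which $F_k$ appears in the chase derivation, with the invariant that every state atom $A$ of $F_k$ has an associated set consistent (in the sense of Definition~\ref{def:consistency}) with $\conf(A)$. For the base case $k=0$, we have $F_0 = D_\rho$; by Definition~\ref{def-reduction-database}, its unique state atom $\predq(c_p, w_1)$ has associated set consisting of the $\predright$-chain on $c_1, \ldots, c_{n+1}$, the letter atoms $\predname{a_i}(c_i, w_1)$ with $a_i = t(i)$, plus $\predname{\blank}(c_{n+1}, w_1)$ and $\predend(c_{n+1}, w_1)$, which is consistent with $\rho = \langle n, t, p, q \rangle$ by direct inspection.

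For the inductive step, we fix the trigger $\langle \arule, \subs \rangle$ yielding $F_k$ from $F_{k-1}$ and split by the type of rule. The transition rules \ref{rule:regular-right}, \ref{rule:regular-left}, \ref{rule:init-right}, \ref{rule:init-left} create a new state atom $A$ (with the fresh null chosen by the head as its first argument) together with a two-cell child row encoding the new head position and the adjacent cell; since $\conf(A)$ is defined precisely from $\conf(A_p)$ via the same transition of $\delta$ that generated this instance of the rule schema, consistency of the new associated set reduces to matching the new state, the symbol under the head, and the rewritten symbol, which all appear verbatim in the head of the rule. The copy rules \ref{rule:copyleft} and \ref{rule:copyright} extend an existing child row leftward or rightward: by the induction hypothesis applied to the parent state atom $A_p$, the letter of the parent cell reached through $\predright$ matches $\conf(A_p)$ at that position, and since $\conf(A)$ agrees with $\conf(A_p)$ everywhere outside of the parent's head position (which is already fixed in the first step of the row), the newly produced $\preda(y', w')$ atom places the correct letter at the corresponding child position. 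Rule~\ref{rule:end} appends a blank cell to the rightmost position of the child row, matching the tape-extension by one blank cell that $\textsf{Next}_M$ performs. Finally, rules \ref{rule:brake} and \ref{rule:nextBr} only produce atoms of the form $\predname{p}(w, \cdot)$, $\predname{p}(w, w)$, or $\brake(\cdot, w')$; the first two fail the second clause of the definition of associated atoms (the brake $w$ appears in a non-final position), and the $\brake$-atoms tied to a new brake $w'$ cannot lie in the associated set of a state atom over a different brake, so the invariant is preserved trivially.

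The main obstacle is the bookkeeping required to ensure that nothing stray can enter a state atom's associated set and spoil the consistency conditions. Concretely, one must argue that the $\predright$-reachable component of the head cell $x$ of a state atom $\predq(x, w)$ never leaves the specific row created at the birth of that state atom: each rule that emits a $\predright(x', y', w'')$ atom carries the same brake $w''$ throughout its head and ties $x', y'$ to that brake, so the connected component via $\predright$ from $x$ is exactly the set of cells sharing the brake $w$ and generated by the descendants of $\predq(x, w)$'s creation along the copy/end rules. Similarly, a letter atom $\preda(x', w')$ on a given child cell is produced by a unique rule application tied to one specific parent cell via $\predfuture$, so there is no conflict between two different letters appearing at the same child cell. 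Once this uniqueness is established, each inductive case reduces to a direct unfolding of the head of the rule against the clauses of Definition~\ref{def:consistency}.
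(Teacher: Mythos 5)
Your proof is correct and follows essentially the same route as the paper's: an induction whose base case is $D_\rho$ and whose inductive step analyses the rule that produced each atom (transition rules fixing the new state, head symbol and rewritten symbol; copy rules propagating letters from the parent configuration; \ref{rule:end} appending the blank; brake rules being irrelevant), supported by the observation that the $\predright$-reachable component of a state atom's cell is confined to the row built by its creating trigger and the subsequent copy/end applications. The only cosmetic difference is that you induct on the chase step with a global invariant while the paper inducts on the parent relation over state atoms (with an inner induction on distance from the head), which amounts to the same argument.
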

\begin{proof}
 We prove the result by induction.
 If $A$ is a state atom that does not have any parent, then $A \in F_0 = D_\rho$. The set of atoms associated with $A$ is $D_\rho$, which is consistent with the initial configuration of $M$ on $\rho$ by definition, which is configuration of $A$;
 
 Otherwise, let $A = q'(y',w)$ be a state atom of $F_k$. We prove the result assuming that $A$ has been created by the application of Rule~\ref{rule:regular-left}, mapping $x$ to $x_p$, $w$ to $w$ and $y$ to $y_p$. $A_p$, the parent of $A$, is thus of the shape $q(x_p,w)$ (other possible case would be $A$ being created by Rules~\ref{rule:regular-right}, \ref{rule:init-left} or \ref{rule:init-right}, which are treated similarly). It is easy to check that any term reachable from $y'$ by an $R$-path not going through a brake is either created by the same rule application as $y'$, or has been created by an application of Rule~\ref{rule:copyright} mapping $x'$ to a term reachble by an $R$-path from $y'$ (and similarly for terms co-reachable and Rule~\ref{rule:copyleft}). Then:
 \begin{itemize}
  \item if there exists $y'_{-1}$ such that $\predright(y'_{-1},y',w) \in F_k$, then $y'_{-1}$ has been created by the application of Rule~\ref{rule:copyleft}, in which case $\preda(y'_{-1},w)$ is generated if the cell two positions on the left of the head of $\conf(A_p)$ contains an $a$, that is, if the cell one position on the left of the head of $\conf(A)$ contains an $a$; predecessors of $y'$ further away from $y'$ are treated by induction in a similar way.
  \item there exists a $y'_{+1}$ such that $\predright(y',y'_{+1},w) \in F_k$, as such an element is created by the application of Rule~\ref{rule:regular-left}. The same application create the atom $\predb(y'_{+1},w)$, which is consistent with the fact that $\conf(A_p)$ contains a $b$ in the first cell at the right of the head; cells further to the right are treated similarly to cells to the left, which are necessarily created by Rule~\ref{rule:copyright}.
  \item the only way to derive an atom of the shape $\predend(x',x'')$ is to apply Rule~\ref{rule:end}, which can only be done after $n-p$ rule applications of Rule~\ref{rule:copyright}, yielding a path of length $n-p+2$ from position $p-1$ of the current configuration, which fulfills the condition of Definition~\ref{def:consistency} (remember that the length of $\conf(A)$ is incremented by $1$ with respect to the length of $\conf(A_p)$).
 \end{itemize}
%
%
%

\end{proof}
\end{toappendix}

\begin{lemmarep}\label{lem:RM_simulates}
	For every configuration $\rho$, if the restricted chase does not terminate on $\langle\aruleset_M,D_\rho\rangle$ then there exists a run of $M$ on $\rho$ which visits \qloop infinitely many times.
\end{lemmarep}

\begin{proof}
 We consider the sequence of states atoms $(A_n)_{n\in\mathbb{N}}$ provided by Lemma~\ref{lem:infinite_sequence_of_state_atoms}, and the sequence $(\conf(A_n))_{n\in \mathbb{N}}$. 
 
 \begin{itemize}
\item $\conf(A_0)$ is the starting configuration of $M$ on $\varepsilon$, and thus a run of $M$ on that configuration;
\item if $(A_n)_{n\in\mathbb{N}}$ is not a run, there exists a smallest $j \in \mathbb{N}$ such that $(\conf(A_n))_{1 \leq n \leq j}$ is not a run. $\conf(A_{j-1})$ is consistent with the set of atoms associated with $A_{j-1}$ by Proposition~\ref{prop:consistency}. Hence $\conf(A_j)$ is obtained by applying the transition correponding to the rule creating $A_j$, and thus $(\conf(A_n))_{1 \leq n \leq j}$ is a run, which leads to a contradiction.
 \end{itemize}

\end{proof}

Lemmas~\ref{lemma:TM_to_chase} and \ref{lem:RM_simulates} directly imply Proposition~\ref{prop:kb-reduction-correct}, and hence the correctness of the reduction.

\section{Rule set termination}
\label{section:rule-set-termination}

\begin{theorem}\label{thm:rule_set_termination_hardness}
  $\TRR{\forall}$ is $\Pi_1^1$-complete.
\end{theorem}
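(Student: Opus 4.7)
The theorem splits into membership in $\Pi_1^1$ and $\Pi_1^1$-hardness, which I would handle separately.

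For membership, I would reuse the auxiliary machine $\mathcal{M}_\aruleset$ from \cref{def:termination-aux-machine}, which already accepts a database encoding as input. By \cref{lem:termination-aux-machine}, a run of $\mathcal{M}_\aruleset$ on the encoding of $D$ visits $\qloop$ infinitely often iff $\langle \aruleset, D \rangle$ admits a non-terminating chase sequence. Hence $\aruleset \in \TRR{\forall}$ iff $\mathcal{M}_\aruleset$ is universally non-recurring through $\qloop$, and the $\Pi_1^1$ upper bound follows immediately from \cref{prop:univ_rec_halt_pi_one_one}.

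For hardness, I would reduce from robust non-recurrence (\cref{prop:rec_mort_pi_one_one}). The key observation is that the rule set $\aruleset_M$ from \cref{section:kb-termination} depends only on $M$; the initial configuration lives entirely in the database $D_\rho$ of \cref{def-reduction-database}. I would therefore map $M$ to the same $\aruleset_M$ (possibly extended with a fresh guard predicate, see below) and argue that $\aruleset_M \in \TRR{\forall}$ iff $M$ is robust non-recurring through $\qloop$. The "only if" direction is an essentially immediate adaptation of \cref{lemma:TM_to_chase} from the starting configuration on $\varepsilon$ to an arbitrary configuration $\rho$: any $\qloop$-recurring run of $M$ from $\rho$ yields a non-terminating chase of $\langle \aruleset_M, D_\rho \rangle$, witnessing that $\aruleset_M \notin \TRR{\forall}$.

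For the "if" direction I would generalise \cref{lem:infinite_sequence_of_state_atoms,lem:RM_simulates} to arbitrary databases $D$. The forest structure of state atoms under $\prec$ is preserved, and finite branching still follows from \cref{lem:finitely_many_atoms_per_brake}, since each brake supports only finitely many atoms. A non-terminating chase therefore forces infinitely many brakes, hence an infinite branch in the state-atom forest, and by a K\"onig-style argument infinitely many nodes on that branch are $\predqloop$-state atoms. The sequence of configurations associated with these state atoms then forms an infinite $\qloop$-recurring run of $M$ starting from the configuration encoded around the branch's root state atom, contradicting robust non-recurrence.

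The main obstacle is that an arbitrary database need not be of the canonical form $D_\rho$: it may contain partial, cyclic, or otherwise malformed fragments that still trigger simulation rules. My plan is to argue that the consistency analysis of the appendix (around the configuration of a state atom) is sufficiently local: since the copy and transition rules only inspect brake-guarded neighbourhoods and never reuse existing terms across fresh brakes, the root state atom of any infinite $\prec$-branch must have a well-formed configuration encoded in its $R$-neighbourhood in $D$. If this local argument turns out to fail on some pathological database shape, the fallback is to augment $\aruleset_M$ with a single fresh guard predicate required in the body of every non-trivial simulation rule and present in every $\brake$ atom of the wild frontier of $D_\rho$, restricting effective databases to canonical configuration encodings while leaving the KB-level correctness proof essentially unchanged.
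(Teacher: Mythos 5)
Your membership argument and the easy direction of hardness coincide with the paper's: membership is exactly \cref{lem:rs-termination-mem} (reduce to universal non-recurrence of $\mathcal{M}_\aruleset$ via \cref{lem:termination-aux-machine}), and the reduction for hardness is indeed from robust non-recurrence using the unchanged rule set $\aruleset_M$, with the ``$M$ not robust non-recurring $\Rightarrow \aruleset_M\notin\TRR{\forall}$'' direction obtained by running the KB-level construction from $D_\rho$ for an arbitrary configuration $\rho$.

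The gap is in the converse direction, and it is precisely the part you flag as ``the main obstacle.'' Your primary plan --- that the root state atom of an infinite $\prec$-branch must have a well-formed configuration encoded in its $\predright$-neighbourhood in $D$ --- is refuted by the paper's own \cref{ex:rule_set_term_non_standard_db}: the database there attaches to a single state atom a branching $\predright$-structure that simultaneously encodes four distinct configurations, yet the simulation rules happily fire on it, so no single configuration can be read off the database locally. Your fallback (a fresh guard predicate in every rule body) cannot repair this: rule bodies only test for the \emph{presence} of atoms, so an adversarial database can place the guard atoms on exactly the same malformed branching structure; no body condition can enforce uniqueness of $\predright$-successors or of letter atoms, i.e.\ that the input is a canonical $D_\rho$. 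The paper instead accepts arbitrary databases and proves a global structural result: the $\predright$-connected component of each generated state atom is a finite \emph{bow tie} (\cref{lem:bowtie}) whose maximal paths each embed into some $D_\rho$, the resulting sets $\configs(A_n)$ admit injections $\predfun_n:\configs(A_{n+1})\to\configs(A_n)$ (\cref{lem:configs_is_decreasing}), hence $\size{\configs(A_n)}$ is eventually constant; only after this stabilisation point can one extract a single configuration $\rho$ --- read off a bow tie deep inside the chase, not off $D$ --- build a non-terminating chase sequence for $\angles{\aruleset_M,D_\rho}$, and then invoke \cref{lem:RM_simulates} to obtain a $\qloop$-recurring run of $M$ from $\rho$ (which suffices because robust non-recurrence quantifies over all configurations, reachable or not). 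Without this decreasing-count argument, or a substitute for it, your proof of the hard direction does not go through.
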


The theorem immediately follows from the upcoming \cref{lem:rs-termination-mem} and \cref{lem:rs-termination-hard}.

\begin{lemma}\label{lem:rs-termination-mem}
	Deciding membership in $\TRR{\forall}$ is in $\Pi_1^1$.
\end{lemma}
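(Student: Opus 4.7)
The plan is to reduce universal termination for rule sets to non-recurrence on the empty word of a suitable non-deterministic Turing machine, mirroring the construction used for \cref{lem:kb-termination-mem}. Concretely, given a rule set $\Sigma$, I construct a machine $\mathcal{N}_\Sigma$ that on empty input first guesses a database and then simulates $\mathcal{M}_\Sigma$ on it: in an initial phase, $\mathcal{N}_\Sigma$ repeatedly makes a non-deterministic choice either to append a non-deterministically selected atom $\pn{P}(\vec{c})$ (over the predicates occurring in $\Sigma$ and constants taken from an enumeration of $\Consts$) to a tape segment representing $D$, or to move on to the second phase. Once the initial phase finishes, $\mathcal{N}_\Sigma$ runs exactly the procedure of \cref{def:termination-aux-machine} with $F_0$ set to the guessed $D$, visiting the designated state $\qloop$ in the same situations.

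By construction, every terminating initial phase produces some finite database $D$, and conversely every finite database is produced on some branch. Combining this with \cref{lem:termination-aux-machine}, a run of $\mathcal{N}_\Sigma$ on $\varepsilon$ visits $\qloop$ infinitely often if and only if there is some database $D$ and some non-terminating restricted chase sequence for $\langle \Sigma, D \rangle$. Runs that never leave the initial phase or that get stuck there are harmless: by construction these do not visit $\qloop$ at all. Hence $\Sigma \in \TRR{\forall}$ if and only if $\mathcal{N}_\Sigma$ is non-recurring through $\qloop$ on the empty word, which places $\TRR{\forall}$ in $\Pi_1^1$ by \cref{prop:rec_halt_pi_one_one}.

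There is no real obstacle beyond routine bookkeeping, the only subtlety being the correct handling of constants: since $\Consts$ is countably infinite but each database is finite, the guessing loop needs to write constants via a computable enumeration and terminate after finitely many steps on the branches that matter; runs whose guessing phase never terminates are automatically discarded because they never reach $\qloop$, and this is exactly the same design trick already used inside the hardness reduction for universal non-recurrence (\cref{prop:univ_rec_halt_pi_one_one}).
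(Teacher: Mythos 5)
Your proposal is correct and takes essentially the same route as the paper: both hinge on $\mathcal{M}_\aruleset$ and \cref{lem:termination-aux-machine}, the only difference being that the paper reduces to universal non-recurrence through \qloop (whose $\Pi_1^1$ membership is itself established in \cref{prop:univ_rec_halt_pi_one_one} by exactly the guess-the-input trick you describe), whereas you inline that trick into a single machine $\mathcal{N}_\aruleset$ targeting non-recurrence on the empty word. One cosmetic slip: the guessing device you invoke comes from the \emph{membership} direction of \cref{prop:univ_rec_halt_pi_one_one}, not its hardness direction.
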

\begin{proof}
  We reduce to universal non-recurrence through \qloop.
  More precisely, we show that a rule set $\aruleset$ is in $\TRR{\forall}$ if and only if $\mathcal{M}_\aruleset$
  from \cref{def:termination-aux-machine} is universally non-recurring through \qloop.
  
  If $\aruleset$ is in $\TRR{\forall}$, then $\langle \aruleset, \db \rangle$ is in $\TKR{\forall}$ for each \db.
  Hence, by \cref{lem:termination-aux-machine}, $\mathcal{M}_\aruleset$ is non-recurring on every input that is the encoding of some database \db. On inputs that are not encodings of databases, $\mathcal{M}_\aruleset$ halts immediately by \cref{def:termination-aux-machine}.
  Therefore, $\mathcal{M}_\aruleset$ is universally non-recurring.
  
  If $\mathcal{M}_\aruleset$ is universally non-recurring through \qloop, then, in particular, $\mathcal{M}_\aruleset$ is non-recurring through \qloop on every input that is the encoding of a database. Hence, by \cref{lem:termination-aux-machine}, each restricted chase sequence for each knowledge base with $\aruleset$ is finite. Therefore, $\aruleset$ is in $\TRR{\forall}$.
\end{proof}

\begin{lemma}\label{lem:rs-termination-hard}
	$\TRR{\forall}$ membership is $\Pi_1^1$-hard.
\end{lemma}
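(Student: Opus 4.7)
The plan is to reduce from robust non-recurrence through \qloop, which is $\Pi_1^1$-complete by \cref{prop:rec_mort_pi_one_one}. This problem matches rule set termination naturally, since ``every run on every configuration visits \qloop finitely often'' mirrors ``every chase sequence of every KB built on $\aruleset$ is finite''. Given a machine $M$, I would reuse the rule set $\aruleset_M$ from \cref{section:kb-termination}, possibly extended with a few initialisation rules, and show that $\aruleset_M \in \TRR{\forall}$ if and only if $M$ is robust non-recurring through \qloop.

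The ``only if'' direction is direct. Assume $M$ admits an infinite run through \qloop on some configuration $\rho$. The construction in the proof of \cref{lemma:TM_to_chase} is uniform in the starting configuration: all it uses of the database is that it furnishes a wild frontier of $\rho$ (in the sense of \cref{def-wild-frontier}) overseen by some initial brake. Since $D_\rho$ provides exactly that by \cref{def-reduction-database}, the KB $\langle \aruleset_M, D_\rho \rangle$ admits an infinite restricted chase sequence, which witnesses $\aruleset_M \not\in \TRR{\forall}$.

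For the ``if'' direction, assume $M$ is robust non-recurring and fix an arbitrary database $D$. Suppose for contradiction that $\langle \aruleset_M, D \rangle$ admits an infinite restricted chase sequence. The structural arguments of \cref{lem:finitely_many_atoms_per_brake}, \cref{lem:infinite_sequence_of_state_atoms}, \cref{prop:consistency} and \cref{lem:RM_simulates} depend only on the rules in $\aruleset_M$ and not on the specific shape of the initial database, so re-applying them produces an infinite sequence $(A_n)_{n \in \nat}$ of state atoms along which the induced configurations $\conf(A_n)$ form a genuine run of $M$ that visits \qloop infinitely many times. Because $\conf(A_0)$ is \emph{some} configuration of $M$, this directly contradicts robust non-recurrence.

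The main obstacle is making sense of $\conf(A_0)$ when $A_0$ sits inside an arbitrary $D$: in \cref{section:kb-termination}, the base case of $\conf$ relied on $D_\rho$ encoding a unique well-formed initial configuration, whereas a general $D$ may contain state atoms embedded in incoherent or incomplete tape structure. The cleanest remedy is to augment $\aruleset_M$ with a handful of initialisation rules that, from any database, non-deterministically build a fresh well-formed starting configuration over new nulls, and to guard each transition and copy rule by a fresh unary predicate marking ``descendants of such initial configurations''. This both canonicalises $\conf(A_0)$ and, if preferred, lets one reduce directly from non-recurrence on the empty word via \cref{prop:rec_halt_pi_one_one}. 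An alternative that keeps $\aruleset_M$ untouched is to extend $\conf$ to arbitrary state atoms by picking any maximal $\predright$-chain through the head term as the initial tape and strengthening \cref{prop:consistency} to cover arbitrary surroundings; this is workable but more delicate, since one must rule out parasitic behaviours arising from database atoms that happen to match rule bodies.
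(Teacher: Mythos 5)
Your reduction source (robust non-recurrence through \qloop) and the easy direction match the paper: an infinite run through \qloop from a configuration $\rho$ yields, via \cref{prop:frontier-respawn}, an infinite restricted chase sequence on $\angles{\aruleset_M, D_\rho}$, so $\aruleset_M\notin\TRR{\forall}$. The gap is in the hard direction. You correctly identify that $\conf(A_0)$ is ill-defined for an arbitrary database $D$, but neither of your remedies closes the hole. Guarding the simulation rules with a marker predicate for ``descendants of freshly built initial configurations'' does not help, because the adversarial database ranges over the \emph{entire} schema of the modified rule set: it can simply include the marker predicate on terms arranged in an incoherent or branching tape structure, and you are back where you started. (This is also why one cannot reduce from non-recurrence on the empty word: the database can encode an arbitrary configuration, so the machine must be non-recurring from \emph{all} configurations---exactly why the paper targets the robust variant.) Your second remedy---pick any maximal $\predright$-chain through the head term---founders on the phenomenon of \cref{ex:rule_set_term_non_standard_db}: a single state atom can sit inside a \emph{superposition} of configurations (a branching ``bow tie''), and the transition producing $A_{n+1}$ may have read its tape symbol along a different branch than the chain you picked for $A_n$, so the extracted sequence $\conf(A_n)$ need not be a run of $M$ at all.

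The paper's resolution is an idea absent from your proposal. It associates with each $A_n$ the finite, non-empty set $\configs(A_n)$ of \emph{all} configurations represented by $A_n$ (one per maximal path through the bow tie $\setbowtie(A_n)$, \cref{lem:bowtie}), proves there is an injection from $\configs(A_{n+1})$ into $\configs(A_n)$ (\cref{lem:configs_is_decreasing}), and concludes that the cardinality is non-increasing and hence stabilises at some index $N$, after which the injections are bijections. It then extracts a single well-formed configuration $\rho$ from $\configs(A_N)$ and shows that the chase on the \emph{well-formed} database $D_\rho$ is itself non-terminating, so that \cref{lem:RM_simulates}---whose consistency induction (\cref{prop:consistency}) genuinely needs the well-formed base case---can be applied to $D_\rho$ rather than to $D$. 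Without this counting-and-stabilisation step and the detour through $D_\rho$, your claim that ``the induced configurations $\conf(A_n)$ form a genuine run of $M$'' is unsupported.
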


The rest of the section is dedicated to proving this lemma, by reducing robust non-recurrence through \qloop to rule set termination. In fact, the reduction is very similar to the one we use for knowledge base termination: to a machine $M$, we associate the rule set $\aruleset_M$, which will belong to $\TRR{\forall}$ if and only if $M$ is robust non-recurring through \qloop. 
The direct implication follows from \cref{lem:RM_simulates} by contrapositive: 
if a Turing machine $M$ is \emph{not} robust non-recurring through \qloop, then there is a configuration $\rho$ such that $M$ visits \qloop infinitely many times from $\rho$. Then, by \cref{lem:RM_simulates}, the restricted chase does not terminate on $\angles{\aruleset_M,D_\rho}$, and thus $\aruleset_M\notin\TRR{\forall}$. The other direction requires more work. Consider a Turing machine $M$, and assume that there is some database $D$ such that the restricted chase does not terminate on $\angles{\aruleset_M,D}$. We then show that $M$ is not robust non-recurring through \qloop.

Since the restricted chase does not terminate on $\angles{\aruleset_M,D}$, there is an infinite chase sequence from this knowledge base. We use $F$ to denote its result. As in \cref{section:kb-termination}, by \cref{lem:infinite_sequence_of_state_atoms}, $F$ contains an infinite sequence of state atoms $\seqstateatoms=(A_n)_{n\in\nat}$ such that $A_0\in D$, $A_n\prec A_{n+1}$ for all $n\in\nat$, and there are infinitely many integers $i$ such that $A_i$ is a $\predqloop$-atom.

In the knowledge base case, we had control over the database as part of the knowledge base, which meant that we could start from a ``well-formed'' database (in the sense that it encodes a single start configuration). This allowed us to extract the \emph{unique} configuration associated with a state atom. However, in the rule set case, the database $D$ leading to non-termination is arbitrary and can contain any kind of structure, as highlighted by the following example.

{ 
\newcommand*{\outxsep}{4}
\newcommand*{\inxsep}{0.6}
\newcommand*{\ysep}{0.9}

\newcommand*{\xlegend}{\xa-2*\inxsep}

\newcommand*{\xb}{0}
\newcommand*{\xa}{\xb-\inxsep}
\newcommand*{\xc}{\xb+\inxsep}

\newcommand*{\xe}{\xb+\outxsep}
\newcommand*{\xd}{\xe-\inxsep}
\newcommand*{\xf}{\xe+\inxsep}

\newcommand*{\xh}{\xe+\outxsep}
\newcommand*{\xg}{\xh-\inxsep}
\newcommand*{\xj}{\xh+\inxsep}

\newcommand*{\xl}{\xh+4*\inxsep}
\newcommand*{\xk}{\xl-\inxsep}
\newcommand*{\xm}{\xl+\inxsep}

\newcommand*{\xarra}{\xb+0.5*\outxsep}
\newcommand*{\xarrb}{\xe+0.5*\outxsep}

\newcommand*{\ya}{4*\ysep}
\newcommand*{\yb}{3*\ysep}
\newcommand*{\yc}{2*\ysep}
\newcommand*{\yd}{1*\ysep}
\newcommand*{\ye}{0}

\begin{figure}
\begin{tikzpicture}
[term/.style={draw, fill=black, circle, inner sep=1pt},
rightarrow/.style={->, thick}]

	\path[dashed, opacity=0.2] (\xlegend+0.5, \ya) edge (\xm+0.5, \ya);
	\path[dashed, opacity=0.2] (\xlegend+0.5, \yb) edge (\xm+0.5, \yb);
	\path[dashed, opacity=0.2] (\xlegend+0.5, \yc) edge (\xm+0.5, \yc);
	\path[dashed, opacity=0.2] (\xlegend+0.5, \yd) edge (\xm+0.5, \yd);
	\path[dashed, opacity=0.2] (\xlegend+0.5, \ye) edge (\xm+0.5, \ye);

	\node[term] (a-1) at (\xb, \ya) {};
	\node[term] (b1-1) at (\xa, \yb) {};
	\node[term] (b2-1) at (\xc, \yb) {};
	\node[term] (c-1) at (\xb, \yc) {};
	\node[term] (d-1) at (\xb, \yd) {};
	\node[term] (e-1) at (\xb, \ye) {};
	\node[above right] at 	(a-1) {\pn{1}};
	\node[above left] at 	(b1-1) {\pn{1}};
	\node[above right] at 	(b2-1) {\pn{0}};
	\node[below right] at 		(c-1) {\pn{0},\predqstart};
	\node[right] at 	(d-1) {\pn{1}};
	\node[above right] at 	(e-1) {\pn{1},\pn{\blank}};

	\path[rightarrow] (a-1) edge (b1-1);
	\path[rightarrow] (a-1) edge (b2-1);
	\path[rightarrow] (b1-1) edge (c-1);
	\path[rightarrow] (b2-1) edge (c-1);
	\path[rightarrow] (c-1) edge (d-1);
	\path[rightarrow] (d-1) edge (e-1);

	\node[term] (a1-2) at (\xd, \ya) {};
	\node[term] (a2-2) at (\xf, \ya) {};
	\node[term] (b1-2) at (\xd, \yb) {};
	\node[term] (b2-2) at (\xf, \yb) {};
	\node[term] (c-2) at (\xe, \yc) {};
	\node[term] (d-2) at (\xe, \yd) {};
	\node[term] (e1-2) at (\xd, \ye) {};
	\node[term] (e2-2) at (\xf, \ye) {};
	\node[above right] at 	(a1-2) {\pn{1}};
	\node[above right] at 	(a2-2) {\pn{1}};
	\node[above right] at 	(b1-2) {\pn{1}};
	\node[above right] at 	(b2-2) {\pn{0}};
	\node[below right] at 		(c-2) {\pn{1}};
	\node[right] at 	(d-2) {\pn{1},\predqloop};
	\node[above left] at 	(e1-2) {\pn{1}};
	\node[above right] at 	(e2-2) {\pn{\blank}};
	
	\path[rightarrow] (a1-2) edge (b1-2);
	\path[rightarrow] (a2-2) edge (b2-2);
	\path[rightarrow] (b1-2) edge (c-2);
	\path[rightarrow] (b2-2) edge (c-2);
	\path[rightarrow] (c-2) edge (d-2);
	\path[rightarrow] (d-2) edge (e1-2);
	\path[rightarrow] (d-2) edge (e2-2);

	\node[term] (a1-3) at (\xg, \ya) {};
	\node[term] (a2-3) at (\xj, \ya) {};
	\node[term] (b1-3) at (\xg, \yb) {};
	\node[term] (b2-3) at (\xj, \yb) {};
	\node[term] (c-3) at (\xh, \yc) {};
	\node[term] (d-3) at (\xh, \yd) {};
	\node[term] (e-3) at (\xh, \ye) {};
	\node[above right] at 	(a1-3) {\pn{1}};
	\node[above right] at 	(a2-3) {\pn{1}};
	\node[above right] at 	(b1-3) {\pn{1}};
	\node[above right] at 	(b2-3) {\pn{0}};
	\node[below right] at 		(c-3) {\pn{1}};
	\node[right] at 	(d-3) {\pn{1}};
	\node[above right] at 	(e-3) {\pn{1},\predqstart};

	\path[rightarrow] (a1-3) edge (b1-3);
	\path[rightarrow] (a2-3) edge (b2-3);
	\path[rightarrow] (b1-3) edge (c-3);
	\path[rightarrow] (b2-3) edge (c-3);
	\path[rightarrow] (c-3) edge (d-3);
	\path[rightarrow] (d-3) edge (e-3);

	\node[term] (a1-4) at (\xk, \ya) {};
	\node[term] (a2-4) at (\xm, \ya) {};
	\node[term] (b1-4) at (\xk, \yb) {};
	\node[term] (b2-4) at (\xm, \yb) {};
	\node[term] (c-4) at (\xl, \yc) {};
	\node[term] (d-4) at (\xl, \yd) {};
	\node[term] (e-4) at (\xl, \ye) {};
	\node[above right] at 	(a1-4) {\pn{1}};
	\node[above right] at 	(a2-4) {\pn{1}};
	\node[above right] at 	(b1-4) {\pn{1}};
	\node[above right] at 	(b2-4) {\pn{0}};
	\node[below right] at 		(c-4) {\pn{1}};
	\node[right] at 	(d-4) {\pn{1}};
	\node[above right] at 	(e-4) {\pn{\blank},\predqstart};
	
	\path[rightarrow] (a1-4) edge (b1-4);
	\path[rightarrow] (a2-4) edge (b2-4);
	\path[rightarrow] (b1-4) edge (c-4);
	\path[rightarrow] (b2-4) edge (c-4);
	\path[rightarrow] (c-4) edge (d-4);
	\path[rightarrow] (d-4) edge (e-4);
	
	\node[fill=white] (arra) at (\xarra, \yc) {\LARGE$\Rightarrow$};
	\node[fill=white] (arrb) at (\xarrb, \yc) {\LARGE$\Rightarrow$};
	\node[above=0cm of arra] {\predfuture};
	\node[above=0cm of arrb] {\predfuture};

	\node (lega) at (\xlegend, \ya) {\Large $a$};
	\node (legb) at (\xlegend, \yb) {\Large $b$};
	\node (legc) at (\xlegend, \yc) {\Large $c$};
	\node (legd) at (\xlegend, \yd) {\Large $d$};
	\node (lege) at (\xlegend, \ye) {\Large $e$};

\end{tikzpicture}
\caption{First three steps of the restricted chase from $\angles{\setR_M,D}$ as defined in \cref{ex:rule_set_term_non_standard_db}. The predicate $\predfuture$ and the brakes are not represented for the sake of readability, but terms are connected through the future predicate to an element on the same line at the previous step. Unlabeled arrows represent $\predright$-atoms.}
\label{fig:rule_set_term_non_standard_db}
\end{figure}
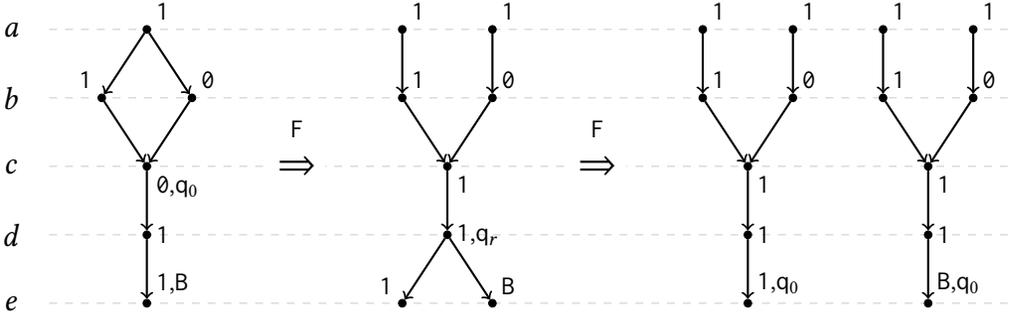
} 

\begin{example}\label{ex:rule_set_term_non_standard_db}
    Consider a Turing machine $M$ that moves to the right in every step, writing $1$ regardless of the symbol it reads. It alternates between its start state $\qinit$ and the designated state $\qloop$. Now, consider the database depicted on the left side of \cref{fig:rule_set_term_non_standard_db}, which contains the atoms $\predright(a, b_1, w)$, $\predright(a, b_2, w)$, $\predright(b_1, c, w)$, $\predright(b_2, c, w)$, $\predright(c, d, w)$, $\predright(d, e, w)$, $\predqstart(c, w)$, $\predone(a, w)$, $\predone(b_1, w)$, $\predzero(b_2, w)$, $\predzero(c, w)$, $\predone(d, w)$, $\predone(e, w)$, $\pn{\blank}(e, w)$, and $\brake(x, w)$ for all $x \in \{a, b_1, b_2, c, d, e\}$.
    This database represents four different configurations, each with a tape of size 5, the start state $\predqstart$, and the head positioned at the third cell. These configurations correspond to tapes with contents $11011$, $10011$, $1101\blank$, and $1001\blank$.
    
    As the simulation progresses, these configurations evolve simultaneously, creating new structures shown in the middle and right of \cref{fig:rule_set_term_non_standard_db}. Notice how term $e$ has two successors through the $\predfuture$ predicate, one for each symbol atom it belongs to. Furthermore, when the head encounters a branching structure, it splits into two, as observed in the third step of the simulation.
    In a sense, if the machine simulation is able to perform steps on the database at all, then it will gradually ``heal'' the structure step by step towards proper encodings of machine configurations.
\end{example}

As highlighted by this example, the structure of the set of atoms connected to a state atom not present in the database is specific: it is the union of two trees rooted in the state atom. The first has arrows going towards the state atom, and the second one has arrows going away from the state atom. In fact, this structure represent the set of paths in the initial database (after the appropriate number of steps of simulation), which we coin a bow tie, due to its shape. 

\begin{definition}\label{def:bowtie}
	The inverse $E^{-}$ of a binary relation $E$ is the relation defined by $(x,y)\in E^-$ if and only if $(y,x)\in E$. In a directed graph $G=(V,E)$ we denote with $V_x^{-y}$ the connected component\footnote{We consider here weakly connected components; a weakly connected component in a directed graph is a maximal subgraph such that there is an undirected path between any two vertices.} of $x$ in the subgraph induced by $V\setminus\set{y}$ on $G$, for any two vertices $x$ and $y$. A \emph{bow tie} is a graph $(V,E)$ with two distinguished vertices $x$ and $y$ that has the following properties:
	\begin{enumerate}
		\item $(x,y)\in E$;
		\item The subgraph induced by $V_x^{-y}$ on $(V,E^{-})$ is a directed tree rooted in $x$;
		\item The subgraph induced by $V_y^{-x}$ on $(V,E)$ is a directed tree rooted in $y$;
		\item The sets $V_x^{-y}$ and $V_y^{-x}$ form a partition of $V$; that is, they are disjoint and $V=V_x^{-y}\cup V_y^{-x}$.
	\end{enumerate}
	The edge $(x,y)$ is called the \emph{center} of the bow tie, and the sets $V_x^{-y}$ and $V_y^{-x}$ are called the \emph{left} and \emph{right} parts of the bow tie, respectively.
\end{definition}

In the following, we denote with $\semterms(F)$ (for semantically meaningful terms) the set of all the terms in $F$, except the brakes (which appear in the last position of atoms). We also define $E_R$ as the relation over $\semterms(F)$ such that $(x,y)\in E_R$ if and only if there is $w$ such that $R(x,y,w)\in F$. For all state atoms $A=\predq(x,w)$ generated during the chase, we denote the connected component of $x$ in the graph $(\semterms(F),E_R)$ with $\setbowtie(A)$. The following lemma explains how this bow tie structure is generated at each step.

\begin{toappendix}
	The following lemmas are used in later proofs of the section.
	\begin{lemma}\label{lem:future_is_inverse_functional_on_nulls}
		For all databases $D$, and every $F$ result of a chase sequence for $\langle\aruleset_M,D\rangle$, if the atoms $\predfuture(x,z,w)$ and $\predfuture(y,z,w)$ are both in $F$ and $z$ is a null, then $x=y$.
	\end{lemma}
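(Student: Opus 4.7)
The plan is an induction on the length of the chase sequence $F_0,F_1,\dots$ whose union is $F$. The base case is immediate because $F_0$ is null-free. For the inductive step, assume the claim holds in $F_i$ and consider the trigger $\langle R, \sigma \rangle$ producing $F_{i+1} = F_i \cup \sigma'(\head(R))$. If both $\predfuture$-atoms in the claim are already in $F_i$, apply induction; otherwise at least one is newly introduced and must be analysed via the syntactic shape of $R$'s head.

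The preparatory step is to enumerate where $z$ may appear as the second argument of a $\predfuture$ atom in a rule head. After unfolding every $\brake(x,w)$ macro into its four constituent atoms $\{\predfuture(x,w,w),\predright(x,w,w),\predreal(x),\predbrake(w)\}$, every $\predfuture$-atom schema in any head of $\aruleset_M$ has its second argument either (i) an existentially quantified variable fresh to that rule (e.g., $\predfuture(x,x',w)$ in $R_{\neg\qloop}^\goright$, or $\predfuture(y,y',w')$ in $R_\predcopyright$), or (ii) a body-bound variable constrained to satisfy $\predbrake$, in which case the third argument coincides syntactically with the second (e.g., $\predfuture(w,w,w)$ from $R_\predbrake$, or $\predfuture(x',w,w)$ coming from a head $\brake$-macro).

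For a non-brake null $z$, the $\predbrake$ atoms in $F$ only involve database brakes and brake nulls created as the fresh existential $w'$ of $R_\qloop^\goright$ or $R_\qloop^\goleft$; in particular $\predbrake(z) \notin F$. Hence no pattern-(ii) schema can match $z$ in the second argument, so every atom $\predfuture(\cdot,z,\cdot)\in F$ arises from a pattern-(i) schema, which can only fire at the unique chase step that introduced $z$. A rule-by-rule syntactic check then confirms that each existentially quantified variable occupies the second position of at most one $\predfuture$ atom in its rule's head. Thus only one atom $\predfuture(\cdot,z,\cdot)$ exists in $F$, and $x=y$ follows.

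The brake-null case is the main obstacle. Here pattern (ii) forces the third argument of any $\predfuture(\cdot,z,\cdot)\in F$ to be $z$ itself, so the claim reduces to uniqueness of the first argument in $\predfuture(\cdot,z,z)$. The difficulty is that the very chase step introducing $z$ as the existential $w'$ of $R_\qloop^\goright$ (or $R_\qloop^\goleft$) already emits both $\predfuture(x',z,z)$ and $\predfuture(y',z,z)$ from the two $\brake(x',w')$ and $\brake(y',w')$ macros in its head, with $x'$ and $y'$ fresh and distinct. To secure the lemma in this regime I would strengthen the inductive invariant by carrying, for every brake null $z\in F$, a distinguished representative among the nulls $x$ with $\predfuture(x,z,z)\in F$ — for instance, via an identification inside the bow-tie quotient of \cref{def:bowtie} — and verify that each subsequent rule application preserves this identification. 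Isolating and formally validating this companion invariant, which ties the pair of fresh nulls generated alongside each brake null into a single class, is the crux of the proof.
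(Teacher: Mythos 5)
Your analysis of the non-brake case is correct and is essentially the paper's own argument: after unfolding the $\brake$ macros, every $\predfuture$-atom in a rule head whose second argument is not a brake has a fresh existential variable there, and each such existential occupies the second position of exactly one $\predfuture$-atom of that head, so the atom $\predfuture(\cdot,z,\cdot)$ is created exactly once and its first argument is determined. The problem is the brake case, which you correctly identify as a counterexample to the literal statement but then do not resolve. When Rule~\ref{rule:init-right} (or \ref{rule:init-left}) fires, its output contains $\brake(x',w')$ and $\brake(y',w')$ with $x'$, $y'$, $w'$ three \emph{distinct} fresh nulls, hence both $\predfuture(x',w',w')$ and $\predfuture(y',w',w')$ lie in $F$ with $x'\neq y'$; Rules~\ref{rule:brake} and \ref{rule:nextBr} add further such atoms with yet other first arguments. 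No ``strengthened invariant'' or ``distinguished representative'' can repair this: the lemma asserts syntactic equality of terms, the restricted chase never merges nulls, and passing to a quotient of the bow-tie structure changes the statement rather than proving it. The step you call the crux is therefore not provable as stated, and the strategy you propose for it would fail.

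The resolution is that the lemma is only ever invoked for $z\in\semterms(F)$, that is, for non-brake second arguments (in the proof of \cref{lem:configs_is_decreasing}, the terms playing the role of $z$ are always cell terms on a path of a bow tie), so the statement should be read, or restated, with that restriction --- under which your first argument already closes the proof. For what it is worth, the paper's own one-line justification (``whenever an $\predfuture$-atom appears in the head of a rule, it contains an existentially quantified variable in second position, and no two $\predfuture$-atoms contain this variable in second position'') silently ignores the $\predfuture$-atoms hidden inside the $\brake$ macros and in Rules~\ref{rule:brake} and \ref{rule:nextBr}, so you have spotted a genuine imprecision; but the correct reaction is to restrict the scope of the claim to non-brake nulls, not to search for an invariant that would force two distinct nulls to coincide.
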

	\begin{proof}
		This result follows from the fact that whenever an $\predfuture$-atom appears in the head of a rule, it contains an existentially quantified variable in second position, and no two $\predfuture$-atoms contain this variable in second position. Thus, if $z$ is a null and $x$ and $y$ are different, the atoms $\predfuture(x,z,w)$ and $\predfuture(y,z,w)$ must have been generated by two rule applications, which both introduce $z$, which is impossible.
	\end{proof}

	\begin{lemma}\label{lem:letter_atom_unique_on_nulls}
		For all databases $D$, and every $F$ result of a chase sequence for $\langle\aruleset_M,D\rangle$, for each null $y$ in $\semterms(F\setminus D)$, there are a unique $w$ and a unique $a\in \Gamma \cup \{\blank\}$ such that $\predname{a}(y,w)\in F$.
	\end{lemma}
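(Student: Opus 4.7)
The plan is to exploit the fact that a non-brake null $y \in \semterms(F\setminus D)$ is introduced at a single step by a single rule application, and to show by case analysis that (i) the introducing application adds exactly one letter atom with $y$ as its first argument, and (ii) no subsequent rule application can add another such atom.

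For (i), only Rules~\ref{rule:regular-right},~\ref{rule:regular-left},~\ref{rule:init-right},~\ref{rule:init-left},~\ref{rule:copyright},~\ref{rule:copyleft}, and~\ref{rule:end} have existential quantifiers in their heads, so $y$ must have been introduced by one of them. By direct inspection of each head, every existentially quantified variable that does not become a brake receives exactly one letter atom, with a specific letter from $\Gamma \cup \{\blank\}$ and a specific brake in the second position (either the one from the body or the freshly introduced brake $w'$). Crucially, whenever a conjunction $\brake(t, u)$ appears in the head, it expands only to atoms with predicates $\predfuture$, $\predright$, $\predreal$ and $\predbrake$, so it contributes no additional letter atoms on $t$.

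For (ii), the key observation is that letter atoms appear in rule heads in only two configurations: in Rule~\ref{rule:brake}, where the first argument $w$ must be a brake (since the body requires $\predbrake(w)$), and in the heads of the seven rules above, where the first argument of every letter atom is existentially quantified and therefore mapped to a fresh null distinct from $y$. Since $y$ is not a brake, Rule~\ref{rule:brake} cannot produce a letter atom with $y$ in the first position; and since no existentially quantified variable is instantiated to an already existing term, no other rule can add a letter atom of the form $\predname{a}(y, w')$ either.

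The main subtlety is simply being exhaustive: one must check that no letter atom is hidden inside a $\brake(\cdot, \cdot)$ expansion, and verify for each of the seven rules that the brake paired with each introduced letter atom on a fresh non-brake null is uniquely determined by the rule application. Once these checks are completed, the uniqueness of both $a \in \Gamma \cup \{\blank\}$ and $w$ such that $\predname{a}(y, w) \in F$ follows immediately from the fact that $y$ is created exactly once.
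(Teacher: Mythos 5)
Your proposal is correct and follows essentially the same argument as the paper: every existentially quantified non-brake variable in a rule head occurs in exactly one letter atom (giving existence), and every letter atom in a rule head other than \ref{rule:brake} has an existentially quantified first argument, so no later application can attach a second letter atom to an existing null (giving uniqueness). Your explicit checks that the $\brake(\cdot,\cdot)$ expansion contains no letter atoms and that the fresh brake $w'$ is excluded by the restriction to $\semterms$ are slightly more careful than the paper's phrasing, but the route is the same.
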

	\begin{proof}
		Whenever there is an existentially quantified variable $x$ in the head of a rule in $\aruleset_M\setminus\set{\ref{rule:brake}}$, it appears in a unique atom of the form $\predname{a}(x,w)$ in the same head. In addition, all the atoms of the same form in heads of rules feature an existentially quantified variable in first position (except for \ref{rule:brake}, which feature a brake). Thus, when the null $y$ is introduced in the chase, there is a unique atom $\predname{a}(y,w)$ introduced along with it (hence implying existence), and no other rule can introduce an atom of the same form (hence implying uniqueness).
	\end{proof}
\end{toappendix}

\begin{lemmarep}\label{lem:bowtie}
	For all database $D$, and every $F$ result of a chase sequence for $\langle\aruleset_M,D\rangle$, the graph $\setbowtie(A)$ is a finite bow tie for all state atoms $A\in F\setminus D$.
	In addition:
	\begin{itemize}
		\item The center of the bow tie is the atom generated along with $A$, by rule \ref{rule:regular-left}, \ref{rule:regular-right}, \ref{rule:init-left} or \ref{rule:init-right};
		\item all the atoms in the left part of the bow tie are generated by rule \ref{rule:copyleft};
		\item all the atoms in the right part of the bow tie are generated by rule \ref{rule:copyright}, except possibly the end of a maximal path, which may have been generated by rule \ref{rule:end}.
	\end{itemize}
\end{lemmarep}
\begin{proofsketch}
	This proof relies on an analysis of how $\predright$-atoms are generated during the chase. All the rules that generate $\predright$-atoms (over non-brake terms) generate $\predright$-atoms containing at least one existentially quantified variable. Three cases occur:
	\begin{itemize}
		\item Rules \ref{rule:regular-left}, \ref{rule:regular-right}, \ref{rule:init-left} and \ref{rule:init-right} generate an $\predright$-atom $\predright(u,v,w)$ where $u$ and $v$ are both existentially quantified.
		\item Rule \ref{rule:copyleft} generates an $\predright$-atom $\predright(u,v,w)$ where $u$ is existentially quantified and $v$ is a frontier variable.
		\item Rules \ref{rule:copyright} and \ref{rule:end} generate an $\predright$-atom $\predright(u,v,w)$ where $u$ is a frontier variable and $v$ is existentially quantified.
	\end{itemize}

	Thus, all connected components are generated by a rule of the first kind, and then extended to the left by a rule of the second kind, and to the right by a rule of the third kind. Since no rule can generate an atom $\predright(u,v,w)$ where $u$ and $v$ are both frontier variable (assuming $u$ and $v$ are not brakes), this yields the wanted structure.
	Finiteness is guaranteed by the emergency brakes.
\end{proofsketch}
\begin{proof}
	First, notice that all the rules that generate $\predright$-atoms (over non-brake terms) generate $\predright$-atoms containing at least one existentially quantified variable. Three cases occur:
	\begin{itemize}
		\item Rules \ref{rule:regular-left}, \ref{rule:regular-right}, \ref{rule:init-left} and \ref{rule:init-right} generate an $\predright$-atom $\predright(u,v,w)$ where $u$ and $v$ are both existentially quantified.
		\item Rule \ref{rule:copyleft} generates an $\predright$-atom $\predright(u,v,w)$ where $u$ is existentially quantified and $v$ is a frontier variable.
		\item Rules \ref{rule:copyright} and \ref{rule:end} generate an $\predright$-atom $\predright(u,v,w)$ where $u$ is a frontier variable and $v$ is existentially quantified.
	\end{itemize}

	Thus, no connected component can contain two $\predright$-atoms that are generated using a rule among rules \ref{rule:regular-left}, \ref{rule:regular-right}, \ref{rule:init-left} and \ref{rule:init-right}. Indeed, these rules create a new connected component, and to connect two connected components, we need a rule generating an $\predright$-atom $\predright(u,v,w)$ where $u$ and $v$ are both frontier variables, which is not the case with this rule set. This also implies that $(\setbowtie(A), E_R)$ is acyclic, even when seen as an undirected graph, for the same reason.

	Thus, since $A=\predq(x,w)$ is generated by a rule among \ref{rule:regular-left}, \ref{rule:regular-right}, \ref{rule:init-left} and \ref{rule:init-right} along with an $\predright$-atom, all the other atoms in the connected component of $x$ must have been generated by \ref{rule:copyleft}, \ref{rule:copyright} or \ref{rule:end}. We assume here that $A$ was generated by rule \ref{rule:regular-left} or \ref{rule:init-left}, as the other cases are symmetric. Then, $A$ is generated along the atom $\predright(x,y,w)$, which will be the center of our bow tie, and atoms $\predcopyleft(x,w)$ and $\predcopyright(y,w)$.

	We then consider the sets $\setbowtie(A)_{x}^{-y}$ and $\setbowtie(A)_{y}^{-x}$ as defined in \cref{def:bowtie}. First, as mentioned before, the undirected graph induced by $(\setbowtie(A),E_R)$ is acyclic and connected, so these sets form a partition of $\setbowtie(A)$. Thus, it only remains to show that the subgraphs induced by $\setbowtie(A)_x^{-y}$ on $(\setbowtie(A),E_R^-)$ and by $\setbowtie(A)_y^{-x}$ on $(\setbowtie(A),E_R)$ are trees. Again, since both proofs are similar, we only prove it for the second graph.

	A directed tree is an acyclic and connected graph such that each vertex has in-degree at most one. Since $(\setbowtie(A),E_R)$ is acyclic, the subgraph induced by $\setbowtie(A)_y^{-x}$ is acyclic too, and as it is a connected component, it is connected. Thus, it only remains to show that each term in $\setbowtie(A)_y^{-x}$ has an in-degree of at most one. Our previous analysis of the rules entails that only a term $t$ such that $\predcopyleft(t,w)\in F$ can have an in-degree greater than one. Indeed, the only rule that can increase the in-degree of an existing element is rule \ref{rule:copyleft}, which requires this atom in its body. We thus show that there is no $t$ in $\setbowtie(A)_y^{-x}$ such that $\predcopyleft(t,w)\in F$.

	Only two kinds of rules can generate $\predcopyleft$-atoms (over non-brakes), which are transition rules (\ref{rule:regular-left}, \ref{rule:regular-right}, \ref{rule:init-left} and \ref{rule:init-right}), and the rule \ref{rule:copyleft}. All these rules generate atoms of the form $\predcopyleft(u,w)$ where $u$ is existentially quantified. As stated before, in $\setbowtie(A)$, only the atom $\predright(x,y,w)$ has been generated using a transition rule, and every other $\predright$-atom has been generated using \ref{rule:copyleft} or \ref{rule:copyright}. Now, for a contradiction, assume that $t$ is the first term of $\setbowtie(A)_y^{-x}$ introduced during the chase such that $\predcopyleft(t,w)\in F$. Since the trigger generating $\predright(x,y,w)$ only generates $\predcopyleft(x,w)$, and $x\notin\setbowtie(A)_y^{-x}$, the term $t$ has been generated by rule \ref{rule:copyleft}. This means that there is a term $u\in\setbowtie(A)_y^{-x}$ such that $\predcopyleft(u,w)\in F$ before $\predcopyleft(t,w)$ is introduced, which contradicts our hypothesis. Note that $u$ does have to be in $\setbowtie(A)_y^{-x}$, since otherwise, $t\notin\setbowtie(A)_y^{-x}$, as no rule can connect two disjoint connected components.

	Thus, there is no $\predcopyleft$-atom over a term in $\setbowtie(A)_y^{-x}$, meaning that $(\setbowtie(A)_y^{-x}, E_R)$ is a tree. As mentioned before, an analog line of reasoning can be used to show that $(\setbowtie(A)_x^{-y}, E_R^-)$ is also a tree, so $\setbowtie(A)$ is indeed a bow tie. Note also that since no $\predcopyleft$-atom over a term in $\setbowtie(A)_y^{-x}$, all the $\predright$-atoms of the right part of the bow tie must have been generated by rule \ref{rule:copyright} or \ref{rule:end}. However, rule \ref{rule:end} generates a new null $y$ such that $\predcopyright(y,w)\notin F$ (by the same description as previously), and both rules \ref{rule:copyright} and \ref{rule:end} require an atom of this form to extend a path. Thus, if an $\predright$-atom is generated using rule \ref{rule:end}, it is necessarily the end of a maximal path.
	
	It remains to show that $\setbowtie(A)$ is finite: 
	if $F$ is finite, then so is $A$ and therefore $\setbowtie(A)$.
	Otherwise, note that all atoms in $\setbowtie(A)$ are associated with the same brake $w$.
	Then, by \cref{lem:finitely_many_atoms_per_brake}, $\setbowtie(A)$ must be finite.
\end{proof}

We now have a bit of structure to work with. Let us give a bit of intuition before concluding the proof. We have considered an infinite sequence $\seqstateatoms = (A_n)_{n\in\nat}$ of state atoms, with $A_0\in D$ and $A_n\prec A_{n+1}$ for all $n\in\nat$, and we have just shown that to each state atom (not in $D$) is attached a bow tie structure. As mentioned before, the bow tie $\setbowtie(A_n)$ 
consists in a set of (non-disjoint) paths that represent configurations that can be obtained from a configuration containing $A_0$ in the database, after $n$ steps of simulation. 
In addition, \cref{lem:bowtie} shows how each of these paths is constructed using a path from $\setbowtie(A_{n-1})$. We also have seen in \cref{ex:rule_set_term_non_standard_db} that a bow tie can get split. From these two facts we get that the number of configurations represented by $\setbowtie(A_n)$ decreases as $n$ grows. Since this number is an integer, and each bow tie represents at least one configuration, this sequence will be stationnary at some point $N$. At this point, we know that each of the configurations represented by $\setbowtie(A_N)$ visits $\qloop$ infinitely many time. Thus, we pick such a configuration $\rho$, and we show that the restricted chase does not terminate on $\angles{\aruleset_M,D_\rho}$, which is enough to conclude the proof by \cref{lem:RM_simulates}. We then formalize this argument.

\begin{definition}\label{def:configs_A_n}
	The \emph{set of configurations} $\configs(A_n)$ \emph{associated to a state atom} $A_n=q(x,w)\in\seqstateatoms$, with $n>0$, is the set whose elements are the sets
	\[\set{A_n}\cup\bigcup_{i\leq m}\brake(x_i,w)\cup\setst{\pn{P}(y_1,\dots,y_k,w)\in F}{P\in\set{\predright,\pn{0},\pn{1},\pn{\blank},\predend} \text{ and } \forall i,\ y_i\in\set{x_1,\dots,x_m}}\]
	for all maximal paths $(x_1,\dots,x_m)$ in $\setbowtie(A_n)$.
\end{definition}

\begin{toappendix}
Recall that $\seqstateatoms = (A_n)$ is the sequence of state atoms provided by Lemma~\ref{lem:infinite_sequence_of_state_atoms}.
\end{toappendix}

\begin{lemmarep}\label{lem:configs_is_decreasing}
	For all $n > 0$, $\configs(A_n)$ is finite, non-empty, and each of its elements homomorphically embeds into $D_\rho$ for some configuration $\rho$. Also, there is an injective function $\predfun_n$ from $\configs(A_{n+1})$ to $\configs(A_n)$ such that $S\in\configs(A_{n+1})$ can be generated using only atoms in $\predfun_n(S)$.
\end{lemmarep}
\begin{proofsketch}
	To each set $S\in\configs(A_{n+1})$ we can associate a configuration $\rho$ and a path $p$ in $\setbowtie(A_{n+1})$. We then define $\predfun_n(S)$ as the set of atoms that was used to generate it, which is not hard: its associated configuration is an extension of a configuration that yields $\rho$, and its associated path is connected through the $\predfuture$-predicate to $p$. To show injectivity of $\predfun_n$, we then rely on a lemma stating that if $\predfuture(x,z,w)$ and $\predfuture(y,z,w)$ are both in $F$, then $x=y$.
\end{proofsketch}
\begin{proof}
	\textbf{Non-emptiness and finiteness.} Non-emptiness and finiteness of $\configs(A_n)$ follow from \cref{lem:bowtie}, since a finite bow tie has a finite non-zero amount of maximal paths.

	\noindent\textbf{The elements of $\configs(A_n)$ embed into some $D_\rho$.} We then consider an element $S$ of $\configs(A_n)$, and $(x_1,\dots,x_n)$ the path associated with it. Also, let $A_n=q(x,w)$. First, since $(x_1,\dots,x_n)$ is a path in $(\setbowtie(A_n),E_R)$, for all $i$, the atom $\predright(x_i,x_{i+1},w)$ is in $S$ for all $i$, and these are all the $\predright$-atoms in $S$ by \cref{lem:bowtie}. Then, by \cref{lem:letter_atom_unique_on_nulls}, there is a unique atom $\pn{a_i}(x_i,w)$ for all $i\leq n$. In addition, since all the maximal paths in a bow tie go through its center, there is some $p$ such that $x_p=x$. We thus define the configuration $\rho=\langle n,(a_i)_{i\leq n},p,q\rangle$.

	By mapping $x_i$ to $c_i$ for all $i$, and $w$ to $w_1$, we get that $S$ and $D_\rho$ are isomorphic, except for the $\predend$-atoms. However, as per the last item of \cref{lem:bowtie}, the only position that can have $\predend(x_i,w)$ is the end of a maximal path (since this kind of atoms can only be generated by rule \ref{rule:end} over non-brakes). Thus, the only possible $\predend$-atom in $S$ is $\predend(x_n,w)$, which has a counterpart in $D_\rho$. Thus, $S$ homomorphically embeds into $D_\rho$.

	\noindent\textbf{Construction of $\predfun_n$.} We then construct the function $\predfun_n$. Let $A_n=\predq(x,w)$ and $A_{n+1}=\predqp(y,w')$. First notice that the rule that generates $A_{n+1}$ in the chase is among \ref{rule:regular-left}, \ref{rule:regular-right}, \ref{rule:init-left} and \ref{rule:init-right}. Then, there are some atoms $\predfuture(x,z,w')$, and $\predright(z,y,w')$ or $\predright(y,z,w')$ depending on the direction of the transition. We then assume that the transition is to the right, as the left case is analogous.

	Consider a set $S\in\configs(A_{n+1})$, $(y_1,\dots,y_k)$ the associated path, and $\rho'=\langle k,(b_1,\dots,b_k),p',q'\rangle$ the associated configuration, as defined earlier in the proof. Then, let $\predfun_n(S)$ be one of the sets in $\configs(A_n)$ with associated path $(x_1,\dots,x_m)$ and configuration $\rho=\langle m,(a_1,\dots,a_m),p,q\rangle$ such that there is an integer $l$ such that
	\begin{itemize}
		\item for all $i<k$, we have $\predfuture(x_{i+l},y_i,w')\in F$;
		\item if $\predend(x_k,w')\in S$, then $\predend(y_{k+l-1})\in F$, and otherwise $\predfuture(x_{k+l},y_k,w')\in F$;
		\item for all $i\neq p'-1$, $b_i=a_{i+l}$;
		\item $p'+l=p+1$.
	\end{itemize}

	\noindent\textbf{The function $\predfun_n$ is well-defined.} By definition of $S$ and its associated path and configuration, there must be some atoms $\predright(y_i,y_{i+1},w')$ and $\predname{b_i}(y_i,w')$ for all $i$, with $y=y_{p'}$. By \cref{lem:bowtie}, $\predright(y_{p'-1},y_{p'})$ has been generated by rule \ref{rule:regular-right} or \ref{rule:init-right} along with $A_{n+1}$, and $\predright(x_{k-1},x_k)$ may have been generated by rule \ref{rule:end} or \ref{rule:copyright}. Other than that, all the $\predright$-atoms in the path $y_1,\dots,y_k$ have been generated by rules \ref{rule:copyleft} and \ref{rule:copyright}. We then show that there is a path $x'_1,\dots,x'_{k'}$ such that for all $i<k$, $\predfuture(x'_i,y_i,w)\in F$, for all $i\neq p'-1$, $\predname{b_i}(x'_i,w)\in F$, and either $\predend(x'_{k-1},w)\in F$ (and $k'=k-1$) or $\predfuture(x'_k,y_k,w)\in F$ (and $k'=k$), depending on whether $\predend(y_k,w)\in S$ or not.

	First, since the atom $A_{n+1}$ has been generated by rule \ref{rule:regular-right} or \ref{rule:init-right}, there must be a term $z$ and some atoms $\predright(x,z,w)$, $\predright(y_{p'-1},y,w)$, $\predfuture(x,y_{p'-1},w)$ and $\predfuture(z,y,w)$ in $F$. Thus, let $x'_{p'-1}=x$ and $x'_{p'}=z$. We will then extend this path in both directions to construct $x'_1,\dots,x'_{k'}$.

	If the path has been extended up to $x'_{p'+i}$ for some $i<k-1-p'$, we then extend it to $x'_{p'+i+1}$. As mentioned before, the atom $\predright(y_{p'+i},y_{p'+i+1},w)$ has been generated by rule \ref{rule:copyright} (since $p'+i<k$). Thus, there must be some terms $z,t$ and atoms $\predright(z,t,w)$, $\predfuture(z,y_{p'+i},w)$, $\predfuture(t,y_{p'+i+1},w)$ and $\predname{b_i}(t,w)$ in $F$. By \cref{lem:future_is_inverse_functional_on_nulls}, we then have $z=x'_{p'+i}$, since both $\predfuture(z,y_{p'+i},w)$ and $\predfuture(x'_{p'+i},y_{p'+i},w)$ are present in $F$. We thus set $t=x'_{p'+i+1}$. The same reasoning lets us extend the path to $x'_{p'-i-1}$ provided we have extended it to $x'_{p'-i}$, using the left copy rule instead of the right copy.

	We now treat the case where $i=k-1-p$. If $\predend(y_k,w)\notin S$, then $\predright(y_{k-1},y_k,w)$ has been generated by rule \ref{rule:copyright}, so the same reasoning as before applies, and $k'=k$. Otherwise, $\predright(y_{k-1},y_k,w)$ ha been introduced by rule \ref{rule:end}, meaning that there are some term $z$ and atoms $\predend(z,w)$ and $\predfuture(z,y_{k-1})$ in $F$. Thus, since both $\predfuture(z,y_{k-1})$ and $\predfuture(x'_{k-1},y_{k-1})$ are in $F$, by \cref{lem:future_is_inverse_functional_on_nulls}, $z=y_{k-1}$, and we have the atom $\predend(y_{k-1},w)$ in $F$ as promised, and $k'=k-1$.

	We thus have a path $x'_1,\dots,x'_k$ in $\setbowtie(A_n)$ as described before. However, this path does not define an element of $\configs(A_n)$, since it is not maximal. Thus, consider any maximal path $x_1,\dots,x_m$ in $\setbowtie(A_n)$ that extends $x'_1,\dots,x'_k$, $\predfun_n(S)$ the corresponding set in $\configs(A_n)$, $\langle m,(a_1,\dots,a_m),p,q\rangle$ the corresponding configuration, and let $l$ be the integer such that $x_{l+1}=x'_1$. Then, by definition of $(x'_1,\dots,x'_{k'})$, the first two points of the definition of $\predfun_n(S)$ hold. Then, since $A_n=\predq(x'_{p'-1},w)$, and $x'_{p'-1}=x_{p'-1+l}$, we have $p=p'-1+l$, so $p+1=p'+l$. In addition, since for all $i\neq p'-1$, $\predname{b_i}(x_{i+l},w)\in F$, we have $a_{i+l}=b_i$. Thus, there is indeed a set in $\configs(A_n)$ that fits the definition of $\predfun_n(S)$. Note however that this path is not necessarily unique, but we only need an injective function, so this is fine.

	\noindent\textbf{The set $\predfun_n(S)$ is enough to generate $S$.} First note that all the rule applications described earlier suffice to generate $S$. It is then enough to notice that all the atoms in the support of the mentioned triggers are present in $\predfun_n(S)$, or generated during the application of the previous triggers.

	\noindent\textbf{Injectivity of $\predfun_n$.} \sloppypar{Consider two sets $S_1$ with associated path $(y_1,\dots,y_{k_1})$ and configuration $\langle k_1,(b_1,\dots,b_{k_1}),p_1,q'\rangle$, and $S_2$ with associated path $(y'_1,\dots,y'_{k_2})$ and configuration $\langle k_2,(b'_1,\dots,b'_{k_2}),p_2,q'\rangle$, such that $\predfun_n(S_1)=\predfun_n(S_2)=S'$, and $S'$ has path $(x_1,\dots,x_m)$ and configuration $\langle m,(a_1,\dots,a_m),p,q\rangle$. Thus, there must be some $l_1$ and $l_2$ such that:}
	\begin{itemize}
		\item for all $i$, $\predfuture(x_{i+l_1},y_i,w')\in F$ and $\predfuture(x_{i+l_2},y'_i,w')\in F$;
		\item for all $i\neq p_1-1$, $b_i=a_{i+l_1}$ and for all $i\neq p_2-1$, $b'_i=a_{i+l_2}$;
		\item $p_1+l_1=p+1=p_2+l_2$.
	\end{itemize}
	Assume w.l.o.g. that $l_1\geq l_2$, and let $d=l_1-l_2$. We then get that $p_2=p_1+d$, and $b_i=a_{i+l_1}=a_{i+d+l_2}=b'_{i+d}$, for all $i\neq p_1$. We then show that for all $i$ such that $1\leq i\leq k_1$ and $1\leq i+d \leq k_2$, we have $y_i=y'_{i+d}$. First, this is true for $i=p_1$, since $y_{p_1}=y=y'_{p_2}$ (where $A_{n+1}=\predqp(y,w')$) and $p_2=p_1+d$. This is also true for $i=p_1-1$, since by definition of a bow tie and \cref{lem:bowtie}, there is only one term $t$ such that $\predright(t,y_{p_1},w')\in F$. We then extend this to all $i$ by induction.

	Assume that $1\leq i+1\leq k_1$ and $1\leq i+1+d \leq k_2$, and that $y_i=y'_{i+d}$ for some $i\geq p_1$ (the case where $i\leq p_1-1$ is similar, using \ref{rule:copyleft} instead of \ref{rule:copyright}). We then show that $y_{i+1}=y'_{i+1+d}$. Both the atoms $\predright(y_i,y_{i+1},w')$ and $\predright(y_i,y'_{i+1+d},w')$ have been generated using rule \ref{rule:copyright}. We then show that the triggers generating these atoms are equal, so these atoms must be equal. The body of rule \ref{rule:copyright} is $\set{\predcopyright(x',w'),\predfuture(x,x',w'),\predright(x,y,w),\predname{b_i}(y,w),\predreal(x),\predreal(x'),\predreal(y)}$. To generate $\predright(y_i,y_{i+1},w')$, $x'$ must be mapped to $y_i$ (and $w'$ to himself). Then, by \cref{lem:future_is_inverse_functional_on_nulls}, each term $v$ can only have one term $u$ such that $\predfuture(u,v,w)\in F$, so $x$ is mapped to $x_{i+l_1}$ and $y$ to $x_{i+1+l_1}$ (and $w$ to himself), since $\predfuture(x_{i+l_1},y_i,w')$ and $\predfuture(x_{i+1+l_1},y_{i+1},w')$. However, we also have $\predfuture(x_{i+l_1},y'_{i+d},w')$ and $\predfuture(x_{i+1+l_1},y'_{i+1+d},w')$, so the triggers generating $\predright(y_i,y_{i+1},w')$ and $\predright(y_i(\rho,\vx),y'_{i+1+d},w')$ are equal, and $y_{i+1}=y'_{i+1+d}$.

	Thus, $l_1=l_2$ and $k_1=k_2$. Indeed, if $l_1>l_2$, then we can extend $y_1,\dots,y_{k_1}$ into a bigger path $y'_1,\dots,y'_d,y_1,\dots,y_{k_1}$, which contradicts its maximality. If $k_1\neq k_2$, then we can extend the shortest path into the longest, also contradicting its maximality. Thus, both paths are equal, and $S_1=S_2$. From this we deduce that $\predfun_n$ is injective.
\end{proof}

Since for all $n$, there is an injective function from $\configs(A_{n+1})$ to $\configs(A_n)$, the sequence $(\size{\configs(A_n)})_{n\in\nat_{>0}}$ is a decreasing sequence of natural numbers, as mentioned before. Thus, there must be some $N\in\nat$ such that for all $n\geq N$, $\size{\configs(A_n)}=\size{\configs(A_N)}>0$. We pick $S_0$ in $\configs(A_N)$, and let $\rho$ be a configuration such that $S_0$ homomorphically embeds into $D_\rho$.

\begin{lemmarep}
	The restricted chase does not terminate on $\angles{\aruleset_M,D_\rho}$.
\end{lemmarep}
\begin{proofsketch}
	Since for all $n\geq N$, $\size{\configs(A_n)}=\size{\configs(A_N)}$, $\predfun_n$ is actually a bijection. We thus define $S_{n+1}$ as $\predfun_{N+n}^{-1}(S_n)$.
	Intuitively, the sequence $(S_n)_{n\in\nat}$ encodes the run of $M$ that visits \qloop infinitely many times from $\rho$. We then construct an infinite chase sequence from $\angles{\aruleset_M,D_\rho}$ such that $S_n$ homomorphically embed in it for all $n$.
\end{proofsketch}
\begin{proof}
	First note that since for all $n\geq N$, $\size{\configs(A_n)}=\size{\configs(A_N)}$, $\predfun_n$ is actually a bijection, since it is injective between sets of equal sizes. It thus has an inverse $\predfun_n^{-1}$. Thus, for all $n\in\nat$, we define $S_{n+1}$ as $\predfun_{N+n}^{-1}(S_n)$.
	Note that we picked $S_0$ from $\configs(A_N)$ and that $S_0$ homomorphically embeds into $D_\rho$.

	We then inductively construct a sequence of derivations $(\der_n)_{n\in\nat}$ such that for all $n\in\nat$, if $S_n$ is over terms $x_1,\dots,x_k,w$, then
	\begin{itemize}
		\item $\der_{n+1}$ extends $\der_n$;
		\item there is a homomorphism $\pi_n$ from $S_n$ to the result $R_n$ of $\der_n$;
		\item if $\predfuture(\pi_n(x_i),y,\pi_n(w))\in\der_n$ for some $i$, then $y=\pi_n(w)$;
		\item $\predreal(\pi_n(w))\notin\der_n$
	\end{itemize}

	First, as stated in \cref{lem:configs_is_decreasing}, $S_0$ embeds in $D_\rho$, so we let $\der_0=D_\rho$, which does fulfill all the conditions above. Then, assume that we have constructed derivation $\der_n$ as described. By \cref{lem:configs_is_decreasing} again, all the atoms in $S_{n+1}$ can be generated using only atoms in $\predfun_n(S_{n+1})=S_n$. We thus extend the derivation $\der_n$ into a derivation $\der'_{n+1}$ with the triggers needed to generate the atoms in $S_{n+1}$, composed with $\pi_n$. All these triggers are applicable since they all create atoms of the form $\predfuture(\pi_n(x_i),y,\pi_n(w))$ and $\predreal(y)$, which are not in the database by the third and forth item. The homomorphism $\pi_{n+1}$ is then defined naturally (the triggers that generate $S_{n+1}$ from $S_n$ were used here to generate new nulls, to which we can map nulls of $S_{n+1}$). Then, if $S_n$ contains an atom of the form $\predqloop(x,w')$, we add the trigger $(\ref{rule:brake},\set{w\to\pi_n(w)})$ at the end of this new derivation, to construct $\der_{n+1}$. The first and second point then follow by design. The third point follows from the fact that the triggers that were used to generate $S_{n+1}$ from $S_n$ do not generate other $\predfuture$-atoms, and the last point from the fact that if $\predqloop(x,w')\in S_n$, then $S_n$ and $S_{n+1}$ use different brakes.

	We now show that the derivation $\der=\bigcup_{n} \der_n$ is fair. First, by \cref{lem:infinite_sequence_of_state_atoms}, there are infinitely many $\predqloop$-atoms in $(A_n)_{n\in\nat}$, and thus infinitely many $n\in\nat$ such that $S_n$ contains a $\predqloop$-atom. Then, notice that whenever we encounter a $\predqloop$-atom in $\der$, we make the previous brake real, blocking any rule application involving the atoms containing it. Thus, for any trigger that is applicable at some step $n$, there is a step $m$ at which the brake that appears in this trigger's support gets real, making this trigger obsolete. Thus, $\der$ is fair, and the restricted chase does not terminate on $\angles{\aruleset_M,D_\rho}$.
\end{proof}

By \cref{lem:RM_simulates}, this means that there is a run of $M$ which visits \qloop infinitely many times, and thus that $M$ is not robust non-recurring through \qloop, concluding the reduction.

\section{An Alternative to Fairness to Simplify Restricted Chase Termination}
\label{section:fairness}

All chase variants can be applied to semi-decide Boolean conjunctive query (BCQ) entailment.
This is the case because, if a KB \kb entails a BCQ \aquery under standard first-order semantics, then every chase sequence of \kb features a fact set that entails \aquery.
Consequently, it suffices to compute an (arbitrarily large) finite prefix of any (arbitrarily chosen) chase sequence of \kb to semi-decide whether \kb entails \aquery.
Note that semi-decidability of BCQ entailment breaks down if we remove the fairness condition from the definition of a chase sequence.
Unfortunately, this condition complicates the problem of universal termination for the restricted chase (see \cref{theorem:kb-completeness,thm:rule_set_termination_hardness}).
To address this situation, we propose an alternative to fairness in the following definition that retains semi-decidability while simplifying the termination problem of the chase (see \cref{theorem:termination-b1r}).


\begin{definition}
\label{definition:breadth-first-chase}
A \emph{breadth-first chase sequence} for a KB $\langle \aruleset, \db \rangle$ is a chase derivation $F_0, F_1, \ldots$ such that, $(\dagger)$ if some $\aruleset$-trigger $\lambda$ is loaded for some $F_i$, then there is some $j \in \{i, \ldots, i+n\}$ such that $\lambda$ is obsolete for $F_j$ and $n$ is the (finite) number of $\aruleset$-triggers that are loaded and not obsolete for $F_i$.
\end{definition}

Note that, since $(\dagger)$ implies fairness as introduced in \cref{definition:chase}, every breadth-first chase sequence is also a chase sequence and we preserve semi-decidability of BCQ entailment.

\begin{definition}
Let \TKRBF{\forall} be the class of all KBs that only admit finite breadth-first chase sequences.
Let \TRRBF{\forall} be the class containing a rule set if $\TKRBF{\forall}$ contains all KBs with this rule set.
\end{definition}

\begin{theorem}
\label{theorem:termination-b1r}
The class \TKRBF{\forall} is in \re, and the class \TRRBF{\forall} is in $\Pi_2^0$.
\end{theorem}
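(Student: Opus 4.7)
The plan is to reduce both statements to a single K\"onig's lemma argument applied to a tree of valid partial breadth-first chase derivations.

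Given a KB $\kb$, I would define a tree $T_\kb$ whose nodes are finite chase derivations $F_0, \ldots, F_k$ (in the sense of conditions (1) and (2) of \cref{definition:chase}) that additionally satisfy the following local version of $(\dagger)$: for every $i \le k$ with $i + n_i \le k$, where $n_i$ denotes the number of $\aruleset$-triggers loaded and not obsolete for $F_i$, every trigger loaded for $F_i$ becomes obsolete for some $F_j$ with $i \le j \le i + n_i$. The children of a node are its one-step extensions that remain in $T_\kb$. Since every $F_i$ admits only finitely many loaded-and-not-obsolete triggers, $T_\kb$ is finitely branching. The crucial observation is that infinite branches of $T_\kb$ correspond bijectively to infinite breadth-first chase sequences for $\kb$: the global condition $(\dagger)$ on an infinite sequence is equivalent to the local version being satisfied on every finite prefix, because each deadline $i + n_i$ is a finite number.

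To show \TKRBF{\forall} is in \re, I would semi-decide membership by enumerating $T_\kb$ in breadth-first order and accepting as soon as some depth contains no node, which witnesses that $T_\kb$ is finite. By the observation above, $\kb \in \TKRBF{\forall}$ iff $T_\kb$ has no infinite branch, iff $T_\kb$ is finite (by K\"onig's lemma applied to a finitely branching tree), so this procedure terminates on exactly those KBs in \TKRBF{\forall}.

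For \TRRBF{\forall}, membership in $\Pi_2^0$ follows by rephrasing the class as the set of $\aruleset$ such that for every database $D$ there exists a bound $k$ with $T_{\langle \aruleset, D \rangle}$ of depth at most $k$. The matrix is decidable (compute the tree up to depth $k$ and check that layer $k+1$ is empty), the outer quantifier ranges universally over finite databases, and the inner one is an existential over a natural number, yielding a $\forall \exists$ formula over a decidable matrix, hence $\Pi_2^0$. The main subtlety I would need to verify carefully is the correspondence between infinite branches of $T_\kb$ and infinite breadth-first chase sequences, namely that the local version of $(\dagger)$ defining the tree is equivalent to the full global condition on infinite sequences; this reduces to noting that each deadline $i + n_i$ is finite, so every obligation imposed by $(\dagger)$ becomes verifiable in some finite prefix.
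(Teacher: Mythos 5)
Your proposal is correct and follows essentially the same route as the paper: the paper's procedure enumerates, level by level, exactly the nodes of your tree $T_\kb$ (finite derivations not yet violating the deadline condition), accepts when a level is empty, and implicitly invokes K\"onig's lemma for the converse; your $\forall\exists$ formulation for \TRRBF{\forall} is just the standard equivalent of the paper's oracle-based argument. Both hinge on the same key point you flag, namely that $(\dagger)$ is finitely verifiable because each deadline $i+n_i$ is finite.
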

\begin{proof}
To show that \TKRBF{\forall} is in \re, we define a semi-decision procedure, which executes the following instructions on a given input KB $\kb = \langle \aruleset, \db\rangle$:
\begin{enumerate}
\item Initialise the set $\mathcal{P}_1$ of lists of facts that contains the (unary) list $\db$, and a counter $i := 2$.
\item Compute the set $\mathcal{C}_i$ of all chase derivations of length $i$ of \kb that can be obtained by extending a chase derivation in $\mathcal{P}_{i-1}$ with one fact set.
Intuitively, $\mathcal{C}_i$ includes all lists of length $i$ that can be extended into breadth-first chase sequences for \kb.
\item Compute the maximal subset $\mathcal{P}_i$ of $\mathcal{C}_i$ that does not contain a chase derivation $F_1, \ldots, F_i \in \mathcal{C}_i$ if there is some $1 \leq k \leq i$ and some $\aruleset$-trigger $\lambda$ such that $\lambda$ is loaded for $F_k$, the trigger $\lambda$ is not obsolete for $F_i$, and $i - k$ is larger than the number of $\aruleset$-triggers that are loaded and not obsolete for $F_k$.
Intuitively, $\mathcal{P}_i$ filters out prefixes in $\mathcal{C}_i$ that already violate $(\dagger)$.
\item If $\mathcal{P}_i$ is empty, \emph{accept}.
Otherwise, increment $i := i + 1$ and go to 2.
\end{enumerate}
If the procedure accepts, then $\mathcal{P}_i$ is empty for some $i$ and all breadth-first chase sequences of \kb are of length at most $i-1$.
If the procedure loops, then there is an infinite chase derivation $F_0, F_1, \ldots$ of \kb such that $F_0, \ldots, F_{i-1} \in \mathcal{P}_i$ for every $i \geq 1$, which is a breadth-first derivation for \kb.

The class \TRRBF{\forall} is in $\Pi_2^0$ because we can semi-decide if a rule set $\aruleset$ is not in $\TRRBF{\forall}$ using an oracle that solves \TKRBF{\forall}.
We simply enumerate every database \db, use the oracle to check if the KB $\langle \aruleset, \db \rangle$ is in $\TKRBF{\forall}$, and \emph{accept} if this is not the case.
\end{proof}

The previous result holds because the condition $(\dagger)$ is \emph{finitely verifiable}; that is, every infinite chase derivation that does not satisfy this condition has a finite prefix that witnesses this violation.
Note that fairness does not have this property since any finite prefix of any chase derivation can be extended into a (fair) chase sequence.
In fact, we can readily show a version of \cref{theorem:termination-b1r} for any other alternative condition if it is finitely verifiable.
For an example of one such trigger application strategy, consider the one from \cite{DBLP:conf/cade/UrbaniKJDC18}, which is a bit more complex to define than $(\dagger)$ but nevertheless results in a very efficient implementation of the restricted chase.


\section{Open Problems}
\label{section:conclusions}

After settling the general case regarding restricted chase termination and proposing an alternative fairness condition, there are still open challenges.
Namely, what is the undecidability status of the classes \TKR{\forall} and \TRR{\forall} if we only consider single-head rules or only guarded (multi-head) rules?
Note that,
with guarded rules, it is not obvious how to simulate a Turing machine.
For single-head rules, we cannot implement the emergency brake and thus our proofs do not apply. 
Moreover, if we only consider single-head rule sets, we can ignore fairness when determining restricted chase termination because of the ``fairness theorem'' \cite{DBLP:journals/siamcomp/GogaczMP23}: a single-head KB admits an infinite (possibly unfair) chase derivation if and only if admits an infinite (fair) chase sequence.
We think that answers to these problems will help to develop a better understanding for the (restricted) chase overall.

\begin{acks}
  On TU Dresden side, 
  this work is partly supported by 
  Deutsche Forschungsgemeinschaft (DFG, German Research Foundation) in project number 389792660 (TRR 248, \href{https://www.perspicuous-computing.science/}{Center for Perspicuous Systems}),
  by the Bundesministerium für Bildung und Forschung (BMBF, Federal Ministry of Education and Research) in the
  \href{https://www.scads.de/}{Center for Scalable Data Analytics and Artificial Intelligence} (project SCADS25B, \href{https://scads.ai/}{ScaDS.AI}),
  and by Bundesministerium für Bildung und Forschung (BMBF, Federal Ministry of Education and Research) and Deutscher Akademischer Austauschdienst (DAAD, German Academic Exchange Service) in project 57616814 (\href{https://www.secai.org/}{SECAI}, \href{https://www.secai.org/}{School of Embedded and Composite AI}).
\end{acks}

\bibliographystyle{ACM-Reference-Format}
\bibliography{bibliography}

\appendix

\end{document}